\documentclass{article}

\usepackage{amsmath,amsfonts,bm}









\def\eqref#1{equation~\ref{#1}}









\def\1{\bm{1}}










\DeclareMathAlphabet{\mathsfit}{\encodingdefault}{\sfdefault}{m}{sl}
\SetMathAlphabet{\mathsfit}{bold}{\encodingdefault}{\sfdefault}{bx}{n}











\newcommand{\R}{\mathbb{R}}



\DeclareMathOperator*{\argmax}{arg\,max}
\DeclareMathOperator*{\argmin}{arg\,min}

\usepackage{microtype}
\usepackage{graphicx}
\usepackage{subfigure}
\usepackage{booktabs} 
\usepackage{bbm}
\usepackage{natbib}

\usepackage{hyperref}


\usepackage{PRIMEarxiv}


\usepackage{amsmath}
\usepackage{amssymb}
\usepackage{mathtools}
\usepackage{amsthm}

\usepackage[capitalize,noabbrev]{cleveref}

\theoremstyle{plain}
\newtheorem{theorem}{Theorem}[section]
\newtheorem{proposition}[theorem]{Proposition}
\newtheorem{lemma}[theorem]{Lemma}
\newtheorem{corollary}[theorem]{Corollary}
\theoremstyle{definition}
\newtheorem{definition}[theorem]{Definition}

\theoremstyle{remark}

\newcommand{\calA}{{\cal A}}
\newcommand{\calB}{{\cal B}}

\newcommand{\calD}{{\cal D}}

\newcommand{\calF}{{\cal F}}
\newcommand{\calG}{{\cal G}}

\newcommand{\calJ}{{\cal J}}

\newcommand{\calL}{{\cal L}}
\newcommand{\calM}{{\cal M}}

\newcommand{\calP}{{\cal P}}

\newcommand{\calR}{{\cal R}}

\newcommand{\calV}{{\cal V}}

\newcommand{\calX}{{\cal X}}
\newcommand{\calY}{{\cal Y}}

\newcommand{\wh}{\widehat}

\newcommand{\g}{{\boldsymbol g}}

\usepackage[textsize=tiny]{todonotes}
\fancyhead[CO]{Learning to Partially Defer for Sequences}


\title{Learning to Partially Defer for Sequences}

\author{
  Sahana Rayan \\
  Department of Statistics \\
  University of Michigan \\
  Ann Arbor, MI 48104 \\
  \texttt{srayan@umich.edu}
  \And
  Ambuj Tewari \\
  Department of Statistics \\
  University of Michigan \\
  Ann Arbor, MI 48104 \\
  \texttt{tewaria@umich.edu}
}

\begin{document}

\maketitle

\begin{abstract}
In the Learning to Defer (L2D) framework, a prediction model can either make a prediction or defer it to an expert, as determined by a rejector. Current L2D methods train the rejector to decide whether to reject the {\em entire prediction}, which is not desirable when the model predicts long sequences. We present an L2D setting for sequence outputs where the system can defer \textit{specific outputs} of the whole model prediction to an expert in an effort to interleave the expert and machine throughout the prediction. We propose two types of model-based post-hoc rejectors for pre-trained predictors: a token-level rejector, which defers specific token predictions to experts with next token prediction capabilities, and a one-time rejector for experts without such abilities, which defers the remaining sequence from a specific point onward. In the experiments, we also empirically demonstrate that such granular deferrals achieve better cost-accuracy tradeoffs than whole deferrals on Traveling salesman solvers, News summarization, and Weather prediction.

\end{abstract}

\keywords{learning to defer, sequences, trustworthy machine learning}

\section{Introduction}
Trustworthiness of AI is under scrutiny with its increased deployment in highly consequential areas like healthcare \citep{ker2017deep,courtiol2019deep,asan2020artificial} and criminal justice \citep{dressel2018accuracy,rigano2019using,alikhademi2022review}. Hybrid intelligent systems \citep{kamar2016directions,dellermann2019hybrid,akata2020research, maadi2021review} approach this problem through collaborations between either humans and machines or two machines to improve confidence in the system. Specifically, Learning to Defer (L2D) \citep{madras2018predict} accommodates these teams by allowing the model to defer to an expert when it is uncertain about a prediction task with the objective of maximizing overall system accuracy while minimizing deferral costs.

Existing L2D methods focus on multiclass \citep{mozannar2020consistent,cao2022generalizing,verma2022calibrated, mao2024predictor,mao2024theoretically} and scalar regression \citep{cheng2024regression,mao2024regression} problems. However, these methods only support complete deferral of predictions, resulting in costs that scale with the output size, when applied to large \textit{sequence outputs}  \citep{narasimhan2024faster}. This is inefficient if only parts of the model's prediction are inaccurate. The prevalence of sequences in areas such as drug discovery \citep{wang2018computational,jumper2021highly,dauparas2022robust,nijkamp2023progen2} and language modeling \citep{kenton2019bert,brown2020language, raffel2020exploring} calls for a cost-effective extension of the L2D framework to handle sequential outputs. 

To this end, we propose a fine-grained approach to deferral where the system has the flexibility to defer {\em portions} of the predicted sequence. For example, in part of speech tagging, this mechanism would only require the expert to tag a few uncertain sentences in a paragraph rather than the whole paragraph, reducing the time cost which increases with the number of words to be tagged. This can, therefore, achieve a more favorable cost-accuracy tradeoff, which can further improve as the deferrals become more precise - for example, at the word-level. Additionally, parts of the prediction task completed by the predictor can also function as contextual clues for the language expert to fill in the deferred parts, further minimizing deferral costs.

While partial deferral is therefore desirable, the granularity of such deferrals depends on the size of the sequence segments the expert can predict at a time. For instance, if the expert is a next-token predictor, it can complete spans as small as a single token—\emph{token-level} deferral. This is especially valuable when the model's predictive quality across all parts of the output are dependent. Specifically, in autoregressive prediction, the uncertainty at earlier prediction points can propagate and degrade later points \citep{bengio2015scheduled, lamb2016professor,schmidt2019generalization}. Thus, to achieve a better cost-accuracy tradeoff, the token-level rejector can orchestrate a dynamic collaboration between the predictor and the expert where the predictor receives an expert prediction on an early substructure to predict the next token before the model veers ``off-course''.


This, however, is only feasible with experts that can make predictions one token at a time. That said, partial deferrals can still be implemented as long as the expert is at least capable of completing the predicted sequence given a partial subsequence from the model. We refer to this type of deferral as \textit{one-time} deferral. For example, Gurobi solvers for Traveling salesman problems lack the ability to iteratively select the next city in the tour, but they can use partial subtours generated by models as constraints to generate the entire tour at a reduced cost. In this case, the one-time rejector would select a point in the sequence from which the expert completes the sequence. 

In this paper, we make the following contributions:
\begin{itemize}
    \item We present a novel L2D framework where the rejector interweaves the expert's and machine's predictions to improve cost-accuracy tradeoffs. We divide this setting into token-level and one-time deferrals based on the granularity of the deferrals, thereby accommodating various types of experts.
    \item We propose convex surrogate losses that offer Bayes consistency guarantees leading to implementable model-based post hoc rejectors for pre-trained predictors for each type of deferral. We further analyze the surrogates theoretically through generalization bounds.
    \item We empirically demonstrate improved cost-accuracy tradeoffs when using model-based rejectors for either deferrals against whole sequence rejectors with Traveling salesman solvers, News summarization, and Weather prediction.
\end{itemize}

\section{Background} \label{sec:background}

\subsection{Learning to Defer} \label{sec:l2d}
Deferrals, also known as rejections or abstentions, are classified into two types -
\textit{confidence}-based \citep{chow1957optimum,bartlett2008classification,yuan2010classification,ramaswamy2018consistent} and \textit{model}-based \citep{cortes2016learning,madras2018predict, mozannar2020consistent,verma2022calibrated,cao2022generalizing,mao2024predictor,cheng2024regression,mao2024regression}.
\emph{Confidence}-based methods learn a predictor that outputs confidences; a prediction is rejected whenever its confidence falls below a cost-dependent threshold.
However, \citet{madras2018predict} and \citet{cortes2016learning} characterized the suboptimality of these methods and advocated for the rejector to be a model trained either simultaneously with the predictor or post-hoc: this is the \textit{model}-based approach. This system seeks to minimize the following loss function:
\begin{equation} \label{eq:l2d}
    \calL_{\text{L2D}}\left(h, r, e, x, y\right)=l(y, h(x))\1_{r(x) < 0} + c(x, y, e)\bm{1}_{r(x) \geq 0}
\end{equation}
where $h$ is the predictor, $r$ is the rejector, $e$ is the expert, $l(y, h(x))$ is the model loss, and $c(x, y, e)$ is the cost incurred when deferring to an expert. 

Although our setting assumes access to a single expert, one-time deferral closely resembles L2D with multiple experts \citep{verma2023learning,mao2024regression}. We formalize this similarity and its implications in \Cref{sec:largeL}. For $E$ experts, L2D minimizes:
\begin{equation} \label{eq:l2dmultiple}
            \calL_{\text{L2DMulti}}\left(h, r, x, y, \{e_j\}_{j = 1}^{E} \right)=l(y, h(x))\1_{r(x) = 0} + \sum_{j = 1}^{E}c_j(x, y, e_j)\1_{r(x) = j}
\end{equation}
where $c_j(x, y, e_j)$ is the cost of querying the $j^{\text{th}}$ expert, $e_j$. The rejector must decide which expert to defer to.

\subsection{Consistent Losses} \label{sec:consistent}

Direct minimization of \Cref{eq:l2d,eq:l2dmultiple} is challenging due to the non-convexity and discontinuities. This can be addressed by optimizing a convex surrogate loss function which, upon minimization, will lead to the same outcome as minimizing the original loss. This property is formally called \textit{Bayes Consistency}.

\begin{definition}[Consistency]\label{th:defconsistency}
    Let the risk of a hypothesis $f$ with respect to a loss $l$ be $\calR_{l}(f) = \mathbb{E}_{(x, y) \sim \calP_{\calX, \calY}}[l(f(x), y)]$. Let the optimal risk with respect to $l$ out of all hypotheses in $\calF$ be $\calR_{l}^*(\calF) = \inf_{f \in \calF} \calR_{l}(f)$. A surrogate loss function, $\phi(\cdot)$, is said to be $\calF$-consistent with respect to $l$ if for any $f_n \in \calF$, the following property holds true:
    \begin{align*}
        \lim_{n \to \infty} \calR_{\phi}(f_n) - \calR_{\phi}^*(\calF) = 0  \Rightarrow  \lim_{n \to \infty} \calR_{l}(f_n) - \calR_{l}^*(\calF) = 0
    \end{align*}
\end{definition}

When $\calF$ is the class of all measurable functions, a surrogate $\phi$ is Bayes consistent for a target loss $l$ if there exists a nondecreasing function $\Gamma : \mathbb{R}\rightarrow\mathbb{R}$ continuous at $0$ with $\Gamma(0)=0$, such that for all $f$,
$$\calR_{l}(f) - \calR_{l}^* \leq \Gamma \left(\calR_{\phi}(f) - \calR_{\phi}^*\right)$$ 
where  $\calR_{l}^* =  \inf_{f} \calR_{l}(f)$ is the optimal risk over all measurable functions.
We introduce surrogates for both token-level and one-time deferral (similar to \Cref{eq:l2d,eq:l2dmultiple}) and prove they satisfy the above inequality, thereby establishing Bayes consistency.

\section{Token-level partial deferral} \label{sec:tokendeferral}

Let $\mathcal{Y}\subseteq \bigcup_{\ell=1}^{L}\mathcal{V}^{\ell}$ denote the space of sequences over label set $\mathcal{V}$ with maximum length \(L\). For $y\in\mathcal{Y}$, the $j^{\text{th}}$ token is \(y_j\in\mathcal{V}\). $\calV$ can be a finite vocabulary in the case of text sequences and be $\R$ when the output is a scalar time series. Let $\calX$ be the feature space. 
In sequence generation, $\calX \subseteq \R^d$ would be a $d$-dimensional feature space, whereas $\calX = \calV'^{L'}$ would be a sequence space in sequence-to-sequence learning. Let $\calP_{X, Y}$ denote the data-generating distribution over $\calX \times \calY$. 

Suppose we have a predictor $h$ and a stronger expert $e$. $e$ takes in $x \in \calX$ and the leftward context $\wh y_{<j} = (\wh y_1, \cdots, \wh y_{j - 1}) \in \calV^{j - 1}$ to predict $y_j$; repeated calls can auto-regressively complete the sequence. $e$ can, for instance, be a human or a large autoregressive model. However, querying $e$ is more expensive than calls to $h$. We, therefore, wish to design an L2D system where experts can complete uncertain tokens in the model's prediction, facilitating more fine-grained deferrals -- referred to as \textit{token-level} deferrals -- yielding a more favorable cost-quality tradeoff than deferring the entire sequence.

If $h$ is autoregressive, token-level deferrals allows $h$ to leverage the expert-supplied tokens when generating subsequent tokens, enabling a seamless and adaptive collaboration. This mitigates uncertainty propagation, reduces the number of expert queries, and improves cost–effectiveness. While this paradigm can apply to all sequence predictors, we will specifically focus on a dynamic L2D system with an autoregressive predictor.

In this setting, at step $j$, the rejector $r_j$ takes in $(x, \wh y_{<j})$ where $\wh y_{<j}$ may mix predictor and expert tokens. When $r_j(x, \wh y_{<j}) \geq 0$, it rejects the $j^{\text{th}}$ label prediction from $h$ and receives an expert prediction $\wh y_{j}^e$. 
Otherwise, the model will keep the prediction from $h$, i.e. $\wh y_{j}^h = h(x, \wh y_{<j})$. For a fixed $h$ and a specific instance of $(x, \wh y_{<j}, y)$, $r_j$ is trained to minimize:
\begin{equation} \label{eq:tokendeferral}
         \calL_j\left(h, r_j, x, \wh y_{<j}, y\right) 
         = l\left(y, \wh y_{j}^h \right)\1_{r_j\left(x, \wh y_{<j}\right) < 0} + c_j(x, \wh y_{<j}, y)\1_{r_j\left(x, \wh y_{<j}\right) \geq 0}
\end{equation}
where $l(y, \wh y_{j}^h)$ is the loss from using $\wh y_{j}^h$ and $c_j(x, \wh y_{<j}, y)$ is the deferral cost. $l(y, \wh y_{j}^h)$ can be the mean squared error defined by $(y_j - \wh y_{j}^h)^2$ or a $0$-$1$ loss between the true $j^{\text{th}}$ token and $\wh y_{j}^h$ in text sequences. Generally, $l\left(y, \wh y_{j}^h\right)$ can be viewed as token-level feedback to generate an accurate complete sequence. $c_j(x, \wh y_{<j}, y)$ can be a constant $c_j$ to reflect expert query costs but it can also be $l(y, \wh y_{j}^e)$ to account for the expert's errors.

Finally, the predicted $j^{\text{th}}$ label $ \wh y_j = \wh y_{j}^h\1_{r_j\left(x, \wh y_{<j}\right) < 0} + \wh y_{j}^e\1_{r_j\left(x, \wh y_{<j}\right) \geq 0}$ is appended to $\wh y_{<j}$ to serve as input for $(j + 1)^{\text{th}}$ token prediction, and this continues until $L$ labels are generated.
The total loss for the sequence is $\calL^{\text{Token}}\left(h, r, x, y\right) = \frac{1}{L}\sum_{j = 1}^{L} \calL_j\left(h, r_j, x, \wh y_{<j}, y\right)$ where $r = \begin{bmatrix}
    r_1 & \dots & r_L
\end{bmatrix}^T$. 
The $\calL^{\text{Token}}$-risk is denoted by $\calR_{\text{Token}}(h, r) = \mathop{\mathbb{E}}_{(x, y) \sim \calP_{X, Y}}[\calL^{\text{Token}}(h, r, x, y)]$.

For analysis, we assume teacher-forced training \citep{williams1989learning,lamb2016professor}: at $j^{\text{th}}$ step, $\mathcal{L}_j$ is evaluated on the output of $r_j$, but we roll out the leftward context for step $j{+}1$ using the oracle rejection decision $\1_{l(y,\widehat y^h_j)\ge c_j(x,\widehat y_{<j},y)}$. This decouples the training distribution of $\widehat y_{<j}$ from the router’s predictions; the context depends only on $(x,y)$ via the predictor and expert outputs. In experiments, we don't limit ourselves to teacher forcing training due to the risk of exposure bias; see \Cref{sec:tokenwiseablation}.

\subsection{Surrogate Loss} \label{sec:modelbased}

As described in \Cref{sec:consistent}, minimizing the deferral loss directly is computationally difficult, so identifying an optimal rejector requires optimizing a convex surrogate loss that is Bayes consistent with respect to $\calL^{\text{Token}}$. To this end, we propose the following surrogate loss function:
$\calL^{\phi}(h, r, x, y) = \frac{1}{L} \sum_{j = 1}^{L} \calL^{\phi}_{j}(h, r_j, x, \wh y_{<j}, y)$ where 
\begin{equation} \label{eq:tokenwisesurrogate}
        \calL^{\phi}_{j}(h, r_j, x, \wh y_{<j}, y) =  l\left(y, \wh y_{j}^h \right)\phi(r_j(x, \wh y_{<j}))+ c_j(x, \wh y_{<j}, y)\phi(-r_j(x, \wh y_{<j}))
\end{equation}
and $\phi$ is a convex binary surrogate loss. $\phi$ could, for example, be the logistic loss, i.e. $\phi(z) = \log\left(1 + \exp\{-z\}\right)$, or the square loss, i.e. $\phi(z) = (1 - z)^2$. $\calL^{\phi}$-risk is characterized by $\calR_{\phi}(h, r) = \mathop{\mathbb{E}}_{(x, y) \sim \calP_{X, Y}}[\calL^{\phi}(h, r, x, y)]$. 

The optimization of $\calL^{\phi}(h, r, x, y)$ is feasible due its convexity. Additionally, it upper bounds $\calL^{\text{Token}}$ upto a scaling factor, providing alignment between reducing the surrogate loss and reducing the desired loss function. The proof of these properties can be found in \Cref{sec:proofconvexupper}.

To show Bayes consistency of the surrogate, we relate the excess risk using $\calL^{\text{Token}}$ and the excess risk using $\calL^{\phi}$ in \Cref{th:upbound}. The proof employs a specific type of distribution on $\calX \times \{-1, 1\}$ with mass on functions of the conditional expectations of $l\left(y, \wh y_{j}^h \right)$ and $c_j(x, \wh y_{<j}, y)$ to relate the two inequalities - a strategy inspired by \citet{mao2024predictor}. The complete proof can be found in \Cref{sec:proofupbound}.

\begin{lemma} \label{th:upbound}
    Let $0 < \bar{c} \leq c_j(x, \wh y_{<j}, y) \leq \bar{C} < \infty$ for all $j$, $0 \leq l(y, \wh y^h_j) \leq \bar{l} < \infty$, and $\phi$ be a binary surrogate loss that satisfies the following inequality for any distribution over $\calX \times \{-1, 1\}$ and any measurable function $f$:
    \begin{align*}
        \calR_{\text{binary }0-1}(f) - \calR^{*}_{\text{binary }0-1}\leq \Gamma\left(\calR_{\text{binary }\phi}(f) - \calR^{ *}_{\text{binary }\phi}\right)
    \end{align*}
    where $\calR_{\text{binary }0-1}$ is the binary 0-1 risk and $\Gamma : \R^+ \to \R^+$ is a non-decreasing concave function.
    Then, for any measurable $r$ and any $h$, the following inequality holds for any distribution over $\calX \times \calY$:
    \begin{align*}
        \calR_{\text{Token}}\left(r, h\right) - \calR_{\text{Token}}^*(h) \leq \Tilde{\Gamma}\left(\calR_{\phi}\left(r, h\right) - \calR^*_{\phi}(h)\right)
    \end{align*}
    where $\Tilde{\Gamma}(z) = \left(\bar{l} + \Bar{C}\right)\Gamma\left(\frac{z}{\bar{c}}\right)$, $\calR^*_{\text{Token}}(h) := \inf_{r} \calR_{\text{Token}}\left(r, h\right)$, and $\calR^*_{\phi}(h) := \inf_{r} \calR_{\phi}\left(r, h\right)$
\end{lemma}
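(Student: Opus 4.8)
The plan is to reduce the token-level problem, conditioned on each step $j$ and on each realized rejector input $(x,\wh y_{<j})$, to a cost-sensitive binary classification problem, and then invoke the assumed binary consistency bound pointwise. The teacher-forcing assumption is exactly what makes this clean: because the leftward context $\wh y_{<j}$ is rolled out with the oracle decision, its distribution does not depend on $r$, so the $\calL^{\text{Token}}$-risk separates across steps and $\inf_r$ decomposes into independent pointwise minimizations of each $r_j$. Writing $A_j(x,\wh y_{<j})=\E[l(y,\wh y_j^h)\mid x,\wh y_{<j}]$ and $B_j(x,\wh y_{<j})=\E[c_j(x,\wh y_{<j},y)\mid x,\wh y_{<j}]$, I would first record that $\calR_{\text{Token}}(r,h)=\frac1L\sum_j \E[A_j\1_{r_j<0}+B_j\1_{r_j\ge 0}]$ and $\calR^*_{\text{Token}}(h)=\frac1L\sum_j \E[\min(A_j,B_j)]$, together with the analogous identities for $\calR_\phi$ and $\calR^*_\phi$ obtained from the same decoupling and pointwise minimization. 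Note $A_j+B_j\ge \bar c>0$, so the ratios below are well defined.

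The key algebraic step is the reduction itself. For fixed $j$ and input $u=(x,\wh y_{<j})$, set $\eta=A_j/(A_j+B_j)$. Then the conditional target risk equals $(A_j+B_j)\,[\eta\1_{r_j<0}+(1-\eta)\1_{r_j\ge 0}]$ and the conditional surrogate risk equals $(A_j+B_j)\,[\eta\phi(r_j)+(1-\eta)\phi(-r_j)]$; that is, both are $(A_j+B_j)$ times the binary $0$-$1$ and binary $\phi$ conditional risks at a point with $\Pr[+1]=\eta$. To turn the assumed distributional bound into a pointwise calibration inequality, I would instantiate it on the single-atom distribution over $\calX\times\{-1,1\}$ with label probability $\eta$, taking $f$ to be the scalar value $r_j(u)$ — this is the ``specific distribution'' device. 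On a point mass the excess risks coincide with the conditional excess risks, so the hypothesis yields $\Delta_{0\text{-}1}(r_j(u),\eta)\le \Gamma\!\big(\Delta_\phi(r_j(u),\eta)\big)$ for every $\eta$ and every scalar, where $\Delta_{0\text{-}1}$ and $\Delta_\phi$ denote the binary $0$-$1$ and $\phi$ conditional excess risks at probability $\eta$.

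I would then multiply this pointwise bound by $(A_j+B_j)$ and use $\bar c\le A_j+B_j\le \bar l+\bar C$ (the lower bound coming from $B_j\ge\bar c$) together with the monotonicity of $\Gamma$ to bound the conditional excess target risk at $u$ by $(\bar l+\bar C)\,\Gamma\!\big(S_j(u)/\bar c\big)$, where $S_j(u)=(A_j+B_j)\Delta_\phi$ is the conditional surrogate excess risk. Taking the expectation over $u$, averaging over $j$, and applying Jensen's inequality twice — once to move $\Gamma$ through $\E_u$ and once through the uniform average over $j$, both justified by concavity of $\Gamma$ — pulls a single $\Gamma$ outside. Identifying $\frac1L\sum_j\E[S_j]=\calR_\phi(r,h)-\calR^*_\phi(h)$ then produces exactly $\Tilde\Gamma(z)=(\bar l+\bar C)\,\Gamma(z/\bar c)$.

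I expect the main obstacle to be the constant bookkeeping rather than any deep difficulty: one must factor $(A_j+B_j)$ out of both conditional risks with the \emph{same} $\eta$, and then control the nested expression $(A_j+B_j)\,\Gamma\!\big(S_j/(A_j+B_j)\big)$ — it is precisely the interplay between the upper bound $\bar l+\bar C$ on the prefactor and the lower bound $\bar c$ inside $\Gamma$ (via monotonicity) that yields the stated $\Tilde\Gamma$. A secondary point that cannot be skipped is that the concavity of $\Gamma$ is used essentially: the double Jensen step over $\E_u$ and over the average across steps is what permits a single $\Gamma$ to survive on the right-hand side, so the concavity assumption on $\Gamma$ is genuinely needed and not merely cosmetic.
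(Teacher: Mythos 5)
Your proposal is correct and follows essentially the same route as the paper's proof: the same $A_j,B_j$ decomposition, the same device of instantiating the assumed binary bound on an artificially constructed distribution with label probability $A_j/(A_j+B_j)$, the same use of $\bar c \le A_j+B_j\le \bar l+\bar C$ to produce $\Tilde\Gamma$, and Jensen via concavity of $\Gamma$ at the end. The only (immaterial) difference is that you invoke the hypothesis on single-atom distributions separately for each step $j$ and context $u$ and then absorb the average over $j$ with an extra application of Jensen, whereas the paper applies it once per $x$ to a distribution uniform over the $L$ contexts $\{(x,\wh y_{<j})\}_{j=1}^{L}$; your explicit appeal to teacher forcing to justify that $\inf_r$ decomposes pointwise across steps is a point the paper leaves implicit.
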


Taking the limit on both sides of the inequality result of \Cref{th:upbound} proves consistency of the surrogate loss, formally stated in \Cref{th:consistency}. The proof of \Cref{th:consistency} hinges on the fact that a classification-calibrated surrogate loss satisfies the condition presented in \Cref{th:upbound}. A loss $\phi$ is considered binary classification calibrated if the minimizer of the pointwise $\phi-$risk or $\mathbb{E}_{y \sim \calP_{\calY \mid \calX = x}}[\phi(yf(x))]$ always has the same sign as $\mathbb{P}[Y = 1\mid X = x] - \frac{1}{2}$. We defer the full proof of \Cref{th:consistency} and the formal definition of classification calibrated loss to \Cref{sec:proofconsistency}.

\begin{theorem}[Consistency]\label{th:consistency}
    If $0 < \bar{c} \leq c_j(x, \wh y_{<j}, y) \leq \bar{C} < \infty$ for all $j$, $0 \leq l(y, \wh y^h_j) \leq \bar{l} < \infty$, and $\phi$ be binary classification calibrated, then, for a fixed $h$, $\calL^{\phi}$ is a Bayes consistent surrogate for $\calL^{\text{Token}}$.
\end{theorem}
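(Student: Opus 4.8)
The plan is to derive the theorem directly from \Cref{th:upbound} by showing that binary classification calibration supplies precisely the hypothesis that lemma requires, and then reading off Bayes consistency from the resulting excess-risk inequality. Concretely, \Cref{th:upbound} already converts a binary calibration inequality for $\phi$ into the sequence-level bound $\calR_{\text{Token}}(r,h) - \calR_{\text{Token}}^*(h) \leq \Tilde{\Gamma}(\calR_\phi(r,h) - \calR_\phi^*(h))$ with $\Tilde{\Gamma}(z) = (\bar{l}+\bar{C})\Gamma(z/\bar{c})$. So once I exhibit an admissible $\Gamma$ and confirm that $\Tilde{\Gamma}$ is nondecreasing and continuous at $0$ with $\Tilde{\Gamma}(0)=0$, this $\Tilde{\Gamma}$ witnesses the Bayes-consistency inequality from the characterization following \Cref{th:defconsistency}, and the limit form of consistency is immediate.

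The crux is to translate ``classification calibrated'' into the quantitative inequality assumed by \Cref{th:upbound}. Here I would invoke the $\psi$-transform theory of Bartlett, Jordan and McAuliffe: for a binary margin loss $\phi$, classification calibration is equivalent to the existence of a convex, nondecreasing, continuous function $\psi$ with $\psi(0)=0$ and $\psi(\theta)>0$ for $\theta>0$ satisfying $\psi(\calR_{\text{binary }0-1}(f) - \calR^{*}_{\text{binary }0-1}) \leq \calR_{\text{binary }\phi}(f) - \calR^{*}_{\text{binary }\phi}$ for every distribution on $\calX\times\{-1,1\}$ and every measurable $f$. Passing to the generalized inverse of $\psi$, which exists since $\psi$ is convex and nondecreasing with $\psi(0)=0$, yields a nondecreasing concave function $\Gamma := \psi^{-1}$ with $\Gamma(0)=0$ and continuity at $0$, for which $\calR_{\text{binary }0-1}(f) - \calR^{*}_{\text{binary }0-1} \leq \Gamma(\calR_{\text{binary }\phi}(f) - \calR^{*}_{\text{binary }\phi})$. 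This is exactly the premise of \Cref{th:upbound}.

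With this $\Gamma$ in hand, \Cref{th:upbound} gives $\calR_{\text{Token}}(r,h) - \calR_{\text{Token}}^*(h) \leq \Tilde{\Gamma}(\calR_\phi(r,h) - \calR_\phi^*(h))$ under the stated bounds $0<\bar{c}\leq c_j \leq \bar{C}<\infty$ and $0\leq l \leq \bar{l}$. Since $\bar{l}+\bar{C}>0$, $\bar{c}>0$, and $\Gamma$ is nondecreasing with $\Gamma(0)=0$ and continuous at $0$, the composite $\Tilde{\Gamma}$ inherits these properties, so it directly serves as the function $\Gamma$ in the Bayes-consistency characterization. For the limit statement of \Cref{th:defconsistency}, take any $r_n$ with $\calR_\phi(r_n,h)-\calR_\phi^*(h)\to 0$; continuity of $\Tilde{\Gamma}$ at $0$ forces the right-hand side to $0$, whence $\calR_{\text{Token}}(r_n,h)-\calR_{\text{Token}}^*(h)\to 0$, with $\calF$ the class of measurable rejectors for fixed $h$.

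I expect the main obstacle to be the middle step: rigorously producing the inverse-type $\Gamma$ that is simultaneously nondecreasing, concave, and valid for \emph{every} conditional distribution. The $\psi$-transform is convex and nondecreasing but need not be strictly increasing, so the inversion must proceed through a generalized inverse or a concave majorant, and one must verify that a single $\Gamma$ works uniformly across the pointwise binary problems induced at each token position $j$. The teacher-forcing assumption is what makes these per-step problems well-defined and independent of the router's own decisions, so that the pointwise reduction underlying \Cref{th:upbound} is legitimate; the remaining monotonicity and continuity bookkeeping for $\Tilde{\Gamma}$ and the limiting argument are routine.
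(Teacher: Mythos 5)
Your proposal is correct and follows essentially the same route as the paper: invoke the Bartlett--Jordan--McAuliffe $\psi$-transform to convert classification calibration into the concave comparison inequality $\calR_{\text{binary }0-1}(f) - \calR^{*}_{\text{binary }0-1} \leq \psi^{-1}(\calR_{\text{binary }\phi}(f) - \calR^{*}_{\text{binary }\phi})$, feed that into \Cref{th:upbound}, and pass to the limit using continuity of $\psi^{-1}$ at $0$ and $\psi^{-1}(0)=0$. Your added care about the generalized inverse and the uniformity of $\Gamma$ across the per-token binary problems is a reasonable refinement of details the paper's proof leaves implicit, but it does not change the argument.
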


\subsection{Generalization Bounds} \label{sec:genbounds}

While Bayes consistency is desirable, the property only holds in the limit of infinite data and does not possess finite sample guarantees. In practice, finding a risk minimizer over the entire class of measurable functions is extremely challenging, particularly because $\calP_{X, Y}$ is usually unknown.
With access to a rejector training data set $\calD$ with $n$ samples, we can only learn the optimal $\wh r_{n} \in \argmin_{r \in \calF} \wh \calR_{\phi}(r) $ from a restricted hypothesis class $\calF$ by minimizing the empirical risk with respect to $\calL^{\phi}$ - $\wh \calR_{\phi}(r) = \frac{1}{n}\sum_{(x^i, y^i) \in \calD} \calL^{\phi}(h, r, x^i, y^i)$. We now present an upper bound for the estimation error to characterize the quality of $\wh r_{n}$ with respect to $\calF$. Please refer to \Cref{sec:proofgenbounds} for the full proof.

\begin{theorem} \label{th:genbounds}
    Suppose $r \in \calF$ which is composed of $\{\calF_j\}_{j = 1}^{L}$ such that $r_j \in \calF_j$, $c_j(x) \leq \bar{C}$, and $l(y, \wh y^h_j) \leq \bar{l}$. If $\phi$ is $\rho$-lipschitz continuous, then with probability $1 - \delta$, the following upper bound holds for the empirical risk minimizer $\wh r_{n}$ with respect to $\calL_{\phi}$:
    \begin{equation*}
            \calR_{\phi}\left(\wh r_{n}\right) - \calR^*_{\phi}\left(\calF\right)\leq \frac{4\rho\sqrt{2\left(\bar{C}^2 + \bar{l}^2\right)}}{L} \sum_{j = 1}^{L} \mathfrak{R}_n\left(\calF_j\right) + (\bar{C} + \bar{l})\sqrt{\frac{2\log\left(\frac{2}{\delta}\right)}{n}}
    \end{equation*}
    where $\calR^*_{\phi}\left(\calF\right) = \inf_{r \in \calF} \calR_{\phi}\left(r\right)$ and $\mathfrak{R}_n\left(\calF_j\right)$ is the Rademacher Complexity of $\calF_j$
\end{theorem}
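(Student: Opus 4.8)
The plan is to follow the standard route for an empirical-risk-minimizer estimation bound via Rademacher complexity, the only nonroutine ingredient being the reduction of the complexity of the composite surrogate-loss class to the per-token complexities $\mathfrak{R}_n(\calF_j)$. First I would fix a minimizer $r^* \in \argmin_{r \in \calF} \calR_\phi(r)$ and use the ERM optimality of $\wh r_n$ to write the decomposition $\calR_\phi(\wh r_n) - \calR^*_\phi(\calF) \le [\calR_\phi(\wh r_n) - \wh\calR_\phi(\wh r_n)] + [\wh\calR_\phi(r^*) - \calR_\phi(r^*)]$, since the empirical gap $\wh\calR_\phi(\wh r_n) - \wh\calR_\phi(r^*) \le 0$. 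Each bracket is controlled by one of the one-sided uniform deviations $A(\calD) = \sup_{r \in \calF}(\calR_\phi(r) - \wh\calR_\phi(r))$ and $B(\calD) = \sup_{r \in \calF}(\wh\calR_\phi(r) - \calR_\phi(r))$.

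To control $A$ and $B$ I would invoke McDiarmid's bounded-differences inequality. This requires a uniform bound on the per-example surrogate $\calL^\phi(h, r, x, y)$: using $c_j \le \bar{C}$, $l(y, \wh y^h_j) \le \bar{l}$, and boundedness of $\phi$, perturbing one sample changes $\wh\calR_\phi$ by at most $(\bar{C} + \bar{l})/n$, so $M = \bar{C} + \bar{l}$ is the bounded-difference constant. Here it is essential that, under the teacher-forcing assumption, the context $\wh y_{<j}$ is a deterministic function of $(x, y)$, so the $n$ training pairs furnish i.i.d. realizations of $\calL^\phi$ and each $\calF_j$ acts on a well-defined i.i.d. input $(x^i, \wh y^i_{<j})$. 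Applying McDiarmid to $A$ and to $B$ at confidence $\delta/2$ each, bounding $\E[A], \E[B] \le 2\mathfrak{R}_n(\calG)$ by the usual symmetrization argument (with $\calG$ the class of maps $(x,y) \mapsto \calL^\phi(h, r, x, y)$ over $r \in \calF$), and taking a union bound yields, with probability $1 - \delta$, that $\calR_\phi(\wh r_n) - \calR^*_\phi(\calF) \le 4\,\mathfrak{R}_n(\calG) + (\bar{C} + \bar{l})\sqrt{2\log(2/\delta)/n}$.

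It remains to bound $\mathfrak{R}_n(\calG)$ by $\frac{1}{L}\sum_j \mathfrak{R}_n(\calF_j)$ up to the stated constant, which is the crux. Because $\calL^\phi = \frac{1}{L}\sum_{j=1}^L \calL^\phi_j$ and the components $r_j$ range over the independent classes $\calF_j$, subadditivity of Rademacher complexity gives $\mathfrak{R}_n(\calG) \le \frac{1}{L}\sum_{j=1}^L \mathfrak{R}_n(\calG_j)$, where $\calG_j$ is the class of per-token losses $(x,y) \mapsto l(y, \wh y^h_j)\phi(r_j) + c_j\phi(-r_j)$. For each $j$ I would then peel off the loss with the Ledoux–Talagrand contraction lemma, applicable because, for each sample $i$, the scalar map $z \mapsto l^i \phi(z) + c^i_j \phi(-z)$ is Lipschitz. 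The one nonroutine computation is its Lipschitz constant: by Cauchy–Schwarz, $|l^i(\phi(z) - \phi(z')) + c^i_j(\phi(-z) - \phi(-z'))| \le \sqrt{(l^i)^2 + (c^i_j)^2}\,\sqrt{(\phi(z)-\phi(z'))^2 + (\phi(-z)-\phi(-z'))^2} \le \rho\sqrt{2(\bar{C}^2 + \bar{l}^2)}\,|z - z'|$, using the $\rho$-Lipschitzness of $\phi$ together with $l^i \le \bar{l}$ and $c^i_j \le \bar{C}$. Hence $\mathfrak{R}_n(\calG_j) \le \rho\sqrt{2(\bar{C}^2 + \bar{l}^2)}\,\mathfrak{R}_n(\calF_j)$, and substitution into the previous display delivers the claimed bound.

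The main obstacle is this contraction step: one must recognize that the loss decomposes additively over the $L$ tokens so that subadditivity isolates a single $\calF_j$, and then obtain precisely the constant $\rho\sqrt{2(\bar{C}^2 + \bar{l}^2)}$, which arises from handling the weighted pair $(l^i, c^i_j)$ against $(\phi(z), \phi(-z))$ through Cauchy–Schwarz rather than the cruder triangle-inequality constant $(\bar{l} + \bar{C})\rho$. A secondary subtlety is the uniform bound on $\calL^\phi$ needed in the McDiarmid step, which is where boundedness of $\phi$ (beyond mere Lipschitzness) and the teacher-forcing reduction to i.i.d. per-token inputs genuinely enter.
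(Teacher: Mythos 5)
Your proposal is correct and its overall skeleton (ERM risk decomposition into two one-sided uniform deviations, McDiarmid with bounded-difference constant $(\bar{C}+\bar{l})/n$, symmetrization to $4\mathfrak{R}_n$ of the loss class, then a reduction to the per-token classes) coincides with the paper's. Where you genuinely diverge is in the crux step of bounding the Rademacher complexity of the loss class $\calB=\{(x,y)\mapsto \calL^{\phi}(h,r,x,y)\}$ by $\sum_j\mathfrak{R}_n(\calF_j)$. The paper treats $\calL^{\phi}$ as a single map of the $L$-dimensional vector $\left(r_1(x,\wh y_{<1}),\dots,r_L(x,\wh y_{<L})\right)$, shows it is Lipschitz in that vector with constant $\frac{\rho\sqrt{\bar{C}^2+\bar{l}^2}}{L}$, and invokes the vector-valued contraction inequality of \citet{maurer2016vector}, which is the source of its $\sqrt{2}$. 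You instead exploit the additive structure $\calL^{\phi}=\frac{1}{L}\sum_j\calL^{\phi}_j$ first, using (sub)additivity of Rademacher complexity over the independent classes $\calF_j$ to isolate each token, and then apply the ordinary scalar Talagrand contraction lemma (in the form allowing sample-dependent Lipschitz functions $z\mapsto l^i\phi(z)+c^i_j\phi(-z)$), extracting the $\sqrt{2}$ from Cauchy--Schwarz on the pair $(\phi(z),\phi(-z))$. The two routes land on the identical constant $\rho\sqrt{2(\bar{C}^2+\bar{l}^2)}$; yours is arguably more elementary in that it needs only the classical scalar contraction lemma rather than Maurer's vector extension, at the cost of having to notice the additive decoupling across tokens up front. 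You also correctly flag the two implicit hypotheses both arguments share: boundedness of $\phi$ (not just Lipschitzness) for the McDiarmid step, and the teacher-forcing convention making $\wh y_{<j}$ a deterministic function of $(x,y)$ so that $\mathfrak{R}_n(\calF_j)$ is taken over a well-defined i.i.d.\ input distribution.
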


From \Cref{th:genbounds}, the estimation error of using $\wh r_n$ with respect to $\calL^{\phi}$ depends on the Rademacher complexity $\mathfrak{R}_n\left(\calF\right)$ of the hypothesis class. \Cref{th:genbounds} alone does not describe the excess risk with respect to the original token loss $\calL^{\text{Token}}$. Using \Cref{th:upbound}, we express the upper bound of the excess risk in \Cref{th:genboundreal}, and the proof can be found in \Cref{sec:proofgenboundreal}.

\begin{corollary} \label{th:genboundreal}
   Suppose $r \in \calF$ which is composed of $\{\calF_j\}_{j = 1}^L$ such that $r_j \in \calF_j$, $0 < \bar{c} \leq c_j(x) \leq \bar{C}$, and $l(y, \wh y^h_j) \leq \bar{l} < \infty$. If $\phi$ is a $\rho$-lipschitz continuous binary classification-calibrated function, then with probability $1 - \delta$, the following upper bound on the excess risk with respect to $\calL^{\text{Token}}$ holds for the empirical risk minimizer $\wh r_{n}$ with respect to $\calL_{\phi}$ for a fixed $h$:
    \begin{equation*}
        \begin{split}
             \calR_{\text{Token}}\left(\wh r_{n}, h\right) - \calR^*_{\text{Token}}(h) \leq \Tilde{\Gamma}\left(\calB(\calF) + \calA_{\phi}\left(\calF, h\right)\right) 
        \end{split}
    \end{equation*}
    where $\Tilde{\Gamma}$ is a non-decreasing function, $\calB(\calF) = \frac{4\rho\sqrt{2\left(\bar{C}^2 + \bar{l}^2\right)}}{L} \sum_{j = 1}^L \mathfrak{R}_n\left(\calF_j\right) + (\bar{C} + \bar{l})\sqrt{\frac{2\log\left(\frac{2}{\delta}\right)}{n}}$, and $\calA_{\phi}\left(\calF, h\right) = \calR^*_{\phi}\left(\calF, h\right) -  \calR^*_{\phi}(h)$ is the approximation error.
\end{corollary}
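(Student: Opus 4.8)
The plan is to chain the two results already in hand: the calibration inequality of \Cref{th:upbound}, which bounds the excess $\calL^{\text{Token}}$-risk by a non-decreasing function of the excess surrogate risk, and the estimation bound of \Cref{th:genbounds}, which controls how far the empirical minimizer $\wh r_n$ lies from the best surrogate risk achievable within $\calF$. The only substantive step is to reconcile their reference points: \Cref{th:upbound} measures the excess surrogate risk against $\calR^*_{\phi}(h)$, the infimum over \emph{all} measurable rejectors, whereas \Cref{th:genbounds} measures it against the restricted infimum $\calR^*_{\phi}(\calF, h)$. The discrepancy is precisely the approximation error $\calA_{\phi}(\calF, h)$.

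First I would check that both source results apply under the stated hypotheses. Because $\phi$ is binary classification-calibrated, it admits a non-decreasing concave $\Gamma$ satisfying the binary excess-risk inequality required as the premise of \Cref{th:upbound} (this is the implication noted preceding \Cref{th:consistency}); the boundedness of $c_j$ and $l$ together with the $\rho$-Lipschitz continuity of $\phi$ are exactly the hypotheses of \Cref{th:genbounds}. Applying \Cref{th:upbound} to the fixed rejector $\wh r_n$ gives
\begin{equation*}
    \calR_{\text{Token}}(\wh r_n, h) - \calR^*_{\text{Token}}(h) \leq \Tilde{\Gamma}\left(\calR_{\phi}(\wh r_n, h) - \calR^*_{\phi}(h)\right),
\end{equation*}
with $\Tilde{\Gamma}$ non-decreasing.

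Next I would decompose the surrogate excess risk by inserting the restricted optimum:
\begin{equation*}
    \calR_{\phi}(\wh r_n, h) - \calR^*_{\phi}(h) = \left[\calR_{\phi}(\wh r_n, h) - \calR^*_{\phi}(\calF, h)\right] + \left[\calR^*_{\phi}(\calF, h) - \calR^*_{\phi}(h)\right].
\end{equation*}
The first bracket is exactly the estimation error bounded in \Cref{th:genbounds}, so with probability at least $1 - \delta$ it is at most $\calB(\calF)$; the second bracket is the approximation error $\calA_{\phi}(\calF, h)$ by definition. Combining these on the same high-probability event yields $\calR_{\phi}(\wh r_n, h) - \calR^*_{\phi}(h) \leq \calB(\calF) + \calA_{\phi}(\calF, h)$. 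Since $\Tilde{\Gamma}$ is non-decreasing, substituting this bound into the outer function preserves the inequality and delivers the claimed result.

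I do not expect a genuine obstacle here; the proof is a composition of \Cref{th:upbound,th:genbounds}. The one point requiring care is that \Cref{th:upbound} is phrased relative to the measurable-function optimum $\calR^*_{\phi}(h)$, so the decomposition must be routed through $\calR^*_{\phi}(\calF, h)$ and the residual identified with $\calA_{\phi}(\calF, h)$ rather than discarded; monotonicity of $\Tilde{\Gamma}$ is what justifies replacing the surrogate excess risk by its high-probability upper bound inside $\Tilde{\Gamma}$, and the $1-\delta$ qualifier is inherited directly from \Cref{th:genbounds}.
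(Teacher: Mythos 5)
Your proposal is correct and follows essentially the same route as the paper: decompose the surrogate excess risk $\calR_{\phi}(\wh r_n, h) - \calR^*_{\phi}(h)$ into the estimation error bounded by \Cref{th:genbounds} and the approximation error $\calA_{\phi}(\calF, h)$, then apply \Cref{th:upbound} (available because classification calibration supplies the required $\Gamma$) and use monotonicity of $\Tilde{\Gamma}$ to substitute the high-probability bound. Your write-up is, if anything, slightly more careful than the paper's in making the monotonicity step and the reconciliation of the two reference points explicit.
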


\section{One-time partial deferral} \label{sec:defpointclass}
Token-level deferrals are ideal to produce partial deferrals, but they hinge on the expert's capability to perform next-token prediction instead of generating the entire sequence at once. If the expert is at least capable of completing sequences using predicted subsequences from models, the system can achieve better cost-quality tradeoffs with partial deferrals compared to whole deferrals, albeit constrained by the lack of granularity in deferral, through \textit{one-time} deferral. In the one-time partial deferral setting, the rejector takes in $x$ as input and identifies a point in sequence - referred to as the \textit{deferral point} - from which the expert predicts the remaining sequence.
When $r(x) = j$, the rejector chooses to reject the prediction of the sequence from the $j^{\text{th}}$ token onward and thereafter defers to the expert. In doing so, the system incurs a loss of $l(y, \widehat{y}_{<j})$ for using the first $j$ tokens from the predictor $h$ and pays a cost of $\Tilde{c}_j(x, \widehat{y}_{<j}, y)$ for using the expert with the partial feedback from $h$. When $r(x) = 1$, the entire prediction task is deferred to the expert. If $r(x) = L + 1$, the model completes the full prediction without paying any cost for an expert. For this deferral point classification, our objective is to minimize the following loss function:
\begin{equation} \label{eq:deferralpoint}
    \calL^{\text{OneTime}}_{\text{Full}}\left(h, r, x, y\right) = \sum_{j = 1}^{L + 1} \left(l(y, \widehat{y}_{< j}) + \Tilde{c}_j(x, \widehat{y}_{<j}, y) \right)\1_{r\left(x\right) = j} 
\end{equation}
where $l(y, \wh y_{<1}) = 0$

\subsection{Influence of large $L$} \label{sec:largeL}

This paradigm closely parallels learning to defer with multiple experts as their objectives are analogous (\Cref{eq:l2dmultiple,eq:deferralpoint}). Each position $j$ can be viewed as an ``expert'' whose prediction concatenates the model prefix $\widehat y_{<j}$ with the expert-completed suffix $\widehat y^e_{\ge j}$. Prior work shows diminishing accuracy gains as the number of experts increases, which can reduce cost–effectiveness \citep{verma2023learning,hemmer2023learning}. This suggests that allowing deferral at every position may be over-resolved, since adjacent deferral points can yield similar accuracy.

Towards this end, we wish to allow for flexibility in adjusting the number of ``experts''. We achieve this by considering a generalization of one-time deferral where the predictor can defer to the expert on a select set of token positions specified by $\calJ$, instead of having all token positions be potential deferral points. For this generalization, we consider the following loss:
\begin{equation} \label{eq:genonetimeloss}
    \calL^{\text{OneTime}}\left(h, r, x, y\right) = \sum_{j \in \calJ} \left(l(y, \widehat{y}_{< j}) + \Tilde{c}_j(x, \widehat{y}_{<j}, y) \right)\1_{r\left(x\right) = j} 
\end{equation}
The $\calL^{\text{OneTime}}$-risk is denoted by $\calR_{\text{OneTime}}(h, r) = \mathop{\mathbb{E}}_{(x, y) \sim \calP_{X, Y}}[\calL^{\text{OneTime}}(h, r, x, y)]$.

\subsection{Surrogate loss}
Let $\g$ consist of $|\calJ|$ scoring functions $g_j: \calX \to \R$ where $j$ indexes the token positions in a sequence and let $r(x) = \argmax_{j \in \calJ} g_j(x)$. We propose the following surrogate loss function:
\begin{align*}
    \calL^{\psi}\left(h, r, x, y\right)=  \sum_{j \in \calJ} \left(c_{\text{max}} - l(y, \widehat{y}_{< j}) - \Tilde{c}_j(x, \widehat{y}_{<j}, y) \right)\psi(\g(x), j)
\end{align*}

where $\psi(\g(x), j)$ is a consistent surrogate for the multiclass 0-1 loss or $\1_{r(x) \neq j}$, and $c_{\text{max}} = \max_{j \in \calJ} l(y, \widehat{y}_{\leq j}) + \Tilde{c}_j(x, \widehat{y}_{<j}, y)$. Examples of $\psi$ include cross entropy loss, i.e. $\psi_{\text{ce}}(\g(x), j) = -\log \frac{e^{g_j(x)}}{\sum_{k \in \calJ} e^{g_k(x)}}$ or mean absolute error, i.e. $\psi_{\text{mae}}(\g(x), j) = 1 -  \frac{e^{g_j(x)}}{\sum_{k \in \calJ} e^{g_k(x)}}$ \citep{ghosh2017robust}. The $\calL^{\psi}$-risk is characterized by $\calR_{\psi}(h, r) = \mathop{\mathbb{E}}_{(x, y) \sim \calP_{X, Y}}[\calL^{\psi}(h, r, x, y)]$.

Intuitively, for a given token position $j$, $c_{\text{max}} - l(y, \widehat{y}_{\leq j}) - \Tilde{c}_j(x, \widehat{y}_{<j}, y) $ is large when the subsequence up till $j$ can be predicted correctly and the cost of deferring on the rest of the sequence is small. When this term is large, we desire $\psi(\g(x), j)$ to be small, incentivizing the model to choose $j$ as the deferral point to minimize the surrogate loss.  
This surrogate loss encourages the rejector to pick a token position which also has minimal costs for deferring on the rest of the sequence - a goal the $\calL^{\text{OneTime}}$ loss function shares.

Just like $\calL^{\phi}$, $\calL^{\psi}$ is also convex and upper bounds $\calL^{\text{OneTime}}$ up to some scale $\gamma$ (See \Cref{sec:proofonetimeconvexupper} for proof). \Cref{th:onetimeconsistency} shows that $\calL^{\psi}$ achieves Bayes consistency.

\begin{theorem}[Consistency]\label{th:onetimeconsistency}
    Let $\Tilde{c}_j(x, \wh y_{<j}, y) \leq \bar{C} < \infty$, $l(y, \wh y^h_j) \leq \bar{l} < \infty$ for all $j \in \calJ$, and $\psi$ be cross entropy loss $\psi_{\text{ce}}$ or mean absolute error $\psi_{\text{mae}}$. If there exists $i, j \in \calJ$ such that $\left|l(y, \wh y^h_i) + \Tilde{c}_i(x, \wh y_{<i}, y) - l(y, \wh y^h_j) - \Tilde{c}_i(x, \wh y_{<i}, y)\right| > \Delta > 0$, then, for a fixed $h$, $\calL^{\psi}$ is a consistent surrogate for $\calL^{\text{OneTime}}$.
\end{theorem}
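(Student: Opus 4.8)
\textbf{Proof proposal for \Cref{th:onetimeconsistency}.}

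The plan is to recognize $\calL^{\text{OneTime}}$ as a cost-sensitive multiclass problem over the finite set of candidate deferral positions $\calJ$, and then to transfer the known calibration of $\psi_{\text{ce}}$ and $\psi_{\text{mae}}$ through the reward weights appearing in the surrogate. Fix $h$ and write the per-position cost $L_j(x,y) = l(y,\wh y_{<j}) + \tilde c_j(x,\wh y_{<j},y)$ and the reward weight $w_j(x,y) = c_{\text{max}}(x,y) - L_j(x,y)\ge 0$; the boundedness hypotheses give $0 \le w_j \le \bar l + \bar C$. Since $\wh y_{<j}$ is determined by $h$ (and the expert) and the rejector $r(x)=\argmax_{j\in\calJ} g_j(x)$ sees only $x$, I would condition on $x$ and take the inner expectation over $y$, defining $\bar L_j(x) = \E_{y\mid x}[L_j(x,y)]$ and $\bar w_j(x)=\E_{y\mid x}[w_j(x,y)]$. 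Because each $\psi(\g(x),j)$ is independent of $y$, the expectation passes cleanly through the surrogate, so the conditional $\calL^{\psi}$-risk is $\sum_{j\in\calJ}\bar w_j(x)\,\psi(\g(x),j)$, while the conditional $\calL^{\text{OneTime}}$-risk is $\sum_{j\in\calJ}\bar L_j(x)\1_{r(x)=j}$. Noting that $c_{\text{max}}$ contributes the same $j$-independent constant to every $\bar w_j(x)$, we get $\argmax_j \bar w_j(x)=\argmin_j \bar L_j(x)=:r^\ast(x)$, so the entire theorem reduces to the pointwise claim that a near-minimizer $\g(x)$ of $\sum_j \bar w_j(x)\psi(\g(x),j)$ has $\argmax_j g_j(x)=\argmax_j \bar w_j(x)$.

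Next I would characterize these pointwise minimizers for the two admissible $\psi$. Writing $p=\softmax(\g(x))$, cross-entropy yields conditional risk $-\sum_j \bar w_j(x)\log p_j$, whose minimizer over the simplex is $p_j\propto \bar w_j(x)$; hence $\argmax_j g_j(x)=\argmax_j \bar w_j(x)=r^\ast(x)$ exactly. For MAE, $\psi_{\text{mae}}(\g(x),j)=1-p_j$, so the conditional risk is $\big(\sum_j \bar w_j(x)\big)-\sum_j \bar w_j(x)\,p_j$, and minimizing over the simplex concentrates the softmax mass on $\argmax_j \bar w_j(x)$, again aligning the surrogate-optimal argmax with $r^\ast(x)$. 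This is where the separation hypothesis enters: assuming some pair of positions has cost gap exceeding $\Delta$ (equivalently, the $\bar w_j(x)$ are not all equal) guarantees that $\argmax_j \bar w_j(x)$ is genuinely informative and, more importantly, supplies a quantitative margin from which a calibration transform can be built.

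To obtain Bayes consistency in the sense of \Cref{th:defconsistency}, I would convert this pointwise alignment into an excess-risk bound. Following the calibration route used elsewhere in the paper, I would establish a pointwise inequality $\calC_{\text{OneTime}}(r,x)-\inf_{r}\calC_{\text{OneTime}}(r,x)\le \Gamma_0\!\big(\calC_{\psi}(\g,x)-\inf_{\g}\calC_{\psi}(\g,x)\big)$ with $\Gamma_0$ nondecreasing, concave, and continuous at $0$ with $\Gamma_0(0)=0$; the margin $\Delta$ makes the right-hand side strictly positive whenever the induced rejector $\hat\jmath(x)=\argmax_j g_j(x)$ differs from $r^\ast(x)$, which is exactly what forces a usable $\Gamma_0$. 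Integrating over $x$ and applying Jensen's inequality to the concave $\Gamma_0$ then gives $\calR_{\text{OneTime}}(r)-\calR_{\text{OneTime}}^\ast \le \Gamma_0\!\big(\calR_{\psi}(\g)-\calR_{\psi}^\ast\big)$; letting $\calR_{\psi}(\g_n)\to\calR_{\psi}^\ast$ along any minimizing sequence $\g_n$ with induced rejectors $r_n$ yields $\calR_{\text{OneTime}}(r_n)\to\calR_{\text{OneTime}}^\ast$, which is precisely the implication required by \Cref{th:defconsistency}. The uniform bound $w_j\le\bar l+\bar C$ keeps $\Gamma_0$ finite.

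The main obstacle is the MAE case together with the precise role of $\Delta$. For cross-entropy the conditional minimizer is unique and matches $\bar w_j(x)$ exactly, so calibration is essentially immediate; but MAE is a symmetric (constant-sum) loss whose conditional minimum is attained only in the limit $p\to e_{j^\ast}$ and is non-unique under ties, so one cannot simply read off a unique optimal score vector. The delicate part is to show that an $\eps$-suboptimal MAE solution still places its largest coordinate on $\argmax_j \bar w_j(x)$ once the top reward exceeds the runner-up by the margin implied by $\Delta$, and to quantify how the surrogate regret controls the measure of $x$ on which $\hat\jmath(x)\ne r^\ast(x)$. I expect this quantitative MAE calibration step, rather than the reduction itself, to demand the most care.
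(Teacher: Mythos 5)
Your overall architecture is the same as the paper's: you rewrite the one-time loss with the nonnegative weights $c_{\text{max}} - l(y,\wh y_{<j}) - \Tilde{c}_j(x,\wh y_{<j},y)$ (this is exactly the paper's \Cref{th:onetimerewrite}), pass to conditional risks given $x$, seek a pointwise calibration inequality with a concave $\Gamma_0$, integrate with Jensen, and take limits. That is precisely the route of the paper's \Cref{th:onetimeconsistencybound} followed by the limit argument.

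The genuine gap is that you never actually produce $\Gamma_0$, and your proposed substitute — characterizing the \emph{exact} pointwise minimizers of the CE and MAE conditional risks — cannot by itself deliver it. Bayes consistency in the sense of \Cref{th:defconsistency} requires a quantitative regret transfer: an $\epsilon$-suboptimal score vector must incur target excess risk at most $\Gamma_0(\epsilon)$ with $\Gamma_0$ continuous at $0$ and $\Gamma_0(0)=0$. Knowing that the exact CE minimizer satisfies $p_j \propto \bar w_j(x)$, or that the MAE infimum is approached at a simplex vertex (and, as you note, never attained under a softmax parametrization), establishes only the $\epsilon = 0$ endpoint. You correctly flag the MAE case as the hard part, but the same quantitative issue is present for CE as well. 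The paper closes this gap by not deriving the calibration from scratch: after normalizing the weights $A_j$ into a conditional distribution $A_j/\sum_k A_k$ on a degenerate auxiliary feature space, it invokes Theorem 3.1 of \citet{mao2023cross}, which supplies explicit transforms $\Gamma_{\text{ce}}$ (with $\Gamma_{\text{ce}}^{-1}(z) = \frac{1+z}{2}\log(1+z) + \frac{1-z}{2}\log(1-z)$) and $\Gamma_{\text{mae}}$ (with $\Gamma_{\text{mae}}^{-1}(z) = \frac{z}{L+1}$) valid for \emph{any} distribution. If you want a self-contained proof you would need to reprove those bounds, which is substantially more work than the reduction itself.

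One smaller correction: the role of $\Delta$ is not to provide a top-versus-runner-up margin for the rewards. In the paper it serves purely as a lower bound on the total weight $\sum_{j\in\calJ} A_j$: the separation hypothesis forces $c_{\text{max}} - l(y,\wh y_{<j}) - \Tilde{c}_j > \Delta$ for at least one $j$, hence $\sum_j A_j \geq \Delta$, which is needed so that after dividing by $\sum_k A_k$ to form a probability vector one can bound $\Gamma\bigl(\cdot/\sum_k A_k\bigr) \leq \Gamma(\cdot/\Delta)$ and obtain the finite transform $\Tilde{\Gamma}(z) = (|\calJ|-1)(\bar{C}+\bar{l})\,\Gamma(z/\Delta)$. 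Without that lower bound the normalization could degenerate (all weights near zero) and the transfer inequality would be vacuous.
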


The proof of the above theorem can be found in \Cref{sec:proofonetimeconsistency}. If the $l(y, \widehat{y}_{\leq j}) + \Tilde{c}_j(x, \widehat{y}_{<j}, y)$ terms are equal for all $j$, the weights would be $0$. By assuming a non-zero cost difference for at least one pair, we are ensuring that at least one $j$ has a non-zero weight. Such an assumption can be satisfied if there is a sufficient separation in predictive capabilities between the predictor and the expert.

\subsection{Generalization Bounds} \label{sec:genboundspoint}

Under the empirical risk minimization framework, 
the optimal rejector $\wh r_{n} \in \argmin_{r \in \calF} \wh \calR_{\psi}(r)$ minimizes the empirical risk with respect to $\calL^{\psi}$ or $\wh \calR_{\psi}(r) = \frac{1}{n}\sum_{(x^i, y^i) \in \calD} \calL^{\psi}(h, r, x^i, y^i)$ from a finite hypothesis class $\calF$. \Cref{th:onetimegenbounds} presents generalization bounds for $\calL^{\psi}$ (See \Cref{sec:proofonetimegenbounds} for the proof).

\begin{theorem} \label{th:onetimegenbounds}
    Suppose $r_{n} \in \calF = \{x \to \argmin_{j \in \calJ} g_j(x): g_j \in \calG_j\}$ where $\calG_j$ consists of functions having a range of $(-M, M)$ with $M \geq 0$, $\Tilde{c}_j(x, \wh y_{<j}, y) \leq \bar{C}$, and $l(y, \wh y^h_j) \leq \bar{l} < \infty$. If $\psi$ is $\rho$-lipschitz continuous with respect to $\g(x)$ and there exists $u_{\psi}(M)$ such that $\psi\left(\g(x), j\right) \leq u_{\psi}(M) < \infty$ for all $j$ and $r \in \calF$, then with probability $1 - \delta$ for a fixed $h$, the following upper bound holds for the empirical risk minimizer $\wh r_{n}$ with respect to $\calL_{\psi}^{\text{Point}}$:
    \begin{equation*}
            \calR_{\psi}\left(\wh r_{n}, h\right) - \calR^*_{\psi}\left(\calF, h\right)\leq 2\sqrt{2}q(L)\rho\sum_{j \in \calJ}\mathfrak{R}_{n}\left(\calG_j\right) + u_{\psi}(M) q(L)\sqrt{\frac{2\log\left(\frac{2}{\delta}\right)}{n}}
    \end{equation*}
    where $\calR^*_{\psi}\left(\calF, h\right) = \inf_{r \in \calF} \calR_{\psi}\left(r, h\right)$, $q(L) =2(|\calJ| - 1)(\bar{C} + \bar{l})$, and $\mathfrak{R}_{n}\left(\calG_j\right)$ is the Rademacher Complexity of $\calG_j$.
\end{theorem}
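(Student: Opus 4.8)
The plan is to run a standard empirical-process argument for the empirical risk minimizer $\wh r_n$, controlling the excess $\calL^{\psi}$-risk by the uniform deviation $\sup_{r \in \calF}\bigl(\calR_{\psi}(r, h) - \wh\calR_{\psi}(r, h)\bigr)$ together with a one-sided deviation at a fixed minimizer $r^{*} \in \argmin_{r \in \calF}\calR_{\psi}(r, h)$. The structural observation that makes this tractable is that, for a fixed predictor $h$, the surrogate $\calL^{\psi}(h, r, x, y) = \sum_{j \in \calJ} w_j(x, y)\,\psi(\g(x), j)$ has weights $w_j(x, y) = c_{\text{max}} - l(y, \widehat y_{<j}) - \Tilde c_j(x, \widehat y_{<j}, y)$ that do \emph{not} depend on the rejector; all dependence on the hypothesis enters only through the score vector $\g(x)$. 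Under the stated bounds $0 \le l \le \bar l$ and $0 \le \Tilde c_j \le \bar C$ one has $0 \le c_{\text{max}} \le \bar l + \bar C$, so each $w_j$ lies in $[0, \bar l + \bar C]$ (nonnegativity coming from $c_{\text{max}}$ being the maximum), and hence $\sum_{j \in \calJ} w_j \le |\calJ|(\bar l + \bar C) \le 2(|\calJ|-1)(\bar C + \bar l) = q(L)$ for $|\calJ|\ge 2$. Consequently $0 \le \calL^{\psi} \le q(L)\,u_{\psi}(M)$, which pins down the range constant feeding the concentration term.

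First I would establish the concentration piece. Replacing a single sample changes $\wh\calR_{\psi}$ by at most $q(L)\,u_{\psi}(M)/n$, so McDiarmid's bounded-differences inequality applied to $\sup_{r}(\calR_{\psi} - \wh\calR_{\psi})$ and Hoeffding's inequality applied to the single term $\wh\calR_{\psi}(r^{*}, h) - \calR_{\psi}(r^{*}, h)$, each with failure probability $\delta/2$, combine to give, with probability $1-\delta$,
\[
\calR_{\psi}(\wh r_n, h) - \calR^{*}_{\psi}(\calF, h) \le \Ex\Bigl[\sup_{r}\bigl(\calR_{\psi} - \wh\calR_{\psi}\bigr)\Bigr] + q(L)\,u_{\psi}(M)\sqrt{\tfrac{2\log(2/\delta)}{n}} .
\]
A standard symmetrization step then bounds the expected uniform deviation by twice the Rademacher complexity $\mathfrak{R}_n(\calL^{\psi}\circ\calF)$ of the composite loss class, which already matches the second (concentration) term of the claim.

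The remaining and most delicate step is to peel the loss off the scores inside $\mathfrak{R}_n(\calL^{\psi}\circ\calF)$. Because $\psi$ depends on the \emph{entire} vector $\g(x) \in \R^{|\calJ|}$ (e.g. through a softmax), a scalar Talagrand contraction does not apply, and I would instead invoke a vector-valued contraction inequality. The map $\g(x) \mapsto \sum_{j} w_j\,\psi(\g(x), j)$ is Lipschitz in the joint score vector with constant at most $\rho\sum_{j} w_j \le \rho\,q(L)$, so the vector contraction yields $\mathfrak{R}_n(\calL^{\psi}\circ\calF) \le \sqrt{2}\,\rho\,q(L)\sum_{j \in \calJ}\mathfrak{R}_n(\calG_j)$, where the $\sqrt2$ is the price of the vector-valued step and the per-coordinate complexities decouple over $j$ because the $g_j$ range in independent classes $\calG_j$. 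Substituting this together with the factor-two symmetrization bound into the display gives exactly $2\sqrt2\,q(L)\,\rho\sum_{j}\mathfrak{R}_n(\calG_j) + q(L)\,u_{\psi}(M)\sqrt{2\log(2/\delta)/n}$, the claimed bound.

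The main obstacle is the contraction step: verifying that the weighted multiclass surrogate is Lipschitz in the joint score vector with the clean constant $\rho\,q(L)$, and aligning the vector-contraction constant so the final coefficients reproduce $q(L) = 2(|\calJ|-1)(\bar C + \bar l)$ in \emph{both} the complexity and concentration terms. The supporting boundedness bookkeeping — nonnegativity of the weights via the definition of $c_{\text{max}}$ and the uniform bound $\psi \le u_{\psi}(M)$ controlling the range — is routine but must be tracked carefully to land the stated constants.
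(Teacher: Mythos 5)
Your proposal is correct and follows essentially the same route as the paper's proof: bound the loss range using the nonnegative weights (which sum to at most $q(L)$ by the definition of $c_{\text{max}}$), apply McDiarmid's inequality for the concentration term, symmetrize to the Rademacher complexity of the loss class, and peel the scores off via Maurer's vector-valued contraction using the $\rho$-Lipschitzness of $\psi$ in $\g(x)$. The only cosmetic difference is bookkeeping: you use a one-sided uniform deviation plus a Hoeffding bound at a fixed minimizer (saving a factor of $2$ relative to the paper's two-sided supremum) while taking the Lipschitz constant to be $\rho\,q(L)$ rather than the paper's tighter $\rho\,q(L)/2$ (obtained by dropping the vanishing weight at $j_{\max}$), and these two factors of $2$ cancel so both arguments land on the identical final constants.
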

Restricting the hypothesis class to have a range $(-M, M)$ practically involves ReLU activation and softmax clipping to prevent the scores $\g$ from approaching $-\infty$ or $\infty$. For $\psi_{\text{ce}}$, we can expect $u_{\psi_{\text{ce}}}(M) = -\log\frac{e^{-M}}{e^{-M} + Le^M}$ and for $\psi_{\text{mae}}$, it is $u_{\psi_{\text{mae}}}(M) =1 -\frac{e^{-M}}{e^{-M} + Le^M}$. 
We can further express the upper bound of the excess risk respect to the original token loss $\calL^{\text{OneTime}}$. The characterization of the upper bound and the proof details can be found in \Cref{sec:proofonetimegenboundreal}.

\section{Related Works} \label{sec:relworks}

\textit{Model Cascades} \citep{viola2001rapid} share similarities with the learning to defer framework. In cascades, an inference task is progressively handled by a series of models where easier tasks are completed by earlier models in the sequence and complex cases escalate to subsequent models. Similar to confidence-based L2D methods described in \Cref{sec:l2d}, cascading decisions are typically determined by a thresholding deferral rule \citep{yue2023large}. Some recent work on applying cascades to large language models \citep{gupta2024language,wang2024cascade} has also explored modeling the cascade decision maker. However, model cascades, just like learning to defer methods, ultimately rely on a single model for each prediction. Our work extends cascades by introducing partial deferrals, allowing two models to collaborate on a prediction. \citet{narasimhan2024faster} also proposes a token-level deferral rule, but their deferral rule compares the predictor's confidence with expert's confidence at every token position, which requires many expensive expert queries. Our work learns the expert costs at training time to avoid such expert queries unless the rejector calls for it.

\textit{Query Routing} methods select a model that provide the highest quality prediction from an ensemble of models. Some query routing algorithms in large language models \citep{jiang2023llm,chen2023frugalgpt} often require querying either a part or the entire model ensemble to determine the best generation. Some methods train query routing models using specially designed loss functions \citep{shnitzer2023large,lu2023routing,ding2024hybrid}; these loss functions, however, lack consistency guarantees. Our work not only presents surrogate losses with stronger statistical guarantees but also offers more granular routing decisions.

Many L2D methods draw inspiration from the \textit{Cost-Sensitive learning} framework. In the example-dependent cost-sensitive setting, each feature $x$ is paired with a cost vector. A cost-sensitive classifier must then learn to identify the index of the cost vector that minimizes the expected cost. Classifiers \citep{abe2004iterative,lin2015reduction} are often trained with a weighted logistic loss or weighted cross entropy loss function, similar to $\calL^{\text{Token}}$ and $\calL^{\text{OneTime}}$. However, popular methods \citep{tu2010one,chung2015cost} use losses that are inconsistent with the original cost-sensitive loss.

\textit{Selective classification} equips a predictor with an abstention option that trades coverage for accuracy. In the noise-free setting, \citet{el2010foundations} formalizes the risk–coverage framework. Deep variants implement the selection function via post-hoc confidence thresholding \citep{geifman2017selective} or joint learning\citep{geifman2019selectivenet}. Similarly, \emph{selective regression} extends the reject option to regression \citep{shah2022selective,jiang2020risk,zaoui2020regression}.
However, unlike learning to defer, these methods treat abstention as a terminal action with a fixed penalty and do not account for the downstream effects of abstention, e.g., accuracy of expert predictions.

\section{Experiments} 

Our experiments are designed to evaluate two questions: (i) how partial deferral compares to whole-sequence deferral 
, and (ii) how our model-based approach compares to confidence-based partial deferral. To this end, we consider three tasks (TSP, XSUM, MWP). We refer the reader to \Cref{sec:trainingdetails} for training and hyperparameter details of the model rejectors.

\subsection{Tasks}

\textbf{Traveling Salesman Problem (TSP)}: TSP is a classic NP-Hard 
optimization problems which requires finding the shortest tour visiting all cities exactly once and returning to the start. 
Optimization-based solvers achieve near-optimality at high computational cost, while ML models are efficient but less accurate. This task highlights a natural collaboration point for partial deferral. We use a PointerNet predictor \citep{vinyals2015pointer}, trained on synthetic $50$-city TSP graphs (coordinates drawn from a standard normal distribution), and a Gurobi solver as the expert. The expert takes the partial solution from the predictor as a constraint and completes the tour. Tours are evaluated by total distance. Since Gurobi does not operate autoregressively, token-level deferral is unsuitable here; instead, this is a prime use case for one-time deferral. The loss $l(y, \widehat{y}_{< j})$ measures the distance of subtour $\wh y_{< j}$, while the expert cost $\Tilde{c}_j(x, \widehat{y}_{<j}, y)$ combines the distance of the expert-completed subtour $\wh y^e_{\geq j}$ with a deferral cost $\alpha_j$, which increases monotonically in the number of rejected tokens (see \Cref{sec:alphaj}).

\textbf{Extreme Summarization (XSUM).}
The predictor summarizes long articles in a few sentences using a \texttt{t5-small} model, and the expert is a stronger \texttt{t5-base} model; both are fine-tuned on XSUM \citep{narayan2018don}. We use T5 (encoder–decoder) models because they can condition on any target prefix $\wh y_{<j}$, enabling the expert to complete a partial sequence without exchanging hidden states. By contrast, decoder-only models require reconstructing cross-model KV caches at step $j$, which defeats the purpose of partial feedback. Summaries are limited to 20 tokens and evaluated with ROUGE-1. For token-level deferral, the predictor loss is $l(y, \wh y^h_j) = \1_{\wh y^h_j \notin y}$, while the expert cost is $c_j(x, \wh y_{<j}, y) = \1_{ e(x,\wh y_{<j})  \notin y}$. For one-time deferral, we measure $1 - \text{ROUGE1}$ score of the expert continuation from step $j$, appended to $\wh y_{<j}$. The joint term $l(y, \widehat{y}_{\leq j}) + \Tilde{c}_j(x, \widehat{y}_{<j}, y)$ is the sum of this error and the deferral cost $\alpha_j$, as in the TSP case.

\textbf{Multi-Step Weather Prediction (MWP)}: Given $2$ hours of temperature history ($12$ observations at $10$-minute intervals), the task is to predict the next hour ($L = 6$). The predictor is an LSTM trained on real data from the Max Planck Institute in Jena, Germany. The expert is a simulated human: ground-truth values perturbed with zero-mean Gaussian noise. For token-level deferral, the predictor loss is mean-squared error $l(y, \wh y^h_j) =  (\wh y^h_j - y_j)^2$, and the expert cost is $c_j(x, \wh y_{<j}, y) = (e(x,\wh y_{<j}) - y_j)^2$. For one-time deferral, the loss is $l(y, \widehat{y}_{< j}) = \| y_{<j} - \widehat{y}_{< j}\|_2^2$ and $\Tilde{c}_j(x, \widehat{y}_{<j}, y)$ combines this squared error of the expert subsequence $\wh y^e_{\geq j}$ and the query cost $\alpha_j$.

\subsection{Baselines}\label{sec:baselines}

To compare the cost-accuracy tradeoffs acheived by partial deferral compared with whole-sequence deferral, we consider three confidence-thresholding baselines—\texttt{ChowSum}, \texttt{ChowMean}, and \texttt{ChowQuantile}—which defer the entire sequence based on the sum, mean, or a chosen quantile of token-level confidence scores computed over the predicted sequence. For TSP and XSUM ($|\mathcal{V}| < \infty$), the per-token score is the negative log–softmax; for MWP ($\mathcal{V} = \mathbb{R}$), it is the Monte Carlo dropout variance. \texttt{WholeModelEmbed} is a model-based whole-sequence rejector trained to predict whether the expert is more accurate than the predictor \citep{gupta2024language}. 

For partial deferral, we compare our model-based rejectors against confidence-based methods at both granularities. At the token level, \texttt{TokenwiseScore} defers by thresholding token-level confidence scores, and \texttt{TokenwiseEntropy} defers by thresholding the entropy of the predictive distribution. In the one-time setting, \texttt{OneTimeScore} and \texttt{OneTimeEntropy} select a single handoff position based on the highest token-level confidence and predictive entropy, respectively. Entropy-based baselines apply only to XSUM and TSP, where the output vocabulary $\calV$ is finite. Formal expressions of these baselines are provided in \Cref{sec:baselinedetails}.

\subsection{Evaluation}

We evaluate our methods using \textit{deferral curves}, which plot the system loss against the number of deferred tokens using various thresholds of the rejector score $r(x)$(see \Cref{fig:newssummdeferralcuurve}). Loss is task-specific: for TSP, it is the percentage increase in tour length relative to the expert; for XSUM, it is $1-\mathrm{ROUGE1}$; and for MWP, it is the squared error $\|y-\widehat{y}\|_2^2$. The area under the deferral curve (AUDC) summarizes the cost-quality tradeoff, and we also report the percentage improvement in AUDC over a random rejector, whose trade-off is linear. Thresholds do not directly determine rejection decisions in the one-time setting; to obtain similar curves, we adopt the following policy: if $g_{L+1}(x)$ (the score of the “no deferral” action) exceeds a threshold, the predictor completes the sequence, otherwise we defer at $\arg\max_{j\in \mathcal{J}} g_j(x)$. All results are averaged over five runs, each with 100 test instances.

\begin{table*}
\caption{Area under the deferral curve (AUDC; lower is better) and percentage improvement relative to a random rejector on TSP, XSUM, and MWP. 
\emph{Random} flips a biased coin to defer; \emph{Optimal} is an oracle that minimizes loss at each deferral budget. 
Bold indicates the best result (ties bolded) within each block (Whole, One-time, Token-level). 
Values are mean (std) over five runs (100 test instances per run). 
\texttt{OneTimeModel} uses the best-performing $\mathcal{L}^{\psi}$ variant (see \Cref{fig:aucvsnumclass}).}

\label{tab:auctokenlevel}
\begin{center}
\resizebox{\textwidth}{!}{%
\begin{tabular}{|c|c|c|c|c|c|c|}
    \hline 
     & \multicolumn{2}{|c|}{TSP} & \multicolumn{2}{|c|}{XSUM} & \multicolumn{2}{|c|}{MWP}\\
     \hline 
    Method & AUDC & \% Improvement   & AUDC & \% Improvement & AUDC & \% Improvement\\
    \hline 
    Random & 306.12(6.31) & 0.00 (0.00) & 1433.55 (36.98) & 0.00 (0.00) & 555.33 (174.75) & 0.00 (0.00)\\
    Optimal & 0.00(0.00) & 100.00 (0.00) & 1212.00 (46.85) & 15.48 (1.50) & 36.66 (4.34) & 93.01 (1.47)\\
    \hline 
    \multicolumn{7}{|c|}{Whole Deferrals} \\
    \hline 
    ChowMean &  \textbf{287.01(7.73)} & \textbf{6.23(2.14)} & 1430.52 (19.09) & 0.13 (3.48)  &  \textbf{521.75 (151.71)} & \textbf{5.26 (9.92)}\\
    ChowSum  &  287.01(7.73) & 6.23(2.14) & 1431.41 (19.65) & 0.06 (3.50)  &  521.75 (151.71) & 5.26 (9.92) \\
    ChowQuantile 0 &  301.18(9.40) & 1.63(1.18) & 1432.28 (25.89) & -0.01 (3.89) &  528.02 (151.50) & 3.91 (9.25) \\
    ChowQuantile 0.4 &  287.74(5.91) & 6.00(1.16) & 1436.10 (24.23) & -0.27 (3.80)  &  528.02 (151.50) & 3.91 (9.25) \\
    ChowQuantile 0.8 &  288.86(9.35) & 5.65(2.12) & 1434.48 (19.32) &-0.15 (3.40) &  528.02 (151.50) & 3.91 (9.25) \\
    ChowQuantile 1.0 &  295.71(6.12) & 3.39(1.42) &  \textbf{1427.27 (18.91)} & \textbf{0.35 (3.50)} &  528.02 (151.50) & 3.91 (9.25)\\
    WholeModelEmbed &  307.88(8.73) & -0.58(2.02) & 1434.23 (21.57) & -0.11 (3.81) &  576.24 (160.28) & -5.28 (15.09)\\
    \hline 
    \multicolumn{7}{|c|}{One-time Deferrals} \\
    \hline 
    OneTimeModel &  \textbf{270.92(10.67)} & \textbf{11.52(2.41)} & \textbf{1398.39(22.84)} & \textbf{2.35 (4.03)} &   \textbf{508.82 (143.39)} &  \textbf{7.23 (11.80)}\\
    OneTimeScore & 319.79(14.53) & -4.44(3.63) & 1433.41(26.82) & -1.50 (2.13) & 521.64 (134.83) & 4.46 (10.14)\\
    OneTimeEntropy &  324.23(13.48) & -5.93(4.15) & 1442.92(24.92) & -1.75 (2.38) & - & - \\
    \hline 
    \multicolumn{7}{|c|}{Token-level Deferrals} \\
    \hline 
    TokenwiseModel &  - & -&\textbf{1323.92 (25.34)} & \textbf{7.63 (0.93)}  &  \textbf{412.45 (131.43)} & \textbf{25.78 (1.33)}\\
    TokenwiseScore &  - & - & 1354.78 (31.87) & 5.48 (0.97) &  707.51 (203.38)  & -28.50 (6.20)\\
    TokenwiseEntropy &  - & - & 1360.41 (31.50) & 5.09 (0.63) & - & - \\
    \hline 
\end{tabular}
}
\end{center}
\end{table*}

\subsection{Comparison with other deferral methods} 

\Cref{tab:auctokenlevel} shows that model-based token-level and one-time deferral schemes outperform whole-sequence baselines (both model- and confidence-based); where applicable, token-level deferral attains the best AUDC. Within each type of granularity, model-based rejectors consistently exceed their confidence-thresholding counterparts. In XSUM, even confidence-based token-level deferral surpasses all whole-sequence methods, highlighting the advantage of fine-grained deferral. Gains for one-time deferral on TSP are modest, which we attribute to the non-autoregressive expert that cannot act on partial subsequences. By contrast, in XSUM and MWP, partial deferral yields markedly better cost–accuracy trade-offs.

\begin{figure}[h]
\begin{center}
\includegraphics[width=0.5\columnwidth]{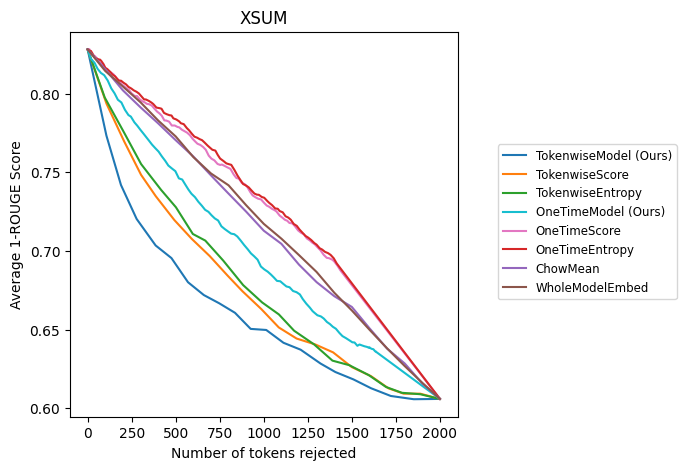}
\caption{Deferral Curve for XSUM (closer to the lower left corner is better). See \Cref{sec:morecostlossplot} for deferral curves of other tasks.}
\label{fig:newssummdeferralcuurve}
\end{center}
\end{figure}

\subsection{Size of $\calJ$ for One time Deferral}

Additonally, we train one-time rejectors with $\mathcal{L}^{\psi}$ with varying candidate sets of deferral points $\mathcal{J}$ to assess how the number of allowable handoff positions affects cost–effectiveness, particularly on XSUM and TSP where the sequence length $L$ is large. For both tasks, $\mathcal{J}$ is a uniform grid over $\{1,\ldots,L{+}1\}$ (with $L{+}1$ denoting “no deferral”), and we vary the grid granularity. \Cref{fig:aucvsnumclass} demonstrates the need for a modified one-time deferral loss function. The model trained with all token positions ($|\calJ| = L + 1$) is not the most optimal choice with the suboptimality exacerbated in the XSUM experiments. One-time rejectors are generally more cost-effective than its corresponding confidence-based rejectors regardless of the size of $\calJ$.

\begin{figure}[h!]
\vskip 0.2in
\begin{center}
\centerline{\includegraphics[width=0.8\columnwidth]{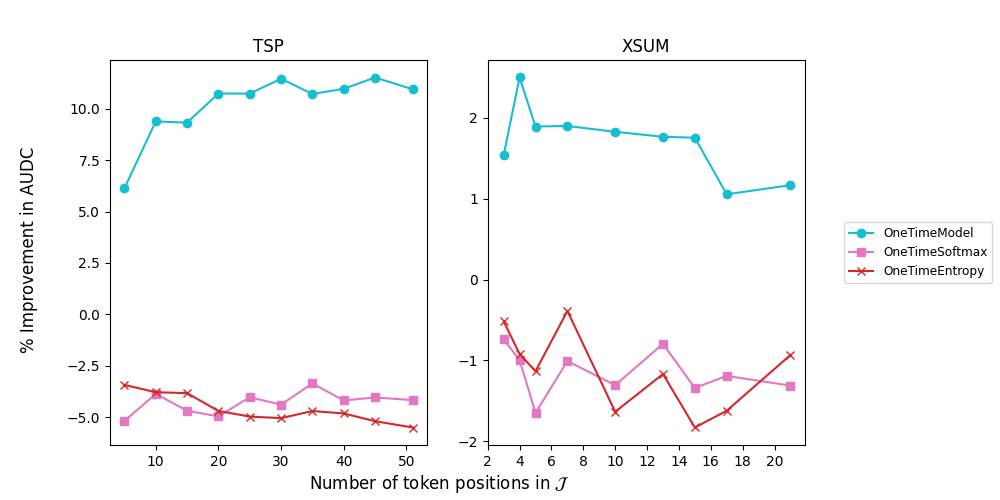}}
\caption{Plots of Size of $\calJ$ against percentage improvement in AUDC relative to the random baseline for TSP (on the left) and XSUM (on the right). See \Cref{sec:numexperts} for exact AUDC values.}
\label{fig:aucvsnumclass}
\end{center}
\vskip -0.2in
\end{figure}

\section{Discussion}

We have introduced partial rejections as a way to improve the cost-effectiveness of a learning to defer system and proposed two model-based methods to support granular rejections. This work raises numerous interesting questions. We have currently proposed partial deferral with access to one expert, however, it would be interesting to extend this to the case with multiple experts. It would also be valuable to explore other types of partial deferrals with granularities between token-level and one-time deferrals. A rejector, for example, could select contiguous parts of a sequence to reject. Finally, studying partial rejections for larger outputs like protein structures would also be insightful.

\newpage
\bibliographystyle{plainnat} 
\bibliography{mybib}

\begin{thebibliography}{66}
\providecommand{\natexlab}[1]{#1}
\providecommand{\url}[1]{\texttt{#1}}
\expandafter\ifx\csname urlstyle\endcsname\relax
  \providecommand{\doi}[1]{doi: #1}\else
  \providecommand{\doi}{doi: \begingroup \urlstyle{rm}\Url}\fi

\bibitem[Abe et~al.(2004)Abe, Zadrozny, and Langford]{abe2004iterative}
Naoki Abe, Bianca Zadrozny, and John Langford.
\newblock An iterative method for multi-class cost-sensitive learning.
\newblock In \emph{Proceedings of the tenth ACM SIGKDD international conference on Knowledge discovery and data mining}, pages 3--11, 2004.

\bibitem[Akata et~al.(2020)Akata, Balliet, De~Rijke, Dignum, Dignum, Eiben, Fokkens, Grossi, Hindriks, Hoos, et~al.]{akata2020research}
Zeynep Akata, Dan Balliet, Maarten De~Rijke, Frank Dignum, Virginia Dignum, Guszti Eiben, Antske Fokkens, Davide Grossi, Koen Hindriks, Holger Hoos, et~al.
\newblock A research agenda for hybrid intelligence: augmenting human intellect with collaborative, adaptive, responsible, and explainable artificial intelligence.
\newblock \emph{Computer}, 53\penalty0 (8):\penalty0 18--28, 2020.

\bibitem[Alikhademi et~al.(2022)Alikhademi, Drobina, Prioleau, Richardson, Purves, and Gilbert]{alikhademi2022review}
Kiana Alikhademi, Emma Drobina, Diandra Prioleau, Brianna Richardson, Duncan Purves, and Juan~E Gilbert.
\newblock A review of predictive policing from the perspective of fairness.
\newblock \emph{Artificial Intelligence and Law}, pages 1--17, 2022.

\bibitem[Asan et~al.(2020)Asan, Bayrak, Choudhury, et~al.]{asan2020artificial}
Onur Asan, Alparslan~Emrah Bayrak, Avishek Choudhury, et~al.
\newblock Artificial intelligence and human trust in healthcare: focus on clinicians.
\newblock \emph{Journal of medical Internet research}, 22\penalty0 (6):\penalty0 e15154, 2020.

\bibitem[Bartlett and Wegkamp(2008)]{bartlett2008classification}
Peter~L Bartlett and Marten~H Wegkamp.
\newblock Classification with a reject option using a hinge loss.
\newblock \emph{Journal of Machine Learning Research}, 9\penalty0 (8), 2008.

\bibitem[Bartlett et~al.(2006)Bartlett, Jordan, and McAuliffe]{bartlett2006convexity}
Peter~L Bartlett, Michael~I Jordan, and Jon~D McAuliffe.
\newblock Convexity, classification, and risk bounds.
\newblock \emph{Journal of the American Statistical Association}, 101\penalty0 (473):\penalty0 138--156, 2006.

\bibitem[Bengio et~al.(2015)Bengio, Vinyals, Jaitly, and Shazeer]{bengio2015scheduled}
Samy Bengio, Oriol Vinyals, Navdeep Jaitly, and Noam Shazeer.
\newblock Scheduled sampling for sequence prediction with recurrent neural networks.
\newblock \emph{Advances in neural information processing systems}, 28, 2015.

\bibitem[Berto et~al.(2023)Berto, Hua, Park, Luttmann, Ma, Bu, Wang, Ye, Kim, Choi, et~al.]{berto2023rl4co}
Federico Berto, Chuanbo Hua, Junyoung Park, Laurin Luttmann, Yining Ma, Fanchen Bu, Jiarui Wang, Haoran Ye, Minsu Kim, Sanghyeok Choi, et~al.
\newblock Rl4co: an extensive reinforcement learning for combinatorial optimization benchmark.
\newblock \emph{arXiv preprint arXiv:2306.17100}, 2023.

\bibitem[Brown et~al.(2020)Brown, Mann, Ryder, Subbiah, Kaplan, Dhariwal, Neelakantan, Shyam, Sastry, Askell, et~al.]{brown2020language}
Tom Brown, Benjamin Mann, Nick Ryder, Melanie Subbiah, Jared~D Kaplan, Prafulla Dhariwal, Arvind Neelakantan, Pranav Shyam, Girish Sastry, Amanda Askell, et~al.
\newblock Language models are few-shot learners.
\newblock \emph{Advances in neural information processing systems}, 33:\penalty0 1877--1901, 2020.

\bibitem[Cao et~al.(2022)Cao, Cai, Feng, Gu, Gu, An, Niu, and Sugiyama]{cao2022generalizing}
Yuzhou Cao, Tianchi Cai, Lei Feng, Lihong Gu, Jinjie Gu, Bo~An, Gang Niu, and Masashi Sugiyama.
\newblock Generalizing consistent multi-class classification with rejection to be compatible with arbitrary losses.
\newblock \emph{Advances in neural information processing systems}, 35:\penalty0 521--534, 2022.

\bibitem[Chen et~al.(2023)Chen, Zaharia, and Zou]{chen2023frugalgpt}
Lingjiao Chen, Matei Zaharia, and James Zou.
\newblock Frugalgpt: How to use large language models while reducing cost and improving performance.
\newblock \emph{arXiv preprint arXiv:2305.05176}, 2023.

\bibitem[Cheng et~al.(2024)Cheng, Cao, Wang, Wei, An, and Feng]{cheng2024regression}
Xin Cheng, Yuzhou Cao, Haobo Wang, Hongxin Wei, Bo~An, and Lei Feng.
\newblock Regression with cost-based rejection.
\newblock \emph{Advances in Neural Information Processing Systems}, 36, 2024.

\bibitem[Chow(1957)]{chow1957optimum}
Chi-Keung Chow.
\newblock An optimum character recognition system using decision functions.
\newblock \emph{IRE Transactions on Electronic Computers}, \penalty0 (4):\penalty0 247--254, 1957.

\bibitem[Chung et~al.(2015)Chung, Lin, and Yang]{chung2015cost}
Yu-An Chung, Hsuan-Tien Lin, and Shao-Wen Yang.
\newblock Cost-aware pre-training for multiclass cost-sensitive deep learning.
\newblock \emph{arXiv preprint arXiv:1511.09337}, 2015.

\bibitem[Cortes et~al.(2016)Cortes, DeSalvo, and Mohri]{cortes2016learning}
Corinna Cortes, Giulia DeSalvo, and Mehryar Mohri.
\newblock Learning with rejection.
\newblock In \emph{Algorithmic Learning Theory: 27th International Conference, ALT 2016, Bari, Italy, October 19-21, 2016, Proceedings 27}, pages 67--82. Springer, 2016.

\bibitem[Courtiol et~al.(2019)Courtiol, Maussion, Moarii, Pronier, Pilcer, Sefta, Manceron, Toldo, Zaslavskiy, Le~Stang, et~al.]{courtiol2019deep}
Pierre Courtiol, Charles Maussion, Matahi Moarii, Elodie Pronier, Samuel Pilcer, Meriem Sefta, Pierre Manceron, Sylvain Toldo, Mikhail Zaslavskiy, Nolwenn Le~Stang, et~al.
\newblock Deep learning-based classification of mesothelioma improves prediction of patient outcome.
\newblock \emph{Nature medicine}, 25\penalty0 (10):\penalty0 1519--1525, 2019.

\bibitem[Dauparas et~al.(2022)Dauparas, Anishchenko, Bennett, Bai, Ragotte, Milles, Wicky, Courbet, de~Haas, Bethel, et~al.]{dauparas2022robust}
Justas Dauparas, Ivan Anishchenko, Nathaniel Bennett, Hua Bai, Robert~J Ragotte, Lukas~F Milles, Basile~IM Wicky, Alexis Courbet, Rob~J de~Haas, Neville Bethel, et~al.
\newblock Robust deep learning--based protein sequence design using proteinmpnn.
\newblock \emph{Science}, 378\penalty0 (6615):\penalty0 49--56, 2022.

\bibitem[Dellermann et~al.(2019)Dellermann, Ebel, S{\"o}llner, and Leimeister]{dellermann2019hybrid}
Dominik Dellermann, Philipp Ebel, Matthias S{\"o}llner, and Jan~Marco Leimeister.
\newblock Hybrid intelligence.
\newblock \emph{Business \& Information Systems Engineering}, 61\penalty0 (5):\penalty0 637--643, 2019.

\bibitem[Ding et~al.(2024)Ding, Mallick, Wang, Sim, Mukherjee, Ruhle, Lakshmanan, and Awadallah]{ding2024hybrid}
Dujian Ding, Ankur Mallick, Chi Wang, Robert Sim, Subhabrata Mukherjee, Victor Ruhle, Laks~VS Lakshmanan, and Ahmed~Hassan Awadallah.
\newblock Hybrid llm: Cost-efficient and quality-aware query routing.
\newblock \emph{arXiv preprint arXiv:2404.14618}, 2024.

\bibitem[Dressel and Farid(2018)]{dressel2018accuracy}
Julia Dressel and Hany Farid.
\newblock The accuracy, fairness, and limits of predicting recidivism.
\newblock \emph{Science advances}, 4\penalty0 (1):\penalty0 eaao5580, 2018.

\bibitem[El-Yaniv and Wiener(2010)]{el2010foundations}
Ran El-Yaniv and Yair Wiener.
\newblock On the foundations of noise-free selective classification.
\newblock \emph{Journal of Machine Learning Research}, 11\penalty0 (5), 2010.

\bibitem[Gal and Ghahramani(2016)]{gal2016dropout}
Yarin Gal and Zoubin Ghahramani.
\newblock Dropout as a bayesian approximation: Representing model uncertainty in deep learning.
\newblock In \emph{international conference on machine learning}, pages 1050--1059. PMLR, 2016.

\bibitem[Geifman and El-Yaniv(2017)]{geifman2017selective}
Yonatan Geifman and Ran El-Yaniv.
\newblock Selective classification for deep neural networks.
\newblock \emph{Advances in neural information processing systems}, 30, 2017.

\bibitem[Geifman and El-Yaniv(2019)]{geifman2019selectivenet}
Yonatan Geifman and Ran El-Yaniv.
\newblock Selectivenet: A deep neural network with an integrated reject option.
\newblock In \emph{International conference on machine learning}, pages 2151--2159. PMLR, 2019.

\bibitem[Ghosh et~al.(2017)Ghosh, Kumar, and Sastry]{ghosh2017robust}
Aritra Ghosh, Himanshu Kumar, and P~Shanti Sastry.
\newblock Robust loss functions under label noise for deep neural networks.
\newblock In \emph{Proceedings of the AAAI conference on artificial intelligence}, volume~31, 2017.

\bibitem[Gupta et~al.(2024)Gupta, Narasimhan, Jitkrittum, Rawat, Menon, and Kumar]{gupta2024language}
Neha Gupta, Harikrishna Narasimhan, Wittawat Jitkrittum, Ankit~Singh Rawat, Aditya~Krishna Menon, and Sanjiv Kumar.
\newblock Language model cascades: Token-level uncertainty and beyond.
\newblock \emph{arXiv preprint arXiv:2404.10136}, 2024.

\bibitem[Hemmer et~al.(2023)Hemmer, Thede, V{\"o}ssing, Jakubik, and K{\"u}hl]{hemmer2023learning}
Patrick Hemmer, Lukas Thede, Michael V{\"o}ssing, Johannes Jakubik, and Niklas K{\"u}hl.
\newblock Learning to defer with limited expert predictions.
\newblock In \emph{Proceedings of the AAAI Conference on Artificial Intelligence}, volume~37, pages 6002--6011, 2023.

\bibitem[Jiang et~al.(2023)Jiang, Ren, and Lin]{jiang2023llm}
Dongfu Jiang, Xiang Ren, and Bill~Yuchen Lin.
\newblock Llm-blender: Ensembling large language models with pairwise ranking and generative fusion.
\newblock In \emph{Proceedings of the 61st Annual Meeting of the Association for Computational Linguistics (Volume 1: Long Papers)}, pages 14165--14178, 2023.

\bibitem[Jiang et~al.(2020)Jiang, Zhao, and Wang]{jiang2020risk}
Wenming Jiang, Ying Zhao, and Zehan Wang.
\newblock Risk-controlled selective prediction for regression deep neural network models.
\newblock In \emph{2020 International Joint Conference on Neural Networks (IJCNN)}, pages 1--8. IEEE, 2020.

\bibitem[Jumper et~al.(2021)Jumper, Evans, Pritzel, Green, Figurnov, Ronneberger, Tunyasuvunakool, Bates, Zidek, Potapenko, et~al.]{jumper2021highly}
John Jumper, Richard Evans, Alexander Pritzel, Tim Green, Michael Figurnov, Olaf Ronneberger, Kathryn Tunyasuvunakool, Russ Bates, Augustin Zidek, Anna Potapenko, et~al.
\newblock Highly accurate protein structure prediction with alphafold.
\newblock \emph{nature}, 596\penalty0 (7873):\penalty0 583--589, 2021.

\bibitem[Kamar(2016)]{kamar2016directions}
Ece Kamar.
\newblock Directions in hybrid intelligence: Complementing ai systems with human intelligence.
\newblock In \emph{IJCAI}, pages 4070--4073, 2016.

\bibitem[Kenton and Toutanova(2019)]{kenton2019bert}
Jacob Devlin Ming-Wei~Chang Kenton and Lee~Kristina Toutanova.
\newblock Bert: Pre-training of deep bidirectional transformers for language understanding.
\newblock In \emph{Proceedings of naacL-HLT}, volume~1, page~2. Minneapolis, Minnesota, 2019.

\bibitem[Ker et~al.(2017)Ker, Wang, Rao, and Lim]{ker2017deep}
Justin Ker, Lipo Wang, Jai Rao, and Tchoyoson Lim.
\newblock Deep learning applications in medical image analysis.
\newblock \emph{Ieee Access}, 6:\penalty0 9375--9389, 2017.

\bibitem[Lamb et~al.(2016)Lamb, ALIAS PARTH~GOYAL, Zhang, Zhang, Courville, and Bengio]{lamb2016professor}
Alex~M Lamb, Anirudh~Goyal ALIAS PARTH~GOYAL, Ying Zhang, Saizheng Zhang, Aaron~C Courville, and Yoshua Bengio.
\newblock Professor forcing: A new algorithm for training recurrent networks.
\newblock \emph{Advances in neural information processing systems}, 29, 2016.

\bibitem[Lin(2015)]{lin2015reduction}
Hsuan-Tien Lin.
\newblock Reduction from cost-sensitive multiclass classification to one-versus-one binary classification.
\newblock In \emph{Asian Conference on Machine Learning}, pages 371--386. PMLR, 2015.

\bibitem[Lu et~al.(2023)Lu, Yuan, Lin, Lin, Yuan, Zhou, and Zhou]{lu2023routing}
Keming Lu, Hongyi Yuan, Runji Lin, Junyang Lin, Zheng Yuan, Chang Zhou, and Jingren Zhou.
\newblock Routing to the expert: Efficient reward-guided ensemble of large language models.
\newblock \emph{arXiv preprint arXiv:2311.08692}, 2023.

\bibitem[Maadi et~al.(2021)Maadi, Akbarzadeh~Khorshidi, and Aickelin]{maadi2021review}
Mansoureh Maadi, Hadi Akbarzadeh~Khorshidi, and Uwe Aickelin.
\newblock A review on human--ai interaction in machine learning and insights for medical applications.
\newblock \emph{International journal of environmental research and public health}, 18\penalty0 (4):\penalty0 2121, 2021.

\bibitem[Madras et~al.(2018)Madras, Pitassi, and Zemel]{madras2018predict}
David Madras, Toni Pitassi, and Richard Zemel.
\newblock Predict responsibly: improving fairness and accuracy by learning to defer.
\newblock \emph{Advances in neural information processing systems}, 31, 2018.

\bibitem[Mao et~al.(2023)Mao, Mohri, and Zhong]{mao2023cross}
Anqi Mao, Mehryar Mohri, and Yutao Zhong.
\newblock Cross-entropy loss functions: Theoretical analysis and applications.
\newblock In \emph{International conference on Machine learning}, pages 23803--23828. PMLR, 2023.

\bibitem[Mao et~al.(2024{\natexlab{a}})Mao, Mohri, and Zhong]{mao2024predictor}
Anqi Mao, Mehryar Mohri, and Yutao Zhong.
\newblock Predictor-rejector multi-class abstention: Theoretical analysis and algorithms.
\newblock In \emph{International Conference on Algorithmic Learning Theory}, pages 822--867. PMLR, 2024{\natexlab{a}}.

\bibitem[Mao et~al.(2024{\natexlab{b}})Mao, Mohri, and Zhong]{mao2024regression}
Anqi Mao, Mehryar Mohri, and Yutao Zhong.
\newblock Regression with multi-expert deferral.
\newblock \emph{arXiv preprint arXiv:2403.19494}, 2024{\natexlab{b}}.

\bibitem[Mao et~al.(2024{\natexlab{c}})Mao, Mohri, and Zhong]{mao2024theoretically}
Anqi Mao, Mehryar Mohri, and Yutao Zhong.
\newblock Theoretically grounded loss functions and algorithms for score-based multi-class abstention.
\newblock In \emph{International Conference on Artificial Intelligence and Statistics}, pages 4753--4761. PMLR, 2024{\natexlab{c}}.

\bibitem[Maurer(2016)]{maurer2016vector}
Andreas Maurer.
\newblock A vector-contraction inequality for rademacher complexities.
\newblock In \emph{Algorithmic Learning Theory: 27th International Conference, ALT 2016, Bari, Italy, October 19-21, 2016, Proceedings 27}, pages 3--17. Springer, 2016.

\bibitem[Mozannar and Sontag(2020)]{mozannar2020consistent}
Hussein Mozannar and David Sontag.
\newblock Consistent estimators for learning to defer to an expert.
\newblock In \emph{International Conference on Machine Learning}, pages 7076--7087. PMLR, 2020.

\bibitem[Nam et~al.(2017)Nam, Loza~Mencia, Kim, and F{\"u}rnkranz]{nam2017maximizing}
Jinseok Nam, Eneldo Loza~Mencia, Hyunwoo~J Kim, and Johannes F{\"u}rnkranz.
\newblock Maximizing subset accuracy with recurrent neural networks in multi-label classification.
\newblock \emph{Advances in neural information processing systems}, 30, 2017.

\bibitem[Narasimhan et~al.(2024)Narasimhan, Jitkrittum, Rawat, Kim, Gupta, Menon, and Kumar]{narasimhan2024faster}
Harikrishna Narasimhan, Wittawat Jitkrittum, Ankit~Singh Rawat, Seungyeon Kim, Neha Gupta, Aditya~Krishna Menon, and Sanjiv Kumar.
\newblock Faster cascades via speculative decoding.
\newblock \emph{arXiv preprint arXiv:2405.19261}, 2024.

\bibitem[Narayan et~al.(2018)Narayan, Cohen, and Lapata]{narayan2018don}
Shashi Narayan, Shay~B Cohen, and Mirella Lapata.
\newblock Don't give me the details, just the summary! topic-aware convolutional neural networks for extreme summarization.
\newblock \emph{arXiv preprint arXiv:1808.08745}, 2018.

\bibitem[Nijkamp et~al.(2023)Nijkamp, Ruffolo, Weinstein, Naik, and Madani]{nijkamp2023progen2}
Erik Nijkamp, Jeffrey~A Ruffolo, Eli~N Weinstein, Nikhil Naik, and Ali Madani.
\newblock Progen2: exploring the boundaries of protein language models.
\newblock \emph{Cell systems}, 14\penalty0 (11):\penalty0 968--978, 2023.

\bibitem[Raffel et~al.(2020)Raffel, Shazeer, Roberts, Lee, Narang, Matena, Zhou, Li, and Liu]{raffel2020exploring}
Colin Raffel, Noam Shazeer, Adam Roberts, Katherine Lee, Sharan Narang, Michael Matena, Yanqi Zhou, Wei Li, and Peter~J Liu.
\newblock Exploring the limits of transfer learning with a unified text-to-text transformer.
\newblock \emph{Journal of machine learning research}, 21\penalty0 (140):\penalty0 1--67, 2020.

\bibitem[Ramaswamy et~al.(2018)Ramaswamy, Tewari, and Agarwal]{ramaswamy2018consistent}
Harish~G. Ramaswamy, Ambuj Tewari, and Shivani Agarwal.
\newblock Consistent algorithms for multiclass classification with an abstain option.
\newblock \emph{Electronic Journal of Statistics}, 12\penalty0 (1), Jan 2018.
\newblock \doi{10.1214/17-ejs1388}.

\bibitem[Read et~al.(2021)Read, Pfahringer, Holmes, and Frank]{read2021classifier}
Jesse Read, Bernhard Pfahringer, Geoffrey Holmes, and Eibe Frank.
\newblock Classifier chains: A review and perspectives.
\newblock \emph{Journal of Artificial Intelligence Research}, 70:\penalty0 683--718, 2021.

\bibitem[Rigano(2019)]{rigano2019using}
Christopher Rigano.
\newblock Using artificial intelligence to address criminal justice needs.
\newblock \emph{National Institute of Justice Journal}, 280\penalty0 (1-10):\penalty0 17, 2019.

\bibitem[Schmidt(2019)]{schmidt2019generalization}
Florian Schmidt.
\newblock Generalization in generation: A closer look at exposure bias.
\newblock \emph{arXiv preprint arXiv:1910.00292}, 2019.

\bibitem[Shah et~al.(2022)Shah, Bu, Lee, Das, Panda, Sattigeri, and Wornell]{shah2022selective}
Abhin Shah, Yuheng Bu, Joshua~K Lee, Subhro Das, Rameswar Panda, Prasanna Sattigeri, and Gregory~W Wornell.
\newblock Selective regression under fairness criteria.
\newblock In \emph{International Conference on Machine Learning}, pages 19598--19615. PMLR, 2022.

\bibitem[Shnitzer et~al.(2023)Shnitzer, Ou, Silva, Soule, Sun, Solomon, Thompson, and Yurochkin]{shnitzer2023large}
Tal Shnitzer, Anthony Ou, Mirian Silva, Kate Soule, Yuekai Sun, Justin Solomon, Neil Thompson, and Mikhail Yurochkin.
\newblock Large language model routing with benchmark datasets.
\newblock \emph{arXiv preprint arXiv:2309.15789}, 2023.

\bibitem[Tu and Lin(2010)]{tu2010one}
Han-Hsing Tu and Hsuan-Tien Lin.
\newblock One-sided support vector regression for multiclass cost-sensitive classification.
\newblock In \emph{ICML}, volume~2, page~5, 2010.

\bibitem[Verma and Nalisnick(2022)]{verma2022calibrated}
Rajeev Verma and Eric Nalisnick.
\newblock Calibrated learning to defer with one-vs-all classifiers.
\newblock In \emph{International Conference on Machine Learning}, pages 22184--22202. PMLR, 2022.

\bibitem[Verma et~al.(2023)Verma, Barrej{\'o}n, and Nalisnick]{verma2023learning}
Rajeev Verma, Daniel Barrej{\'o}n, and Eric Nalisnick.
\newblock Learning to defer to multiple experts: Consistent surrogate losses, confidence calibration, and conformal ensembles.
\newblock In \emph{International Conference on Artificial Intelligence and Statistics}, pages 11415--11434. PMLR, 2023.

\bibitem[Vinyals et~al.(2015)Vinyals, Fortunato, and Jaitly]{vinyals2015pointer}
Oriol Vinyals, Meire Fortunato, and Navdeep Jaitly.
\newblock Pointer networks.
\newblock \emph{Advances in neural information processing systems}, 28, 2015.

\bibitem[Viola and Jones(2001)]{viola2001rapid}
Paul Viola and Michael Jones.
\newblock Rapid object detection using a boosted cascade of simple features.
\newblock In \emph{Proceedings of the 2001 IEEE computer society conference on computer vision and pattern recognition. CVPR 2001}, volume~1, pages I--I. Ieee, 2001.

\bibitem[Wang et~al.(2024)Wang, Augenstein, Rush, Jitkrittum, Narasimhan, Rawat, Menon, and Go]{wang2024cascade}
Congchao Wang, Sean Augenstein, Keith Rush, Wittawat Jitkrittum, Harikrishna Narasimhan, Ankit~Singh Rawat, Aditya~Krishna Menon, and Alec Go.
\newblock Cascade-aware training of language models.
\newblock \emph{arXiv preprint arXiv:2406.00060}, 2024.

\bibitem[Wang et~al.(2018)Wang, Cao, Zhang, and Qi]{wang2018computational}
Jingxue Wang, Huali Cao, John~ZH Zhang, and Yifei Qi.
\newblock Computational protein design with deep learning neural networks.
\newblock \emph{Scientific reports}, 8\penalty0 (1):\penalty0 1--9, 2018.

\bibitem[Williams and Zipser(1989)]{williams1989learning}
Ronald~J. Williams and David Zipser.
\newblock A learning algorithm for continually running fully recurrent neural networks.
\newblock \emph{Neural Computation}, 1\penalty0 (2):\penalty0 270--280, 1989.

\bibitem[Yuan and Wegkamp(2010)]{yuan2010classification}
Ming Yuan and Marten Wegkamp.
\newblock Classification methods with reject option based on convex risk minimization.
\newblock \emph{Journal of Machine Learning Research}, 11\penalty0 (1), 2010.

\bibitem[Yue et~al.(2023)Yue, Zhao, Zhang, Du, and Yao]{yue2023large}
Murong Yue, Jie Zhao, Min Zhang, Liang Du, and Ziyu Yao.
\newblock Large language model cascades with mixture of thoughts representations for cost-efficient reasoning.
\newblock \emph{arXiv preprint arXiv:2310.03094}, 2023.

\bibitem[Zaoui et~al.(2020)Zaoui, Denis, and Hebiri]{zaoui2020regression}
Ahmed Zaoui, Christophe Denis, and Mohamed Hebiri.
\newblock Regression with reject option and application to knn.
\newblock \emph{Advances in Neural Information Processing Systems}, 33:\penalty0 20073--20082, 2020.

\end{thebibliography}

\newpage
\appendix

\section{Proofs for \Cref{sec:tokendeferral}} \label{sec:tokendeferralproof}

\subsection{Proof of \Cref{th:convexupperbound}} \label{sec:proofconvexupper}

\begin{proposition} \label{th:convexupperbound}
    If $\phi$ is binary classification calibrated and convex, $\calL^{\phi}(h, r, x, y)$ is a convex upper bound of $\calL^{\text{Token}}(h, r, x, y)$ upto some scale $\gamma$
\end{proposition}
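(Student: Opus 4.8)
The plan is to verify the two asserted properties—convexity and the scaled upper bound—separately, and in each case it suffices to argue pointwise in each summand $\calL^{\phi}_j$, since $\calL^{\phi}=\frac{1}{L}\sum_{j=1}^{L}\calL^{\phi}_j$ is a nonnegative average. For convexity, I would fix $(x,\wh y_{<j},y)$ and regard $\calL^{\phi}_j$ as a function of the scalar output $z=r_j(x,\wh y_{<j})$. The coefficients $l(y,\wh y^h_j)\ge 0$ and $c_j(x,\wh y_{<j},y)\ge 0$ are constants in $z$. The first term $l(y,\wh y^h_j)\phi(z)$ is a nonnegative multiple of the convex function $\phi$, hence convex; the second term $c_j(x,\wh y_{<j},y)\phi(-z)$ is a nonnegative multiple of $\phi$ precomposed with the affine map $z\mapsto -z$, hence also convex. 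A sum of convex functions is convex, so each $\calL^{\phi}_j$ is convex, and therefore so is $\calL^{\phi}$.

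For the upper bound, the key is to convert the hypothesis that $\phi$ is binary classification calibrated into a usable analytic inequality. For convex margin losses, classification calibration is equivalent to $\phi$ being differentiable at $0$ with $\phi'(0)<0$. From this I would extract two facts: first, $\phi(0)>0$, since a nonnegative convex function with strictly negative slope at the origin cannot vanish there; and second, because the (sub)derivative of a convex function is nondecreasing, $\phi'(z)\le \phi'(0)<0$ for all $z\le 0$, so $\phi$ is nonincreasing on $(-\infty,0]$ and consequently $\phi(z)\ge \phi(0)$ whenever $z\le 0$. I would then set $\gamma=1/\phi(0)$.

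The upper bound itself follows from a two-case argument on the sign of $z=r_j(x,\wh y_{<j})$. If $z<0$, then $\calL_j=l(y,\wh y^h_j)$; dropping the nonnegative deferral term and using $\phi(z)\ge\phi(0)$ gives $\calL^{\phi}_j\ge l(y,\wh y^h_j)\phi(z)\ge \phi(0)\,\calL_j$. If $z\ge 0$, then $\calL_j=c_j(x,\wh y_{<j},y)$; now $-z\le 0$, so $\phi(-z)\ge\phi(0)$, and dropping the nonnegative model term gives $\calL^{\phi}_j\ge c_j(x,\wh y_{<j},y)\phi(-z)\ge \phi(0)\,\calL_j$. In either case $\calL_j\le\gamma\,\calL^{\phi}_j$, and averaging over $j$ yields $\calL^{\text{Token}}(h,r,x,y)\le\gamma\,\calL^{\phi}(h,r,x,y)$.

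The only genuinely nontrivial step is the middle paragraph: translating the qualitative hypothesis ``classification calibrated'' into the concrete lower bound $\phi(z)\ge\phi(0)>0$ on the negative half-line. This rests on the standard characterization of calibrated convex margin losses (differentiability at $0$ with negative slope) together with ruling out the degenerate case $\phi(0)=0$; everything else is bookkeeping with nonnegativity and the average over positions. One should also note that the constant $\gamma=1/\phi(0)$ is uniform in $(x,y)$ and in $j$, so the pointwise bound lifts directly to the stated inequality between the two loss functions.
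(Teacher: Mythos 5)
Your proof is correct and follows essentially the same route as the paper's: convexity is immediate from nonnegative weights and composition with an affine map, and the upper bound reduces to the pointwise domination $\gamma\phi(z)\ge \1_{z\le 0}$ applied separately to the two terms of each $\calL^{\phi}_j$. The only difference is that where the paper cites Lemma~3 of \citet{bartlett2006convexity} for this domination, you derive it directly (with the explicit constant $\gamma=1/\phi(0)$) from the characterization of calibrated convex losses as differentiable at $0$ with $\phi'(0)<0$ — a valid and slightly more self-contained variant, provided one keeps the standing assumption that $\phi\ge 0$, which your $\phi(0)>0$ step quietly uses.
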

\begin{proof}

It is easy to see that $\calL^{\phi}_{j}(h, r, x, y)$ is convex with respect to $r$ as long as $\phi(\cdot)$ is convex. Since $\phi$ is binary classification calibrated, there is a constant $\gamma$ such that $\gamma\phi(z) \geq \mathbbm{1}_{z \leq 0}$ by Lemma 3 from \cite{bartlett2006convexity}. Therefore, 
$\gamma\calL^{\phi}(h, r, x, y) \geq \calL^{\text{Token}}(h, r, x, y)$.

\end{proof}

\subsection{Proof of \Cref{th:upbound}} \label{sec:proofupbound}

\begin{lemma}
    Let $0 < \bar{c} \leq c_j(x, \wh y_{<j}, y) \leq \bar{C} < \infty$ for all $j$, $0 \leq l(y, \wh y^h_j) \leq \bar{l} < \infty$, and $\phi$ be a binary surrogate loss that satisfies the following inequality for any distribution over $\calX \times \{-1, 1\}$ and any measurable function $f$:
    \begin{align*}
        \calR_{\text{binary }0-1}(f) - \calR^{*}_{\text{binary }0-1} \leq \Gamma\left(\calR_{\text{binary }\phi}(f) - \calR^{ *}_{\text{binary }\phi}\right)
    \end{align*}
    where $\calR_{\text{binary }0-1}$ is the binary 0-1 risk and $\Gamma : \R^+ \to \R^+$ is a non-decreasing concave function.
    Then, for any measurable $r$ and any $h$, the following inequality holds for any distribution over $\calX \times \calY$:
    \begin{align*}
        \calR_{\text{Token}}\left(r, h\right) - \calR_{\text{Token}}^*(h) \leq \Tilde{\Gamma}\left(\calR_{\phi}\left(r, h\right) - \calR^*_{\phi}(h)\right)
    \end{align*}
    where $\Tilde{\Gamma}(z) = \left(\bar{l} + \Bar{C}\right)\Gamma\left(\frac{z}{\bar{c}}\right)$, $\calR^*_{\text{Token}}(h) := \inf_{r} \calR_{\text{Token}}\left(r, h\right)$, and $\calR^*_{\phi}(h) := \inf_{r} \calR_{\phi}\left(r, h\right)$
\end{lemma}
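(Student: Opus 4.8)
The plan is to reduce the $L$-step deferral problem to $L$ independent binary classification problems, one per token position, and then apply the hypothesized calibration bound to a suitably reweighted binary distribution at each position. The teacher-forcing assumption is what makes this reduction clean: since the context $\wh y_{<j}$ is rolled out from the oracle decision and depends only on $(x,y)$, the input $z_j := (x,\wh y_{<j})$ to $r_j$ has a law that does not depend on the rejector. Consequently both $\calR_{\text{Token}}$ and $\calR_\phi$ decompose additively over $j$ and each coordinate $r_j$ can be optimized independently, giving $\calR^*_{\text{Token}}(h) = \frac1L\sum_{j}\inf_{r_j}\E[\calL_j]$ and $\calR^*_\phi(h) = \frac1L\sum_{j}\inf_{r_j}\E[\calL^\phi_j]$.

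Fixing a position $j$, I would write $a_j(z_j) = \E[l(y,\wh y^h_j)\mid z_j]$ and $b_j(z_j)=\E[c_j(x,\wh y_{<j},y)\mid z_j]$, so the conditional token risk is $a_j\1_{r_j<0}+b_j\1_{r_j\geq 0}$ and the conditional surrogate risk is $a_j\phi(r_j)+b_j\phi(-r_j)$. Set $s_j=a_j+b_j$; since $b_j\geq\bar c$ we have $s_j\geq\bar c>0$. I then define a binary distribution $\tilde\mu_j$ on the augmented feature space $\calX\times\calV^{j-1}$ together with a label in $\{-1,1\}$, whose feature marginal is the $s_j$-reweighting of the law of $z_j$ normalized by $Z_j := \E[a_j+b_j]$, and whose label posterior is $\eta_j := a_j/s_j\in[0,1]$. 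Identifying ``defer'' ($r_j\geq 0$) with the label $+1$, a direct computation gives that under $\tilde\mu_j$ the binary $0$--$1$ risk of $r_j$ equals $\frac1{Z_j}\E[\calL_j(r_j)]$ and its binary $\phi$-risk equals $\frac1{Z_j}\E[\calL^\phi_j(r_j)]$; the matching Bayes risks are the corresponding $\frac1{Z_j}\inf_{r_j}(\cdot)$, because the pointwise deferral optimum $\min(a_j,b_j)$ is exactly the binary Bayes rule (defer iff $b_j\leq a_j$, i.e. $\eta_j\geq\tfrac12$).

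Applying the calibration hypothesis to $\tilde\mu_j$ with $f=r_j$ and clearing the $Z_j$ denominator yields, per position,
$$\E[\calL_j(r_j)] - \inf_{r_j}\E[\calL_j] \;\leq\; Z_j\,\Gamma\!\left(\tfrac{1}{Z_j}\Big(\E[\calL^\phi_j(r_j)] - \inf_{r_j}\E[\calL^\phi_j]\Big)\right).$$
Using $\bar c\leq Z_j\leq\bar l+\bar C$, the nonnegativity of the surrogate excess risk, and monotonicity of $\Gamma$, I bound the right-hand side by $(\bar l+\bar C)\,\Gamma\!\big(\tfrac1{\bar c}(\E[\calL^\phi_j]-\inf_{r_j}\E[\calL^\phi_j])\big)$. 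Averaging over $j$, substituting the additive decompositions from the first paragraph, and finally pulling the average inside the concave $\Gamma$ via Jensen's inequality produces $(\bar l+\bar C)\,\Gamma\!\big(\tfrac1{\bar c}(\calR_\phi(r,h)-\calR^*_\phi(h))\big)=\Tilde\Gamma(\calR_\phi(r,h)-\calR^*_\phi(h))$, as claimed.

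I expect the main obstacle to be the reweighted-distribution construction and the verification that the distribution-level hypothesis applied to $\tilde\mu_j$ reproduces the per-token excess risks with the correct scale; in particular one must check that the deferral Bayes rule matches the binary Bayes rule and that $Z_j$ is precisely the factor relating the two scales. The assumption $\bar c>0$ is essential here on two counts: it guarantees $\eta_j$ is well defined, and it keeps $Z_j$ bounded away from $0$ so that $\Gamma(\cdot/Z_j)$ can be controlled by $\Gamma(\cdot/\bar c)$ through monotonicity. The two remaining ingredients, additivity from teacher forcing and concavity of $\Gamma$ for the Jensen step, are routine once the per-position inequality is established.
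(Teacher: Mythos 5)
Your proof is correct and follows essentially the same route as the paper's: both reduce the token-level excess risk to an auxiliary binary classification problem with posterior $a_j/(a_j+b_j)$, invoke the assumed calibration inequality, control the normalizer via $\bar{c}$ and $\bar{l}+\bar{C}$ using monotonicity of $\Gamma$, and finish with Jensen's inequality via concavity. The only difference is organizational — you apply the hypothesis once per position $j$ to a reweighted marginal of $(x,\wh y_{<j})$, whereas the paper fixes $x$ and places uniform mass over the $L$ augmented points — and your choice to condition on the full rejector input $(x,\wh y_{<j})$ rather than on $x$ alone is, if anything, the more careful bookkeeping.
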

\begin{proof}
    Let $C_{\text{Token}}(r, h, x) = \mathop{\mathbb{E}}_{y \sim \calP_{Y \mid X = x}}\left[\calL^{\text{Token}}\left(h, r, x, y\right)\right]$ be the pointwise $\calL^{\text{Token}}$- risk for a given $x$ and predictor $h$ and let $C_{\text{Token}}^*(h, x) = \inf_{r} C_{\text{Token}}(r, h, x)$ be the optimal pointwise $\calL^{\text{Token}}$- risk for a given $x$ and predictor $h$. Similarly, let $C_{\phi}(r, h, x) = \mathop{\mathbb{E}}_{y \sim \calP_{Y \mid X = x}}\left[\calL^{\phi}\left(h, r, x, y\right)\right]$ be the pointwise $\calL^{\phi}$- risk for a given $x$ and predictor $h$ and $C_{\phi}^*(h, x) = \inf_{r} C_{\phi}(r, h, x)$ be the optimal pointwise $\calL^{\phi}$- risk. For any measurable function $r$, the excess pointwise $\calL^{\phi}$- risk can be characterized by:
    \begin{align*}
        &C_{\text{Token}}(r, h, x) - C_{\text{Token}}^*(h, x) \\
        &= \frac{1}{L}\sum_{j = 1}^{L} \mathop{\mathbb{E}}_{y \sim \calP_{Y \mid X = x}}\left[l\left(y, \wh y_j^h\right)\right] \mathbbm{1}_{r(x, \wh y_{<j}) \leq 0} + \mathop{\mathbb{E}}_{y \sim \calP_{Y \mid X = x}}\left[c_j(x, \wh y_{<j}, y)\right] \mathbbm{1}_{r(x, \wh y_{<j}) > 0} \\
        &\quad - \frac{1}{L} \inf_{\Tilde{r}} \left\{\sum_{j = 1}^{L} \mathop{\mathbb{E}}_{y \sim \calP_{Y \mid X = x}}\left[l\left(y, \wh y_j^h\right)\right] \mathbbm{1}_{\Tilde{r}(x, \wh y_{<j}) \leq 0} + \mathop{\mathbb{E}}_{y \sim \calP_{Y \mid X = x}}\left[c_j(x, \wh y_{<j}, y)\right] \mathbbm{1}_{\Tilde{r}(x, \wh y_{<j}) > 0}\right\}
    \end{align*}
    Let $A_j = \mathop{\mathbb{E}}_{y \sim \calP_{Y \mid X = x}}\left[l(y, \wh y_{<j})\right]$ and $B_j = \mathop{\mathbb{E}}_{y \sim \calP_{Y \mid X = x}}\left[c_j(x, \wh y_{<j}, y)\right]$
    \begin{align*}
        &C_{\text{Token}}(r, h, x) - C_{\text{Token}}^*(h, x) \\ 
        &= \frac{1}{L}\sum_{j = 1}^{L} A_j \mathbbm{1}_{r_j(x, \wh y_{<j}) \leq 0} + B_j \mathbbm{1}_{r_j(x, \wh y_{<j}) > 0}  - \frac{1}{L} \inf_{\Tilde{r}} \left\{\sum_{j = 1}^{L} A_j \mathbbm{1}_{\Tilde{r}_j(x, \wh y_{<j}) \leq 0} + B_j \mathbbm{1}_{\Tilde{r}_j(x, \wh y_{<j}) > 0}\right\} \\
        &= \frac{1}{L}\sum_{j = 1}^{L} \left(A_j + B_j\right)\left( \frac{A_j}{A_j + B_j} \mathbbm{1}_{r_j(x, \wh y_{<j}) \leq 0} + \frac{B_j}{A_j + B_j} \mathbbm{1}_{r_j(x, \wh y_{<j}) > 0}\right) \\
        & \quad - \frac{1}{L} \inf_{\Tilde{r}} \left\{\sum_{j = 1}^{L} \left(A_j + B_j\right)\left(\frac{A_j}{A_j + B_j} \mathbbm{1}_{\Tilde{r}_j(x, \wh y_{<j}) \leq 0} + \frac{B_j}{A_j + B_j}\mathbbm{1}_{\Tilde{r}_j(x, \wh y_{<j}) > 0} \right)\right\} \\
        &\leq \frac{\left(\bar{l}  + \Bar{C}\right)}{L}\sum_{j = 1}^{L} \frac{A_j}{A_j + B_j} \mathbbm{1}_{r_j(x, \wh y_{<j}) \leq 0} + \frac{B_j}{A_j + B_j} \mathbbm{1}_{r_j(x, \wh y_{<j}) > 0} \\
        & \quad - \frac{\left(\bar{l} + \Bar{C}\right)}{L} \inf_{\Tilde{r}} \left\{\sum_{j = 1}^{L} \frac{A_j}{A_j + B_j} \mathbbm{1}_{\Tilde{r}_j(x, \wh y_{<j}) \leq 0} + \frac{B_j}{A_j + B_j}\mathbbm{1}_{\Tilde{r}_j(x, \wh y_{<j}) > 0}\right\} 
    \end{align*}

    Consider a feature space $\Tilde{\calX} = \calX \times \calV^*$ and output space $\Tilde{\calY} = \{-1, 1\}$. Let the distribution on $\Tilde{X}$ be uniform over $\{\Tilde{x}_1,  \cdots, \Tilde{x}_L\}$ where $\Tilde{x}_j = (x, \wh y_{<j})$ with $\frac{1}{L}$ mass on each point on the set. Let $P(\Tilde{Y} = - 1 \mid \Tilde{X} = \Tilde{x}_j) = \frac{A_j}{A_j + B_j}$ and $f$ be a measurable function such that $f(\Tilde{x}_j) = r_j(\Tilde{x}_j) = r_j(x, \wh y_{<j})$
    
    Under this distribution,
    \begin{align*}
        \calR_{\text{binary }0-1}(r) &= \frac{1}{L}\sum_{j = 1}^{L} \frac{A_j}{A_j + B_j} \mathbbm{1}_{r_j(x, \wh y_{<j}) \leq 0} + \frac{B_j}{A_j + B_j} \mathbbm{1}_{r_j(x, \wh y_{<j}) > 0} \\
        \calR_{\text{binary }0-1}^* &= \frac{1}{L}\inf_{\Tilde{r}}\left\{\sum_{j = 1}^{L} \frac{A_j}{A_j + B_j} \mathbbm{1}_{\Tilde{r}_j(x, \wh y_{<j}) \leq 0} + \frac{B_j}{A_j + B_j} \mathbbm{1}_{\Tilde{r}_j(x, \wh y_{<j}) > 0}\right\} \\
        \calR_{\text{binary }\phi}(r) &= \frac{1}{L}\sum_{j = 1}^{L} \frac{A_j}{A_j + B_j} \phi\left(r_j(x, \wh y_{<j})\right) + \frac{B_j}{A_j + B_j} \phi\left(-r_j(x, \wh y_{<j})\right) \\
        \calR_{\text{binary }\phi}^* &= \frac{1}{L}\inf_{\Tilde{r}}\left\{\sum_{j = 1}^{L} \frac{A_j}{A_j + B_j} \phi\left(\Tilde{r}_j(x, \wh y_{<j})\right) + \frac{B_j}{A_j + B_j} \phi\left(-\Tilde{r}_j(x, \wh y_{<j})\right)\right\}
    \end{align*}

    From the theorem statement,
    \begin{align*}
        &C_{\text{Token}}(r, h, x) - C_{\text{Token}}^*(h, x) \\
        &\leq \frac{\left(\bar{l} + \Bar{C}\right)}{L}\sum_{j = 1}^{L} \frac{A_j}{A_j + B_j} \mathbbm{1}_{r_j(x, \wh y_{<j}) \leq 0} + \frac{B_j}{A_j + B_j} \mathbbm{1}_{r_j(x, \wh y_{<j}) > 0} \\
        & \quad - \frac{\left(\bar{l} + \Bar{C}\right)}{L} \inf_{\Tilde{r}} \left\{\sum_{j = 1}^{L} \frac{A_j}{A_j + B_j} \mathbbm{1}_{\Tilde{r}_j(x, \wh y_{<j}) \leq 0} + \frac{B_j}{A_j + B_j}\mathbbm{1}_{\Tilde{r}_j(x, \wh y_{<j}) > 0}\right\} \\
        &\leq \left(\bar{l} + \Bar{C}\right) \Gamma\left(\sum_{j = 1}^{L} \frac{A_j\phi\left(r_j(x, \wh y_{<j})\right) + B_j\phi\left(-r_j(x, \wh y_{<j}) \right)}{L\left(A_j + B_j\right)} - \inf_{\Tilde{r}} \left\{\sum_{j = 1}^{L} \frac{A_j\phi\left(\Tilde{r}_j(x, \wh y_{<j})\right) + B_j\phi\left(-\Tilde{r}_j(x, \wh y_{<j})\right)}{L\left(A_j + B_j\right)}  \right\} \right) \\
        &\leq \left(\bar{l} +\Bar{C}\right) \Gamma\left(\sum_{j = 1}^{L} \frac{A_j\phi\left(r_j(x, \wh y_{<j})\right) + B_j\phi\left(-r_j(x, \wh y_{<j}) \right)}{L\Bar{c}} - \inf_{\Tilde{r}} \left\{\sum_{j = 1}^{L} \frac{A_j\phi\left(\Tilde{r}_j(x, \wh y_{<j})\right) + B_j\phi\left(-\Tilde{r}_j(x, \wh y_{<j})\right)}{L\Bar{c}}  \right\} \right) \\
        &= \left(\bar{l} + \Bar{C}\right)\Gamma\left(\frac{C_{\phi}(r, h, x) - C_{\phi}^*(h, x)}{\Bar{c}}\right)
    \end{align*}

    Let $\Tilde{\Gamma}(z) = \left(\bar{l} + \Bar{C}\right)\Gamma\left(\frac{z}{\Bar{c}}\right)$. Then, 
    \begin{align*}
        \calR_{\text{Token}}\left(r, h\right) - \calR_{\text{Token}}^*(h) &= \mathop{\mathbb{E}}_{x \sim \calP_X}\left[C_{\text{Token}}(r, h, x) - C_{\text{Token}}^*(h, x)\right] \\
        &\leq \mathop{\mathbb{E}}_{x \sim \calP_X}\left[\Tilde{\Gamma}\left(C_{\phi}(r, h, x) - C_{\phi}^*(h, x)\right)\right] \\
        &\overset{(a)}{\leq} \Tilde{\Gamma}\left(\mathop{\mathbb{E}}_{x \sim \calP_X}\left[C_{\phi}(r, h, x) - C_{\phi}^*(h, x)\right]\right) \\
        & \overset{(b)}{=}  \Tilde{\Gamma}\left(\calR_{\phi}\left(r. h\right) - \calR^*_{\phi}(h)\right)
    \end{align*}

    (a) by Jensen's inequality since $\Tilde{\Gamma}$ is concave, (b) since the infimum is taken over all measurable functions $\calR^*_{\phi}(h) = \mathop{\mathbb{E}}_{x \sim \calP_X}\left[C_{\phi}^*(h, x)\right]$

\end{proof}

\subsection{Proof of \Cref{th:consistency}} \label{sec:proofconsistency}

\begin{definition}[Classification Calibration]
    Let $\eta(x) = \mathbb{P}\left[Y = 1\mid X = x\right]$ and $\calR_{\phi}(f) = \mathbb{E}_{x \sim \calP_X}\left[C_{\phi}\left(\eta(x), f(x)\right)\right]$ where
    $C_{\phi}(\eta, t) = \eta\phi(t) + (1 - \eta)\phi(-t)$. $\phi$ is considered binary classification calibrated if $C_{\phi}^{-}(\eta) - C^*_{\phi}(\eta) > 0$ for all $\eta \neq \frac{1}{2}$ where $C_{\phi}^{-}(\eta) = \inf_{t: t\left(\eta - \frac{1}{2}\right) \leq 0} C_{\phi}(\eta, t)$ and $C_{\phi}^{-}(\eta) = \inf_{t \in \mathbbm{R}} C_{\phi}(\eta, t)$
\end{definition}

\begin{theorem}[Consistency]
    If $0 < \bar{c} \leq c_j(x, \wh y_{<j}, y) \leq \bar{C} < \infty$ for all $j$, $0 \leq l(y, \wh y^h_j) \leq \bar{l} < \infty$, and $\phi$ be binary classification calibrated, then, for a fixed $h$, $\calL^{\phi}$ is a Bayes consistent surrogate for $\calL^{\text{Token}}$
\end{theorem}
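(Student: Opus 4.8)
The plan is to deduce consistency directly from \Cref{th:upbound} together with the classical calibration analysis of \citet{bartlett2006convexity}. Recall from \Cref{sec:consistent} that, over the class of all measurable functions, it suffices to produce a nondecreasing function that is continuous at $0$ and vanishes there which upper bounds the excess $\calL^{\text{Token}}$-risk by a function of the excess $\calL^{\phi}$-risk; taking limits then yields the implication in \Cref{th:defconsistency}. Since the boundedness hypotheses ($0<\bar c \le c_j \le \bar C<\infty$ and $0\le l(y,\wh y^h_j)\le \bar l<\infty$) are exactly those of \Cref{th:upbound}, the only remaining task is to verify that lemma's premise: that classification calibration of $\phi$ supplies a nondecreasing concave $\Gamma:\R^+\to\R^+$ for which $\calR_{\text{binary }0-1}(f)-\calR^*_{\text{binary }0-1}\le \Gamma(\calR_{\text{binary }\phi}(f)-\calR^*_{\text{binary }\phi})$ holds for every distribution on $\calX\times\{-1,1\}$ and every measurable $f$.

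To obtain such a $\Gamma$, I would invoke the $\psi$-transform of \citet{bartlett2006convexity}: for a binary surrogate $\phi$ there is a convex, nondecreasing function $\psi$ with $\psi(0)=0$ such that $\psi\left(\calR_{\text{binary }0-1}(f)-\calR^*_{\text{binary }0-1}\right)\le \calR_{\text{binary }\phi}(f)-\calR^*_{\text{binary }\phi}$, and $\phi$ is classification calibrated precisely when $\psi(\theta)>0$ for all $\theta>0$. Under our calibration hypothesis $\psi$ is therefore strictly positive away from the origin and continuous, hence invertible on its range; setting $\Gamma=\psi^{-1}$ gives a nondecreasing function with $\Gamma(0)=0$ that is continuous at $0$, and because the inverse of a convex increasing function is concave, $\Gamma$ is concave. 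Rearranging the $\psi$-inequality then yields exactly the bound required by \Cref{th:upbound}.

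Combining the two pieces, \Cref{th:upbound} produces $\calR_{\text{Token}}(r,h)-\calR^*_{\text{Token}}(h)\le \Tilde\Gamma(\calR_\phi(r,h)-\calR^*_\phi(h))$ with $\Tilde\Gamma(z)=(\bar l+\bar C)\Gamma(z/\bar c)$, and this $\Tilde\Gamma$ inherits monotonicity, continuity at $0$, and $\Tilde\Gamma(0)=0$ (here $\bar c>0$ is what makes the rescaling $z/\bar c$ legitimate). Applying this to any sequence $r_n$ and letting $\calR_\phi(r_n,h)-\calR^*_\phi(h)\to 0$, continuity of $\Tilde\Gamma$ at $0$ forces $\calR_{\text{Token}}(r_n,h)-\calR^*_{\text{Token}}(h)\to 0$, which is the consistency claim of \Cref{th:defconsistency} for a fixed $h$. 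The main obstacle is the middle step: carefully justifying that classification calibration delivers a $\Gamma$ with \emph{both} the concavity needed to apply \Cref{th:upbound} and the continuity at $0$ needed for the limiting argument; this rests on the structural properties of the $\psi$-transform (convexity, monotonicity, and positivity off the origin) and on inverting it correctly, rather than on any new computation.
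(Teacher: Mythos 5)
Your proposal is correct and follows essentially the same route as the paper's proof: invoke the $\psi$-transform of \citet{bartlett2006convexity} to show classification calibration yields a nondecreasing concave $\Gamma=\psi^{-1}$ satisfying the premise of \Cref{th:upbound}, then apply that lemma and pass to the limit using continuity of $\Gamma$ at $0$ and $\Gamma(0)=0$. Your explicit justification that $\Gamma$ is concave and continuous at $0$ (via invertibility of the convex, increasing, calibrated $\psi$) is slightly more careful than the paper's one-line citation, and you correctly write the constant as $\bar l+\bar C$ where the paper's appendix has a minor typo ($1+\bar C$).
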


\begin{proof}
    Since $\phi$ is binary classification calibrated, we know that
    $\calR_{\text{binary }0-1}(f) - \calR^{*}_{\text{binary }0-1} \leq \psi^{-1}\left(\calR_{\text{binary }\phi}(f) - \calR^{ *}_{\text{binary }\phi}\right)$ where $\psi$ is a non-decreasing convex function, making $\psi^{-1}$ a non-decreasing concave function \cite{bartlett2006convexity}. By \Cref{th:upbound}, $\calR_{\text{Token}}\left(r, h\right) - \calR_{\text{Token}}^*(h) \leq \Tilde{\Gamma}\left(\calR_{\phi}\left(r, h\right) - \calR^*_{\phi}(h)\right)$ where $\Tilde{\Gamma}(z) = \left(1 + \Bar{C}\right)\psi^{-1}\left(\frac{z}{\bar{c}}\right)$. If $\lim_{n \to \infty} \calR_{\phi}\left(r_n, h\right) - \calR^*_{\phi}(h) = 0$, then
    \begin{align*}
        \lim_{n \to \infty} \calR_{\text{Token}}\left(r_n, h\right) - \calR_{\text{Token}}^*(h) &\leq \lim_{n \to \infty} \Tilde{\psi}\left(\calR_{\phi}\left(r_n, h\right) - \calR^*_{\phi}(h)\right) \\
        & \overset{(a)}{=}  \left(1 + \Bar{C}\right)\psi\left(\lim_{n \to \infty} \frac{\calR_{\phi}\left(r_n, h\right) - \calR^*_{\phi}(h)}{\bar{c}}\right) \\
        &= \left(1 + \Bar{C}\right)\psi(0) \\
        & \overset{(b)}{=} 0
    \end{align*}
    (a) by continuity of $\psi$ at $0$ and (b) since $\psi(0) = 0$ \cite{bartlett2006convexity}
\end{proof} 


\subsection{Proof of \Cref{th:genbounds}} \label{sec:proofgenbounds}

\begin{theorem}
    Suppose $r \in \calF$ which is composed of $\{\calF_j\}_{j = 1}^L$ such that $r_j \in \calF_j$, $c_j(x) \leq \bar{C}$, and $l(y, \wh y^h_j) \leq \bar{l} < \infty$. If $\phi$ is $\rho$-lipschitz continuous, then with probability $1 - \delta$, the following upper bound holds for the empirical risk minimizer $\wh r_{n}$ with respect to $\calL_{\phi}$ for a fixed $h$:
    \begin{equation*}
        \calR_{\phi}\left(\wh r_{n}, h\right) - \calR^*_{\phi}\left(\calF, h\right) \leq \frac{4\rho\sqrt{2\left(\bar{C}^2 + \bar{l}^2\right)}}{L} \sum_{j = 1}^L \mathfrak{R}_n\left(\calF_j\right) + (\bar{C} + \bar{l})\sqrt{\frac{2\log\left(\frac{2}{\delta}\right)}{n}}
    \end{equation*}
    where $\calR^*_{\phi}\left(\calF, h\right) = \inf_{r \in \calF} \calR_{\phi}\left(r, h\right)$ and $\mathfrak{R}_n\left(\calF_j\right)$ is the Rademacher Complexity of $\calF_j$
\end{theorem}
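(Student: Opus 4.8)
The plan is to follow the standard empirical-risk-minimization route: reduce the estimation error to a uniform deviation, control that deviation by concentration plus symmetrization, and then peel off the surrogate loss with a contraction argument. First, since $\wh r_n$ minimizes $\wh\calR_\phi$ over $\calF$ and $\calR^*_\phi(\calF)=\inf_{r\in\calF}\calR_\phi(r)$, the usual add-and-subtract argument gives
\begin{equation*}
\calR_\phi(\wh r_n,h) - \calR^*_\phi(\calF,h) \leq 2\sup_{r\in\calF}\bigl|\calR_\phi(r,h) - \wh\calR_\phi(r)\bigr|,
\end{equation*}
so it suffices to bound the right-hand side with probability $1-\delta$.

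Next I would control the uniform deviation. Under the teacher-forcing assumption each context $\wh y_{<j}$ is a deterministic function of $(x,y)$, so $\calL^\phi$ is a genuine function of the sample and $G=\{(x,y)\mapsto\calL^\phi(h,r,x,y):r\in\calF\}$ is a well-defined loss class. Because $0\le l\le\bar l$ and $c_j\le\bar C$ (with $\phi$ bounded on the admissible score range), the surrogate $\calL^\phi$ has range at most $\bar C+\bar l$, so replacing one of the $n$ samples perturbs $\sup_r|\calR_\phi-\wh\calR_\phi|$ by at most $(\bar C+\bar l)/n$. Applying McDiarmid's inequality separately to $\sup_r(\calR_\phi-\wh\calR_\phi)$ and $\sup_r(\wh\calR_\phi-\calR_\phi)$ with a $\delta/2$ union bound, and bounding the expected suprema by symmetrization, yields
\begin{equation*}
\sup_{r\in\calF}\bigl|\calR_\phi(r,h)-\wh\calR_\phi(r)\bigr| \leq 2\mathfrak{R}_n(G) + (\bar C+\bar l)\sqrt{\frac{\log(2/\delta)}{2n}}
\end{equation*}
with probability $1-\delta$. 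Multiplying by the factor $2$ from the first step already produces the claimed $(\bar C+\bar l)\sqrt{2\log(2/\delta)/n}$ concentration term.

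It remains to bound $\mathfrak{R}_n(G)$ by the per-token complexities. Since $r=(r_1,\dots,r_L)$ ranges over the product $\calF_1\times\cdots\times\calF_L$ and $\calL^\phi$ is the average $\frac1L\sum_j\calL^\phi_j$, the supremum inside the Rademacher average decouples across $j$, giving $\mathfrak{R}_n(G)\le\frac1L\sum_{j=1}^L\mathfrak{R}_n(\calL^\phi_j\circ\calF_j)$. For a fixed sample, write $\calL^\phi_j$ as the scalar map $\Psi_j(t)=l(y,\wh y^h_j)\,\phi(t)+c_j\,\phi(-t)$ evaluated at $t=r_j(x,\wh y_{<j})$. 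Its derivative obeys, by Cauchy--Schwarz and $|\phi'|\le\rho$,
\begin{equation*}
|\Psi_j'(t)| = \bigl|l(y,\wh y^h_j)\phi'(t) - c_j\phi'(-t)\bigr| \leq \sqrt{l(y,\wh y^h_j)^2 + c_j^2}\,\sqrt{\phi'(t)^2+\phi'(-t)^2} \leq \sqrt{2(\bar C^2+\bar l^2)}\,\rho,
\end{equation*}
so $\Psi_j$ is $\sqrt{2(\bar C^2+\bar l^2)}\,\rho$-Lipschitz. Talagrand's contraction lemma then gives $\mathfrak{R}_n(\calL^\phi_j\circ\calF_j)\le\sqrt{2(\bar C^2+\bar l^2)}\,\rho\,\mathfrak{R}_n(\calF_j)$, and combining the factor $2$ (ERM) with the factor $2$ (symmetrization) and this contraction produces the leading $\frac{4\rho\sqrt{2(\bar C^2+\bar l^2)}}{L}\sum_j\mathfrak{R}_n(\calF_j)$ term.

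The step I expect to be most delicate is the contraction. The per-token surrogate mixes $\phi(t)$ and $\phi(-t)$ with sample-dependent nonnegative weights $l(y,\wh y^h_j)$ and $c_j$, so the contracting function $\Psi_j$ changes from one training point to the next; one must therefore invoke the index-dependent form of Talagrand's lemma rather than the single-function version, and the Cauchy--Schwarz bound on $|\Psi_j'|$ is exactly what collapses the two Lipschitz pieces into the single factor $\sqrt{2(\bar C^2+\bar l^2)}$. A secondary point to handle carefully is the boundedness needed for McDiarmid: one needs $\phi$, and hence $\calL^\phi$, bounded on the admissible score range so that the bounded-difference constant is genuinely $(\bar C+\bar l)/n$; absent a range restriction on $\calF_j$ this would require an additional assumption.
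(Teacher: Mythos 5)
Your proposal is correct and follows the same overall architecture as the paper's proof: the factor-two ERM reduction to a uniform deviation, McDiarmid applied to the two one-sided suprema with bounded-difference constant $(\bar{C}+\bar{l})/n$ and a $\delta/2$ union bound, symmetrization to introduce the Rademacher complexity of the loss class, and a contraction step to pass to the per-position classes; the constants assemble identically. The one place you genuinely diverge is the contraction: the paper bounds the $\ell_2$-Lipschitz constant of the full loss $\calL^{\phi}$ as a function of the vector $(r_1,\ldots,r_L)$ and invokes Maurer's vector-contraction inequality, whereas you first decouple the sum over $j$ using the product structure $\calF=\calF_1\times\cdots\times\calF_L$ (which holds with equality, since the sup of a sum of terms each depending on a separate coordinate splits) and then apply the index-dependent scalar Talagrand contraction to $\Psi_j(t)=l(y,\wh y^h_j)\phi(t)+c_j\phi(-t)$, with Cauchy--Schwarz giving the Lipschitz constant $\rho\sqrt{2(\bar{C}^2+\bar{l}^2)}$. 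Your route is arguably cleaner: it avoids the vector-contraction machinery and sidesteps some loose inequalities in the paper's Lipschitz computation (the paper's squared-norm manipulations drop cross terms), while landing on exactly the same constant. Your closing caveat is also well taken and applies equally to the paper's proof: the bounded-difference constant $(\bar{C}+\bar{l})/n$ implicitly requires $\phi$ to be bounded by $1$ on the admissible score range (false for, e.g., the logistic loss on an unbounded range), so a range restriction on the $\calF_j$ or an explicit bound on $\phi$ is a tacit additional assumption in both arguments.
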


\begin{proof}

    Let $G_n = \sup_{r \in \calF} \left[\calR_{\phi}\left(r, h\right) - \wh \calR_{\phi}(r, h)\right]$, $G_n' = \sup_{r \in \calF} \left[\wh \calR_{\phi}(r, h) -\calR_{\phi}\left(r, h\right) \right]$, and\\
    $\calB = \{(x, y) \mapsto \calL^{\phi}\left(h, r, x, y\right) \ : r \in \calF \}$. 
    
    If $z^i = (x^i, y^i)$, $G_n = g(z^1, \cdots, z^n)$. Since $\left|g(z^1, \cdots, z^i, \cdots, z^n) - g(z^1, \cdots, z^{i'}, \cdots, z^n)\right| \leq \frac{\bar{C} + \bar{l}}{n}$, by applying McDiarmid's inequality

    \begin{align*}
        P\left(G_n > \frac{\epsilon}{2}\right) &\leq \exp\left\{\frac{-2\left(\frac{\epsilon}{2} -\mathop{\mathbb{E}}\left[G_n\right]\right)^2}{\sum_{i = 1}^n \frac{(\bar{C} + \bar{l})^2}{n^2}}\right\} \\
        &\leq \exp\left\{\frac{-2n\left(\frac{\epsilon}{2} - 2\mathfrak{R}_n\left(\calB\right)\right)^2}{(\bar{C} + \bar{l})^2}\right\} \\
        &\overset{\text{def}}{=} \frac{\delta}{2}
    \end{align*}

    Similarly, 
    \begin{align*}
         P\left(G_n' > \frac{\epsilon}{2}\right) &\leq \exp\left\{\frac{-2n\left(\frac{\epsilon}{2} - 2\mathfrak{R}_n\left(\calB\right)\right)^2}{(\bar{C} + \bar{l})^2}\right\} = \frac{\delta}{2}
    \end{align*}

    Let $\epsilon = 4\mathfrak{R}_n\left(\calB\right) + (\bar{C} + \bar{l})\sqrt{\frac{2\log\left(\frac{2}{\delta}\right)}{n}}$

    \begin{align*}
        P\left(\calR_{\phi}\left(\wh r_{n}, h\right) - \calR^*_{\phi}\left(\calF, h\right) > \epsilon\right) &\leq P\left(\sup_{r \in \calF} \left|\calR_{\phi}\left(r, h\right) - \wh \calR_{\phi}(r, h)\right| > \frac{\epsilon}{2}\right) \\
        &\leq P\left(G_n > \frac{\epsilon}{2}\right) + P\left(G_n' > \frac{\epsilon}{2}\right) \\
        &\leq \delta
    \end{align*}

    With at least probability $1 - \delta$, $\calR_{\phi}\left(\wh r_{n}, h\right) - \calR^*_{\phi}\left(\calF, h\right) \leq 4\mathfrak{R}_n\left(\calB\right) + (\bar{C} + \bar{l})\sqrt{\frac{2\log\left(\frac{2}{\delta}\right)}{n}}$.

    We can further bound $\mathfrak{R}_n\left(\calB\right)$ by $\mathfrak{R}_n\left(\calF_j\right)$ if we show that $\calL^{\phi}$ is Lipschitz continuous with respect to $r$

    \begin{align*}
        &\left\|\calL^{\phi}(h, r_1, x, y) - \calL^{\phi}(h, r_2, x, y)\right\|_2^2 \\ 
        &\leq \left\|\frac{1}{L} \sum_{j = 1}^L l\left(y, \wh y_{j}^h \right)\phi(r_{1,j}(x, \wh y_{<j}))  -  l\left(y, \wh y_{j}^h \right)\phi(r_{2, j}(x, \wh y_{<j})) \right\|_2^2 \\
        & + \left\|\frac{1}{L} \sum_{j = 1}^L c_j(x, \wh y_{<j})\phi(-r_{1, j}(x, \wh y_{<j})) - c_j(x, \wh y_{<j})\phi(-r_{2, j}(x, \wh y_{<j}))\right\|_2^2 \\
        &\leq \frac{\bar{l}^2}{L^2} \sum_{j = 1}^L \left\|\phi(r_{1, j}(x, \wh y_{<j})) - \phi(r_{2, j}(x, \wh y_{<j}))\right\|_2^2 +  \frac{\bar{C}^2}{L^2} \sum_{j = 1}^L \left\|\phi(-r_{1, j}(x, \wh y_{<j})) - \phi(-r_{2, j}(x, \wh y_{<j}))\right\|_2^2 \\
        &\leq \frac{\rho^2(\bar{C}^2 + \bar{l}^2)}{L^2} \sum_{j = 1}^L \left\|r_{1, j}(x, \wh y_{<j}) - r_{2, j}(x, \wh y_{<j})\right\|_2^2
    \end{align*}

    By the extension of the Talagrand's Contraction Lemma \cite{maurer2016vector},
    \begin{align*}
        \mathfrak{R}_n\left(\calB\right) \leq \frac{\rho\sqrt{2\left(\bar{C}^2 + \bar{l}^2\right)}}{L}\sum_{j = 1}^L\mathfrak{R}_n\left(\calF_j\right)
    \end{align*}
    
\end{proof}

\subsection{Proof of \Cref{th:genboundreal}} \label{sec:proofgenboundreal}

\begin{corollary}
   Suppose $r \in \calF$ which is composed of $\{\calF_j\}_{j = 1}^L$ such that $r_j \in \calF_j$, $0 < \bar{c} \leq c_j(x) \leq \bar{C}$, and $l(y, \wh y^h_j) \leq \bar{l} < \infty$. If $\phi$ is a $\rho$-lipschitz continuous binary classification-calibrated function, then with probability $1 - \delta$, the following upper bound on the excess risk with respect to $\calL^{\text{Token}}$ holds for the empirical risk minimizer $\wh r_{n}$ with respect to $\calL_{\phi}$ for a fixed $h$:
    \begin{equation*}
        \begin{split}
             \calR_{\text{Token}}\left(\wh r_{n}, h\right) - \calR^*_{\text{Token}}(h) \leq \Tilde{\Gamma}\left(\calB(\calF) + \calA_{\phi}\left(\calF, h\right)\right)    
        \end{split}
    \end{equation*}
    where $\Tilde{\Gamma}: \R^+ \to \R^+$, $\calB(\calF) = \frac{4\rho\sqrt{2\left(\bar{C}^2 + \bar{l}^2\right)}}{L} \sum_{j = 1}^L \mathfrak{R}_n\left(\calF_j\right) + (\bar{C} + \bar{l})\sqrt{\frac{2\log\left(\frac{2}{\delta}\right)}{n}}$, and $\calA_{\phi}\left(\calF, h\right) = \calR^*_{\phi}\left(\calF, h\right) -  \calR^*_{\phi}(h)$ is the approximation error.
\end{corollary}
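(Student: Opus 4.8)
The plan is to combine the calibration-to-target transfer inequality of \Cref{th:upbound} with the estimation-error bound of \Cref{th:genbounds} via the standard estimation/approximation decomposition of the excess surrogate risk. The single structural fact that makes this work is that the comparison function $\Tilde{\Gamma}$ appearing in \Cref{th:upbound} is non-decreasing; consequently it suffices to produce a high-probability upper bound on its \emph{argument}, namely the excess $\phi$-risk $\calR_{\phi}(\wh r_n, h) - \calR^*_{\phi}(h)$ measured against the Bayes-optimal measurable rejector, and then push that bound through $\Tilde{\Gamma}$.

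First I would decompose the excess surrogate risk of $\wh r_n$ relative to the unrestricted optimum as
\[
\calR_{\phi}(\wh r_n, h) - \calR^*_{\phi}(h) = \underbrace{\left[\calR_{\phi}(\wh r_n, h) - \calR^*_{\phi}(\calF, h)\right]}_{\text{estimation error}} + \underbrace{\left[\calR^*_{\phi}(\calF, h) - \calR^*_{\phi}(h)\right]}_{\calA_{\phi}(\calF, h)}.
\]
The second term is exactly the approximation error $\calA_{\phi}(\calF, h)$ by definition, capturing the irreducible gap from restricting the rejector to $\calF$. For the first term, \Cref{th:genbounds} applies under the stated boundedness of $c_j$ and $l(y,\wh y^h_j)$ and the $\rho$-Lipschitz continuity of $\phi$, giving, with probability at least $1-\delta$, the bound $\calR_{\phi}(\wh r_n, h) - \calR^*_{\phi}(\calF, h) \leq \calB(\calF)$. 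On this same event the two pieces combine to yield $\calR_{\phi}(\wh r_n, h) - \calR^*_{\phi}(h) \leq \calB(\calF) + \calA_{\phi}(\calF, h)$.

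Next I would invoke \Cref{th:upbound} with $r = \wh r_n$. This is legitimate because $\phi$ is binary classification-calibrated, hence it satisfies the required $\Gamma$-inequality with a non-decreasing concave $\Gamma$, and the cost/loss boundedness hypotheses of the corollary are precisely those of the lemma. This produces $\calR_{\text{Token}}(\wh r_n, h) - \calR^*_{\text{Token}}(h) \leq \Tilde{\Gamma}\!\left(\calR_{\phi}(\wh r_n, h) - \calR^*_{\phi}(h)\right)$. Since $\Tilde{\Gamma}(z) = (\bar{l}+\bar{C})\Gamma(z/\bar{c})$ is non-decreasing, I may substitute the high-probability bound on its argument to conclude, on the same event of probability $1-\delta$,
\[
\calR_{\text{Token}}(\wh r_n, h) - \calR^*_{\text{Token}}(h) \leq \Tilde{\Gamma}\!\left(\calB(\calF) + \calA_{\phi}(\calF, h)\right).
\]

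There is no genuine obstacle: the corollary is a direct composition of two results already established earlier in the paper. The only points requiring care are that the monotonicity of $\Tilde{\Gamma}$ is what licenses replacing the surrogate excess risk by its upper bound \emph{inside} $\Tilde{\Gamma}$, and that \Cref{th:genbounds} is the sole source of randomness, so the final statement inherits the same $1-\delta$ confidence with no additional union bound needed.
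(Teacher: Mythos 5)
Your proposal is correct and follows essentially the same route as the paper's own proof: the identical estimation/approximation decomposition of the excess surrogate risk, \Cref{th:genbounds} for the estimation term, and \Cref{th:upbound} (justified by classification calibration) composed via the monotonicity of $\Tilde{\Gamma}$. If anything, your write-up is slightly more careful than the paper's in explicitly flagging that monotonicity of $\Tilde{\Gamma}$ is what licenses substituting the bound inside its argument and that no extra union bound is needed.
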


\begin{proof}
    Excess risk can be decomposed into estimation error and approximation error
    \begin{align*}
        \calR_{\phi}\left(r, h\right) - \calR^*_{\phi}(h) = \underbrace{\calR_{\phi}\left(r, h\right) - \calR^*_{\phi}\left(\calF, h\right)}_{\text{Estimation Error}} + \underbrace{\calR^*_{\phi}\left(\calF, h\right) -  \calR^*_{\phi}(h)}_{\text{Approximation Error}}
    \end{align*}
    From \Cref{th:genbounds}, we have an upper bound on the estimation error and so using that, we have the following inequality
    \begin{align*}
        \calR_{\phi}\left(r, h\right) - \calR^*_{\phi}(h) \leq \frac{4\rho\sqrt{2\left(\bar{C}^2 + \bar{l}^2\right)}}{L} \sum_{j = 1}^L \mathfrak{R}_n\left(\calF_j\right) + (\bar{C} + \bar{l})\sqrt{\frac{2\log\left(\frac{2}{\delta}\right)}{n}} + \calA_{\phi}\left(\calF, h\right)
    \end{align*}
    Since $\phi$ is binary classification calibrated, we know that
    $$\calR_{\text{binary }0-1}(f) - \calR^{*}_{\text{binary }0-1} \leq \psi^{-1}\left(\calR_{\text{binary }\phi}(f) - \calR^{ *}_{\text{binary }\phi}\right)$$ where $\psi$ is a non-decreasing convex function, making $\psi^{-1}$ a non-decreasing concave function. The proof is complete by using \Cref{th:upbound}
\end{proof}

\section{Proofs for \Cref{sec:defpointclass}}

\subsection{Proof of \Cref{th:onetimerewrite}}

\begin{lemma}\label{th:onetimerewrite}
    For any $h$, $r$, $x \in \calX$, $y \in \calY$, the following equality holds 
    \begin{align*}
        \calL^{\text{OneTime}}\left(h, r, x, y\right) &=  \sum_{j \in \calJ} \left(c_{\text{max}} - l(y, \widehat{y}_{\leq j}) - \Tilde{c}_j(x, \widehat{y}_{<j}, y) \right)\mathbbm{1}_{r\left(x\right) \neq j}   -(|\calJ| - 1)c_{\text{max}} + \sum_{j \in \calJ} l(y, \widehat{y}_{\leq j}) + \sum_{j \in \calJ} \Tilde{c}_j(x, \widehat{y}_{<j}, y)
    \end{align*}
\end{lemma}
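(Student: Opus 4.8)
The plan is to prove this as a pointwise algebraic identity in the indicator variables, valid for every fixed $(h, r, x, y)$; there is no probability or analysis involved. The two ingredients are the complementarity relation $\mathbbm{1}_{r(x) \neq j} = 1 - \mathbbm{1}_{r(x) = j}$ and the normalization $\sum_{j \in \calJ} \mathbbm{1}_{r(x) = j} = 1$, which holds because the one-time rejector $r(x) = \argmax_{j \in \calJ} g_j(x)$ always returns exactly one element of $\calJ$.

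First I would abbreviate the per-position loss-plus-cost as $\Lambda_j := l(y, \widehat{y}_{\leq j}) + \Tilde{c}_j(x, \widehat{y}_{<j}, y)$, so the weight multiplying each $\mathbbm{1}_{r(x) \neq j}$ on the right-hand side is $c_{\text{max}} - \Lambda_j$. Substituting $\mathbbm{1}_{r(x) \neq j} = 1 - \mathbbm{1}_{r(x)=j}$ into the leading sum and distributing gives $\sum_{j \in \calJ}(c_{\text{max}} - \Lambda_j) - \sum_{j \in \calJ}(c_{\text{max}} - \Lambda_j)\mathbbm{1}_{r(x)=j}$. Applying $\sum_{j} \mathbbm{1}_{r(x)=j} = 1$ to the $c_{\text{max}}$ pieces, this equals $|\calJ| c_{\text{max}} - \sum_j \Lambda_j - c_{\text{max}} + \sum_j \Lambda_j \mathbbm{1}_{r(x)=j}$, i.e. $(|\calJ|-1)c_{\text{max}} - \sum_j \Lambda_j + \sum_j \Lambda_j \mathbbm{1}_{r(x)=j}$.

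The final step is bookkeeping: the explicit correction terms $-(|\calJ|-1)c_{\text{max}}$ and $\sum_j l(y, \widehat{y}_{\leq j}) + \sum_j \Tilde{c}_j = \sum_j \Lambda_j$ on the right-hand side cancel the $(|\calJ|-1)c_{\text{max}}$ and the $-\sum_j \Lambda_j$ produced above, leaving precisely $\sum_{j \in \calJ} \Lambda_j \mathbbm{1}_{r(x)=j}$, which is the definition of $\calL^{\text{OneTime}}$ in \Cref{eq:genonetimeloss}. Reading the resulting chain of equalities backward yields the stated identity.

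The computation itself is routine; the point I would handle most carefully is the normalization $\sum_{j \in \calJ} \mathbbm{1}_{r(x)=j} = 1$, since the telescoping of the $c_{\text{max}}$ constants collapses only if $r(x)$ lands on exactly one index of $\calJ$ — a property guaranteed by the $\argmax$ form of the rejector but not by an arbitrary measurable $r$. I would also reconcile the $\leq$-versus-$<$ usage between the loss term in \Cref{eq:genonetimeloss} and the $l(y, \widehat{y}_{\leq j})$ appearing in this statement, ensuring the same $\Lambda_j$ is used throughout so that the two $\sum_j \Lambda_j$ contributions cancel exactly; this is the only place where a notational mismatch could break the identity.
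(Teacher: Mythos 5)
Your proposal is correct and is essentially the paper's own proof run in reverse: both rest on the complementarity $\mathbbm{1}_{r(x)\neq j}=1-\mathbbm{1}_{r(x)=j}$ together with the fact that $r(x)$ selects exactly one index of $\calJ$ (the paper uses this to rewrite $(|\calJ|-1)c_{\text{max}}$ as $\sum_{j\in\calJ}c_{\text{max}}\mathbbm{1}_{r(x)\neq j}$, which is the same normalization you invoke). Your two cautionary remarks are well placed — the $\leq$-versus-$<$ mismatch is a genuine notational inconsistency between \Cref{eq:genonetimeloss} and the lemma statement that the paper itself glosses over — but neither affects the validity of the identity.
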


\begin{proof}

\begin{align*}
    &\calL^{\text{OneTime}}\left(h, r, x, y\right) \\
    &= \sum_{j \in \calJ} \left( l(y, \widehat{y}_{\leq j}) + \Tilde{c}_j(x, \widehat{y}_{<j}, y) \right)\mathbbm{1}_{r\left(x\right) = j} \\
    &= \sum_{j \in \calJ} \left( l(y, \widehat{y}_{\leq j}) + \Tilde{c}_j(x, \widehat{y}_{<j}, y) \right) - \sum_{j \in \calJ} \left( l(y, \widehat{y}_{\leq j}) + \Tilde{c}_j(x, \widehat{y}_{<j}, y) \right)\mathbbm{1}_{r\left(x\right) \neq j} \\
    &= \sum_{j \in \calJ} \left( l(y, \widehat{y}_{\leq j}) + \Tilde{c}_j(x, \widehat{y}_{<j}, y) \right) - \sum_{j \in \calJ} \left( l(y, \widehat{y}_{\leq j}) + \Tilde{c}_j(x, \widehat{y}_{<j}, y) \right)\mathbbm{1}_{r\left(x\right) \neq j} + (|\calJ| - 1)c_{\text{max}} - (|\calJ| - 1)c_{\text{max}} \\
    &= \sum_{j \in \calJ} \left( l(y, \widehat{y}_{\leq j}) + \Tilde{c}_j(x, \widehat{y}_{<j}, y) \right) - \sum_{j \in \calJ} \left( l(y, \widehat{y}_{\leq j}) + \Tilde{c}_j(x, \widehat{y}_{<j}, y) \right)\mathbbm{1}_{r\left(x\right) \neq j} + \sum_{j \in \calJ}c_{\text{max}}\mathbbm{1}_{r(x) \neq j} - (|\calJ| - 1)c_{\text{max}} \\
    &= \sum_{j \in \calJ} \left(c_{\text{max}} - l(y, \widehat{y}_{\leq j}) - \Tilde{c}_j(x, \widehat{y}_{<j}, y) \right)\mathbbm{1}_{r\left(x\right) \neq j}   -(|\calJ| - 1)c_{\text{max}} + \sum_{j \in \calJ} l(y, \widehat{y}_{\leq j}) + \sum_{j \in \calJ} \Tilde{c}_j(x, \widehat{y}_{<j}, y)
\end{align*}
\end{proof}

\begin{proposition} \label{th:onetimeconvexupperbound}
    $\calL^{\psi}(h, r, x, y)$ is a convex upper bound of $\calL^{\text{OneTime}}(h, r, x, y)$ for $\psi \equiv \psi_{\text{ce}}$ and $\psi \equiv \psi_{\text{mae}}$ upto some scale $\gamma$ if $c_{\max} > \frac{1}{|\mathcal{J}|- 1}\sum_{j \in \calJ} \left(l(y, \widehat{y}_{\leq j}) + \Tilde{c}_j(x, \widehat{y}_{<j}, y)\right)$
\end{proposition}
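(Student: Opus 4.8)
The plan is to prove the two claims---convexity and the $\gamma$-scaled upper bound---separately, using \Cref{th:onetimerewrite} to handle the bound. Throughout I write $w_j := c_{\text{max}} - l(y, \widehat{y}_{\leq j}) - \Tilde{c}_j(x, \widehat{y}_{<j}, y)$, and I first record that every $w_j \geq 0$: this is immediate from $c_{\text{max}} = \max_{k \in \calJ}\bigl(l(y,\widehat{y}_{\leq k}) + \Tilde{c}_k(x,\widehat{y}_{<k},y)\bigr)$. Hence $\calL^{\psi}(h,r,x,y) = \sum_{j\in\calJ} w_j\,\psi(\g(x),j)$ is a nonnegatively weighted sum of the per-position surrogate terms, a structure both halves of the argument exploit.

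For the upper bound I would invoke \Cref{th:onetimerewrite} to write
\begin{equation*}
  \calL^{\text{OneTime}}(h,r,x,y) = \sum_{j\in\calJ} w_j\,\mathbbm{1}_{r(x)\neq j} \;-\;(|\calJ|-1)c_{\text{max}} + \sum_{j\in\calJ}\bigl(l(y,\widehat{y}_{\leq j}) + \Tilde{c}_j(x,\widehat{y}_{<j},y)\bigr).
\end{equation*}
The hypothesis $c_{\text{max}} > \tfrac{1}{|\calJ|-1}\sum_{j\in\calJ}\bigl(l(y,\widehat{y}_{\leq j}) + \Tilde{c}_j(x,\widehat{y}_{<j},y)\bigr)$ is precisely the statement that the trailing constant is negative, so discarding it only increases the right-hand side, giving $\calL^{\text{OneTime}}(h,r,x,y) \leq \sum_{j\in\calJ} w_j\,\mathbbm{1}_{r(x)\neq j}$. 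It then suffices to dominate each indicator pointwise. Since $r(x)=\argmax_{k\in\calJ} g_k(x)$, whenever $r(x)\neq j$ there is some $k\neq j$ with $g_k(x)\geq g_j(x)$, which forces $\mathrm{softmax}_j(\g(x)) = e^{g_j(x)}/\sum_{k} e^{g_k(x)} \leq \tfrac12$; therefore $\psi_{\text{ce}}(\g(x),j) = -\log \mathrm{softmax}_j(\g(x)) \geq \log 2$ and $\psi_{\text{mae}}(\g(x),j) = 1-\mathrm{softmax}_j(\g(x)) \geq \tfrac12$. Taking $\gamma = 1/\log 2$ for $\psi_{\text{ce}}$ and $\gamma = 2$ for $\psi_{\text{mae}}$ yields $\gamma\,\psi(\g(x),j) \geq \mathbbm{1}_{r(x)\neq j}$ (the case $r(x)=j$ being trivial because $\psi\geq 0$). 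Multiplying by $w_j\geq 0$ and summing gives $\gamma\,\calL^{\psi} \geq \sum_{j} w_j\,\mathbbm{1}_{r(x)\neq j} \geq \calL^{\text{OneTime}}$, the desired bound.

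For convexity, $\psi_{\text{ce}}$ is straightforward: writing $\psi_{\text{ce}}(\g(x),j) = \log\sum_{k\in\calJ} e^{g_k(x)} - g_j(x)$ exhibits it as a convex log-sum-exp minus an affine term, so it is convex in $\g$, and the nonnegative combination $\calL^{\psi_{\text{ce}}}$ is convex. The main obstacle is $\psi_{\text{mae}}$: as a function of the logits $\g$, the map $1-\mathrm{softmax}_j(\g)$ is \emph{not} convex (a single-coordinate slice is a reflected logistic sigmoid, convex on only part of its domain), so convexity cannot be read off termwise. I would resolve this by taking the optimization variable to be the predicted distribution $\p = \mathrm{softmax}(\g(x))$ over $\calJ$ rather than the raw logits: on the simplex $\psi_{\text{mae}}(\p,j) = 1-p_j$ is affine and $\psi_{\text{ce}}(\p,j) = -\log p_j$ is convex, so both nonnegatively weighted sums are convex in $\p$. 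Making this domain choice precise---and reconciling it with the $\argmax$ decision rule used elsewhere---is the step I expect to require the most care.
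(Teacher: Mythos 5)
Your argument for the $\gamma$-scaled upper bound is essentially the paper's own: both invoke \Cref{th:onetimerewrite}, discard the trailing constant using the hypothesis on $c_{\max}$ (which makes that constant negative), and then dominate $\mathbbm{1}_{r(x)\neq j}$ pointwise by $\gamma\,\psi(\g(x),j)$ via the observation that $r(x)\neq j$ forces the $j$-th softmax probability below $\tfrac12$, hence $\psi_{\text{ce}}\geq\log 2$ and $\psi_{\text{mae}}\geq\tfrac12$. (Your constant $\gamma=1/\log 2$ for $\psi_{\text{ce}}$ is the correct one; the paper writes $\gamma_{\text{ce}}=\log 0.5$, which appears to be a typo. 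Your explicit remark that $w_j\geq 0$ is needed to multiply the pointwise inequality through is also a step the paper leaves implicit.) Where you genuinely depart from the paper is the convexity claim: the paper simply asserts that $\calL^{\psi}$ is convex ``as $\psi_{\text{ce}}$ and $\psi_{\text{mae}}$ are convex,'' whereas you correctly point out that $1-\mathrm{softmax}_j(\g)$ is \emph{not} convex as a function of the logits $\g$, so the assertion cannot be read off termwise for $\psi_{\text{mae}}$. Your fix---reparametrizing the loss over the probability simplex, where $1-p_j$ is affine and $-\log p_j$ is convex---is a legitimate repair, at the cost of changing the optimization variable away from the scoring functions $g_j$ that the rest of the paper (the $\argmax$ decision rule, the Rademacher analysis over $\calG_j$) is built on; as you note, reconciling the two parametrizations is the part that needs care, and it is a gap in the paper's statement as written rather than in your argument.
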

\begin{proof}

It is easy to see that $\calL^{\psi}(h, r, x, y)$ is convex with respect to $r$ as $\psi_{\text{ce}}(\cdot)$ and $\psi_{\text{mae}}(\cdot)$ are convex. 

From \Cref{th:onetimerewrite}, we know that

\begin{align*}
    \calL^{\text{OneTime}}\left(h, r, x, y\right) &=  \sum_{j \in \calJ} \left(c_{\text{max}} - l(y, \widehat{y}_{\leq j}) - \Tilde{c}_j(x, \widehat{y}_{<j}, y) \right)\1_{r\left(x\right) \neq j}   -(|\calJ| - 1)c_{\text{max}} + \sum_{j \in \calJ} l(y, \widehat{y}_{\leq j}) + \sum_{j \in \calJ} \Tilde{c}_j(x, \widehat{y}_{<j}, y) \\
    &\leq \sum_{j \in \calJ} \left(c_{\text{max}} - l(y, \widehat{y}_{\leq j}) - \Tilde{c}_j(x, \widehat{y}_{<j}, y) \right)\1_{r\left(x\right) \neq j}  
\end{align*}

If $r(x) \neq j$, then $\psi_{\text{ce}}(\g(x), j) \geq \log 2$ and $\psi_{\text{mae}}(\g(x), j) \geq 0.5$. When $r(x) = j$, $\psi_{\text{ce}}(\g(x), j) \geq 0$ and $\psi_{\text{mae}}(\g(x), j) \geq 0$. Therefore, $\gamma_{\text{ce}}\psi_{\text{ce}}(\g(x), j) \geq \1_{r\left(x\right) \neq j}$ and $\gamma_{\text{mae}}\psi_{\text{mae}}(\g(x), j) \geq \1_{r\left(x\right) \neq j}$ where $\gamma_{\text{ce}} = \log 0.5$ and $\gamma_{\text{mae}} = 2$. So then,

\begin{align*}
    \calL^{\text{OneTime}}\left(h, r, x, y\right) &\leq \sum_{j \in \calJ} \left(c_{\text{max}} - l(y, \widehat{y}_{\leq j}) - \Tilde{c}_j(x, \widehat{y}_{<j}, y) \right)\1_{r\left(x\right) \neq j}  \\
    &\leq \gamma\sum_{j \in \calJ} \left(c_{\text{max}} - l(y, \widehat{y}_{\leq j}) - \Tilde{c}_j(x, \widehat{y}_{<j}, y) \right)\psi(\g(x), j) 
\end{align*}

Therefore, 
$\gamma\calL^{\psi}(h, r, x, y) \geq \calL^{\text{OneTime}}(h, r, x, y)$

\end{proof}

\subsection{Proof of \Cref{th:onetimeconsistencybound}}

\begin{lemma} \label{th:onetimeconsistencybound}
    Let $\Tilde{c}_j(x, \wh y_{<j}, y) \leq \bar{C} < \infty$, $l(y, \wh y^h_j) \leq \bar{l} < \infty$ for all $j \in \calJ$, and $\psi$ be a multiclass surrogate loss that satisfies the following inequality for any distribution over $\calX \times \calJ$ and any measurable function $f$:
    \begin{align*}
        \calR_{\text{multi }0-1}(f) - \calR^{*}_{\text{multi }0-1} \leq \Gamma\left(\calR_{\text{multi }\psi}(f) - \calR^{ *}_{\text{multi }\psi}\right)
    \end{align*}
    where $\calR_{\text{multi }0-1}$ is the multiclass 0-1 risk and $\Gamma : \R^+ \to \R^+$ is a non-decreasing concave function. If there exists $i, j \in \calJ$ such that $\left|l(y, \wh y^h_i) + \Tilde{c}_i(x, \wh y_{<i}, y) - l(y, \wh y^h_j) - \Tilde{c}_i(x, \wh y_{<i}, y)\right| > \Delta > 0$, then for any $\g$ over all measurable functions and for any $h$, the following inequality holds for any distribution over $\calX \times \calY$:
    \begin{align*}
        \calR_{\text{OneTime}}\left(r, h\right) - \calR_{\text{OneTime}}^*(h) \leq \Tilde{\Gamma}\left(\calR_{\psi}\left(r, h\right) - \calR^*_{\psi}(h)\right)
    \end{align*}
    where $\Tilde{\Gamma}(z) = (|\calJ| - 1)\left(\bar{C} + \bar{l}\right)\Gamma\left(\frac{z}{\Delta}\right)$, $\calR^*_{\text{OneTime}}(h) = \inf_{r \in \calM} \calR_{\text{OneTime}}\left(r, h\right)$, and $\calR^*_{\psi}(h) = \inf_{r} \calR_{\phi}\left(r, h\right)$
\end{lemma}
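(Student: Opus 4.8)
The plan is to mirror the proof of \Cref{th:upbound}, replacing the binary reduction used there with a multiclass reduction. I would first pass to the pointwise risks $C_{\text{OneTime}}(r,h,x) = \E_{y \sim \calP_{Y\mid X=x}}[\calL^{\text{OneTime}}(h,r,x,y)]$ and $C_\psi(r,h,x) = \E_{y\sim\calP_{Y\mid X=x}}[\calL^\psi(h,r,x,y)]$, together with their infima $C^*_{\text{OneTime}}(h,x)$ and $C^*_\psi(h,x)$ over measurable rejectors. Invoking the rewriting in \Cref{th:onetimerewrite}, the $r$-independent terms cancel in the excess risk, so that with the shorthand $V_j := l(y,\wh y_{\le j}) + \Tilde c_j(x,\wh y_{<j},y)$ and $W_j := \E_{y\mid x}[c_{\text{max}} - V_j] \ge 0$ one obtains $C_{\text{OneTime}}(r,h,x) - C^*_{\text{OneTime}}(h,x) = \sum_{j\in\calJ} W_j \1_{r(x)\neq j} - \inf_{\tilde r}\sum_{j\in\calJ} W_j \1_{\tilde r(x)\neq j}$ and, since $\psi$ does not depend on $y$, $C_\psi(r,h,x) = \sum_{j\in\calJ} W_j\,\psi(\g(x),j)$.

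Next I would normalize by $S := \sum_{j\in\calJ} W_j$ and read off a genuine multiclass problem: take the feature marginal to be the point mass at $x$ and set $\mathbb{P}(\tilde Y = j\mid x) = W_j/S$ over labels $\calJ$. Under this distribution the bracket above is exactly $S$ times the excess multiclass $0$--$1$ risk of $r$, and likewise $C_\psi(r,h,x) - C^*_\psi(h,x) = S\,(\calR_{\text{multi }\psi}(r) - \calR^*_{\text{multi }\psi})$. Applying the hypothesized calibration inequality for $\psi$ to this distribution gives the pointwise bound $C_{\text{OneTime}}(r,h,x) - C^*_{\text{OneTime}}(h,x) \le S\,\Gamma\big((C_\psi(r,h,x) - C^*_\psi(h,x))/S\big)$.

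It remains to strip the data-dependent factor $S$ and replace it by constants. For the prefactor, since exactly one summand of $\sum_{j}(c_{\text{max}} - V_j)$ vanishes pointwise (the one attaining the maximum) and each remaining summand is at most $c_{\text{max}} \le \bar l + \bar C$, I get $S \le (|\calJ|-1)(\bar C + \bar l)$, and because $\Gamma \ge 0$ this bounds the prefactor. For the argument inside $\Gamma$, the separation hypothesis enters: for every $(x,y)$ there is a pair with total-cost gap exceeding $\Delta$, hence $\sum_j(c_{\text{max}} - V_j) \ge \max_j V_j - \min_j V_j > \Delta$, and taking $\E_{y\mid x}$ yields $S \ge \Delta$; monotonicity of $\Gamma$ then lets me replace $1/S$ by $1/\Delta$. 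Chaining the two crude replacements produces $C_{\text{OneTime}} - C^*_{\text{OneTime}} \le \Tilde\Gamma(C_\psi - C^*_\psi)$ with $\Tilde\Gamma(z) = (|\calJ|-1)(\bar C + \bar l)\,\Gamma(z/\Delta)$. Finally I would take $\E_{x\sim\calP_X}$ of both sides and push the expectation inside via Jensen's inequality, using that $\Tilde\Gamma$ is concave (a positive multiple of the concave $\Gamma$ precomposed with a linear map), to arrive at the stated $\calR_{\text{OneTime}}(r,h) - \calR^*_{\text{OneTime}}(h) \le \Tilde\Gamma(\calR_\psi(r,h) - \calR^*_\psi(h))$.

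The main obstacle is the lower bound $S \ge \Delta$: this is precisely where the non-degeneracy hypothesis is indispensable, since without a separation between the per-position total costs the weights $W_j$ could all collapse to zero, $S$ might vanish, and the division by $S$ inside $\Gamma$ would become meaningless. Care is also needed because $c_{\text{max}}$ depends on $y$, so both the ``one vanishing summand'' argument for the upper bound on $S$ and the gap argument for the lower bound must be carried out pointwise in $y$ before taking the conditional expectation.
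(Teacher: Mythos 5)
Your proposal is correct and follows essentially the same route as the paper's proof: the rewrite via \Cref{th:onetimerewrite}, the normalization by $S=\sum_{j\in\calJ}W_j$ to manufacture a point-mass multiclass distribution with label probabilities $W_j/S$, the application of the hypothesized calibration inequality, the bounds $\Delta \leq S \leq (|\calJ|-1)(\bar{C}+\bar{l})$, and the final Jensen step. Your treatment of the bound $S\ge\Delta$ (arguing pointwise in $y$ via $c_{\text{max}}-\min_j V_j \ge \max_j V_j - \min_j V_j > \Delta$ before taking the conditional expectation) is in fact slightly more explicit than the paper's.
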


\begin{proof}
Let $C_{\text{OneTime}}(r, h, x) = \mathop{\mathbb{E}}_{y \sim \calP_{Y \mid X = x}}\left[\calL^{\text{OneTime}}\left(h, r, x, y\right)\right]$ be the pointwise 
$\calL^{\text{OneTime}}$- risk for a given $x$ and predictor $h$ and let $C_{\text{OneTime}}^*(h, x) = \inf_{r} C_{\text{OneTime}}(r, h, x)$ be the optimal pointwise $\calL^{\text{OneTime}}$- risk for a given $x$ and predictor $h$. Similarly, let $C_{\psi}(r, h, x) = \mathop{\mathbb{E}}_{y \sim \calP_{Y \mid X = x}}\left[\calL^{\text{OneTime}}_{\psi}\left(h, r, x, y\right)\right]$ be the pointwise $\calL^{\psi}$- risk for a given $x$ and predictor $h$ and $C_{\psi}^*(h, x) = \inf_{r} C_{\psi}(r, h, x)$ be the optimal pointwise $\calL^{\psi}$- risk for a given $x$ and predictor $h$. For any measurable function $\g$ 
    \begin{align*}
        &C_{\text{OneTime}}(r, h, x) - C_{\text{OneTime}}^*(h, x) \\
        &= \mathop{\mathbb{E}}_{y \sim \calP_{Y \mid X = x}}\left[\sum_{j \in \calJ} \left( l(y, \widehat{y}_{\leq j}) + \Tilde{c}_j(x, \widehat{y}_{<j}, y) \right)\mathbbm{1}_{r\left(x\right) = j} \right]\\
        &\quad - \inf_{\Tilde{r}} \mathop{\mathbb{E}}_{y \sim \calP_{Y \mid X = x}}\left[\sum_{j \in \calJ} \left( l(y, \widehat{y}_{\leq j}) + \Tilde{c}_j(x, \widehat{y}_{<j}, y) \right)\mathbbm{1}_{\Tilde{r}\left(x\right) = j} \right] \\
        &\overset{(a)}{=} \mathop{\mathbb{E}}_{y \sim \calP_{Y \mid X = x}}\left[\sum_{j \in \calJ} \left(c_{\text{max}} - l(y, \widehat{y}_{\leq j}) - \Tilde{c}_j(x, \widehat{y}_{<j}, y) \right)\mathbbm{1}_{r\left(x\right) \neq j}   -(|\calJ| - 1)c_{\text{max}} + \sum_{j \in \calJ} l(y, \widehat{y}_{\leq j}) + \sum_{j \in \calJ} \Tilde{c}_j(x, \widehat{y}_{<j}, y)\right] \\
        &\quad - \inf_{\Tilde{r}} \mathop{\mathbb{E}}_{y \sim \calP_{Y \mid X = x}}\left[\sum_{j \in \calJ} \left(c_{\text{max}} - l(y, \widehat{y}_{\leq j}) - \Tilde{c}_j(x, \widehat{y}_{<j}, y) \right)\mathbbm{1}_{\Tilde{r}\left(x\right) \neq j}   -(|\calJ| - 1)c_{\text{max}} + \sum_{j \in \calJ} l(y, \widehat{y}_{\leq j}) + \sum_{j \in \calJ} \Tilde{c}_j(x, \widehat{y}_{<j}, y)\right] \\
        &= \sum_{j \in \calJ} \mathop{\mathbb{E}}_{y \sim \calP_{Y \mid X = x}}\left[c_{\text{max}} - l(y, \widehat{y}_{\leq j}) - \Tilde{c}_j(x, \widehat{y}_{<j}, y)\right] \mathbbm{1}_{r(x) \neq j}  - \inf_{\Tilde{r}} \sum_{j \in \calJ} \mathop{\mathbb{E}}_{y \sim \calP_{Y \mid X = x}} \left[c_{\text{max}} - l(y, \widehat{y}_{\leq j}) - \Tilde{c}_j(x, \widehat{y}_{<j}, y)\right]\mathbbm{1}_{\Tilde{r}(x) \neq j}\\
    \end{align*}
    (a) By \Cref{th:onetimerewrite}. Let $A_j = \mathop{\mathbb{E}}_{y \sim \calP_{Y \mid X = x}} \left[c_{\text{max}} - l(y, \widehat{y}_{\leq j}) - \Tilde{c}_j(x, \widehat{y}_{<j}, y)\right]$

    \begin{align*}
        &C_{\text{OneTime}}(r, h, x) - C_{\text{OneTime}}^*(h, x) \\
        &= \sum_{j \in \calJ} A_j \mathbbm{1}_{r(x) \neq j}  - \inf_{\Tilde{r}} \sum_{j \in \calJ} A_j\mathbbm{1}_{\Tilde{r}(x) \neq j}\\
        &= \sum_{k \in \calJ} A_k \left(\sum_{j \in \calJ} \frac{A_j}{\sum_{k \in \calJ} A_k}\mathbbm{1}_{r(x) \neq j}  - \inf_{\Tilde{r}}  \sum_{j \in \calJ} \frac{A_j}{\sum_{k \in \calJ} A_k}\mathbbm{1}_{\Tilde{r}(x) \neq j} \right) 
    \end{align*}

     Consider a degenerate distribution on $\Tilde{X}$ with a pmf of $1$ on $x$. Let $P(\Tilde{Y} = j \mid \Tilde{X} = x) = \frac{A_j}{\sum_{k \in \calJ} A_k}$ and $f$ be a measurable function such that $f(x) = r(x)$. Under this distribution,
    \begin{align*}
        &\calR_{\text{multi }0-1}(r) = \sum_{j \in \calJ} \frac{A_j}{\sum_{k \in \calJ} A_k}\mathbbm{1}_{r(x) \neq j} &\calR_{\text{multi }0-1}^* = \inf_{\Tilde{r}} \sum_{j \in \calJ} \frac{A_j}{\sum_{k \in \calJ} A_k}\mathbbm{1}_{\Tilde{r}(x) \neq j} \\
        &\calR_{\text{multi }\psi}(r) = \sum_{j \in \calJ} \frac{A_j}{\sum_{k \in \calJ} A_k}\psi\left(\g(x), j\right)  &\calR_{\text{multi }\psi}^* = \inf_{\Tilde{\g}} \sum_{j \in \calJ} \frac{A_j}{\sum_{k \in \calJ} A_k}\psi\left(\g(x), j\right)
    \end{align*}
    From the theorem statement,
    \begin{align*}
        &C_{\text{OneTime}}(r, h, x) - C_{\text{OneTime}}^*(h, x)  \\
        &= \sum_{k \in \calJ} A_k \left(\sum_{j \in \calJ} \frac{A_j}{\sum_{k \in \calJ} A_k}\mathbbm{1}_{r(x) \neq j}  - \inf_{\Tilde{r}}  \sum_{j \in \calJ} \frac{A_j}{\sum_{k \in \calJ} A_k}\mathbbm{1}_{\Tilde{r}(x) \neq j} \right)   \\
        &\leq \sum_{k \in \calJ} A_k \Gamma \left( \sum_{j \in \calJ} \frac{A_j}{\sum_{k \in \calJ} A_k}\psi\left(\g(x), j\right)   - \inf_{\Tilde{\g}} \sum_{j \in \calJ} \frac{A_j}{\sum_{k \in \calJ} A_k}\psi\left(\Tilde{\g}(x), j\right) \right) 
    \end{align*}

    We know that $\sum_{j \in \calJ} A_j \leq (|\calJ| - 1)\left(\bar{C} + \bar{l}\right)$ from:
    \begin{align*}
        \sum_{j \in \calJ} A_j &= \sum_{j \in \calJ}\mathop{\mathbb{E}}_{y \sim \calP_{Y \mid X = x}} \left[c_{\text{max}} - l(y, \widehat{y}_{\leq j}) - \Tilde{c}_j(x, \widehat{y}_{<j}, y)\right] \\
        &\leq (|\calJ -1|)\left(\bar{C} + \bar{l}\right)
    \end{align*}
    Since $l(y, \wh y^h_i) + \Tilde{c}_i(x, \wh y_{<i}, y) \neq l(y, \wh y^h_j) + \Tilde{c}_i(x, \wh y_{<i}, y)$ for a pair $(i, j)$, we can assume that $\sum_{j \in \calJ} A_j \geq \Delta$ if $\left|l(y, \wh y^h_i) + \Tilde{c}_i(x, \wh y_{<i}, y) - l(y, \wh y^h_j) - \Tilde{c}_i(x, \wh y_{<i}, y)\right| > \Delta$. 
    
    With these bounds,
    \begin{align*}
        &C_{\text{OneTime}}(r, h, x) - C_{\text{OneTime}}^*(h, x)  \\ 
        &\leq \sum_{k \in \calJ} A_k \Gamma \left( \sum_{j \in \calJ} \frac{A_j}{\sum_{k \in \calJ} A_k}\psi\left(\g(x), j\right)   - \inf_{\Tilde{\g}} \sum_{j \in \calJ} \frac{A_j}{\sum_{k \in \calJ} A_k}\psi\left(\Tilde{\g}(x), j\right) \right)  \\
        &\leq (|\calJ| - 1)\left(\bar{C} + \bar{l}\right)\Gamma \left( \frac{\sum_{j \in \calJ} A_j\psi\left(\g(x), j\right) - \inf_{\Tilde{\g}} \sum_{j \in \calJ} A_j\psi\left(\Tilde{\g}(x), j\right) }{\Delta}  \right) \\
        &=(|\calJ| - 1)\left(\bar{C} + \bar{l}\right)\Gamma \left( \frac{C_{\psi}(r, h, x) - C_{\psi}^*(h, x) }{\Delta}  \right)
    \end{align*}

    Let $\Tilde{\Gamma}(z) = (|\calJ| - 1)\left(\bar{C} + \bar{l}\right)\Gamma\left(\frac{z}{\Delta}\right)$. Then, 
    \begin{align*}
        \calR_{\text{OneTime}}\left(r, h\right) - \calR_{\text{OneTime}}^*(h) &= \mathop{\mathbb{E}}_{x \sim \calP_X}\left[C_{\text{OneTime}}(r, h, x) - C_{\text{OneTime}}^*(h, x)\right] \\
        &\leq \mathop{\mathbb{E}}_{x \sim \calP_X}\left[\Tilde{\Gamma}\left(C_{\psi}(r, h, x) - C_{\psi}^*(h, x)\right)\right] \\
        &\overset{(a)}{\leq} \Tilde{\Gamma}\left(\mathop{\mathbb{E}}_{x \sim \calP_X}\left[C_{\psi}(r, h, x) - C_{\psi}^*(h, x)\right]\right) \\
        & \overset{(b)}{=}  \Tilde{\Gamma}\left(\calR_{\psi}\left(r, h\right) - \calR^*_{\psi}(h)\right)
    \end{align*}

    (a) by Jensen's inequality since $\Tilde{\Gamma}$ is concave, (b) since the infimum is taken over all measurable functions $\calR^*_{\psi}(h) = \mathop{\mathbb{E}}_{x \sim \calP_X}\left[C_{\psi}^*(h, x)\right]$
\end{proof}

\subsection{Proof of \Cref{th:onetimeconsistency}} \label{sec:proofonetimeconsistency}
\begin{theorem}[Consistency]
    Let $\Tilde{c}_j(x, \wh y_{<j}, y) \leq \bar{C} < \infty$, $l(y, \wh y^h_j) \leq \bar{l} < \infty$ for all $j \in \calJ$, and $\psi$ be cross entropy loss $\psi_{\text{ce}}$ or mean absolute error $\psi_{\text{mae}}$. If there exists $i, j \in \calJ$ such that  $\left|l(y, \wh y^h_i) + \Tilde{c}_i(x, \wh y_{<i}, y) - l(y, \wh y^h_j) - \Tilde{c}_i(x, \wh y_{<i}, y)\right| > \Delta > 0$, then, for a fixed $h$, $\calL^{\psi}$ is a consistent surrogate for $\calL^{\text{OneTime}}$
\end{theorem}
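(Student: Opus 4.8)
The plan is to mirror the structure of the proof of \Cref{th:consistency}, replacing the binary calibration lemma \Cref{th:upbound} with its multiclass analogue \Cref{th:onetimeconsistencybound}. The only genuinely new ingredient is verifying that the two admissible surrogates, $\psi_{\text{ce}}$ and $\psi_{\text{mae}}$, satisfy the hypothesis of \Cref{th:onetimeconsistencybound}: that there is a non-decreasing concave $\Gamma:\R^+\to\R^+$, continuous at $0$ with $\Gamma(0)=0$, for which
\begin{align*}
    \calR_{\text{multi }0-1}(f) - \calR^{*}_{\text{multi }0-1} \leq \Gamma\left(\calR_{\text{multi }\psi}(f) - \calR^{*}_{\text{multi }\psi}\right)
\end{align*}
holds for every distribution on $\calX\times\calJ$ and every measurable $f$. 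Both losses are known to be multiclass classification calibrated: cross entropy is the canonical calibrated multiclass surrogate, and the mean-absolute-error (symmetric) loss is shown calibrated by \citet{ghosh2017robust}. The calibration-bounds literature supplies the required $\Gamma$ explicitly (e.g., a comparison function of square-root type for cross entropy). I would first state this calibration fact with citation and take $\Gamma$ to be the associated comparison function.

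Next I would invoke \Cref{th:onetimeconsistencybound}. Its hypotheses are exactly the boundedness conditions $\Tilde{c}_j \le \bar{C}$, $l \le \bar{l}$ and the non-degeneracy assumption $|l(y,\wh y^h_i) + \Tilde{c}_i - l(y,\wh y^h_j) - \Tilde{c}_j| > \Delta > 0$ carried in the theorem statement, so with the $\Gamma$ from the previous step the lemma yields, for every measurable $r$ and fixed $h$,
\begin{align*}
    \calR_{\text{OneTime}}\left(r, h\right) - \calR_{\text{OneTime}}^*(h) \leq \Tilde{\Gamma}\left(\calR_{\psi}\left(r, h\right) - \calR^*_{\psi}(h)\right),
\end{align*}
where $\Tilde{\Gamma}(z) = (|\calJ| - 1)(\bar{C} + \bar{l})\,\Gamma(z/\Delta)$. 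I would then record that $\Tilde{\Gamma}$ inherits the relevant regularity of $\Gamma$: it is non-decreasing, continuous at $0$, and satisfies $\Tilde{\Gamma}(0) = (|\calJ|-1)(\bar{C}+\bar{l})\Gamma(0) = 0$.

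Finally I would pass to the limit exactly as in \Cref{th:consistency}. For any sequence $r_n$ with $\lim_{n\to\infty}\calR_{\psi}(r_n,h) - \calR^*_{\psi}(h) = 0$, continuity of $\Tilde{\Gamma}$ at $0$ together with $\Tilde{\Gamma}(0)=0$ forces $\lim_{n\to\infty}\Tilde{\Gamma}\left(\calR_{\psi}(r_n,h)-\calR^*_{\psi}(h)\right)=0$, and the displayed inequality then gives $\lim_{n\to\infty}\calR_{\text{OneTime}}(r_n,h)-\calR^*_{\text{OneTime}}(h)=0$. Since this is precisely the implication required in \Cref{th:defconsistency} (with $\calF$ the class of all measurable functions), $\calL^{\psi}$ is Bayes consistent for $\calL^{\text{OneTime}}$.

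I expect the calibration verification to be the main obstacle. The excess-risk inequality of \Cref{th:onetimeconsistencybound} is stated abstractly for \emph{any} $\psi$ obeying a calibration bound, so the real work is pinning down a valid comparison function $\Gamma$ for each of $\psi_{\text{ce}}$ and $\psi_{\text{mae}}$ and confirming it is non-decreasing, concave, and vanishes continuously at the origin. For cross entropy this is immediate from standard results, but the mean-absolute-error case is less routine, since its comparison function and the interaction with the margin parameter $\Delta$ are non-standard; here I would lean on \citet{ghosh2017robust} together with the multiclass calibration machinery to close the gap.
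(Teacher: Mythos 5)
Your proposal follows essentially the same route as the paper's proof: invoke the excess-risk transfer bound of \Cref{th:onetimeconsistencybound}, supply a concave non-decreasing comparison function $\Gamma$ with $\Gamma(0)=0$ for each of $\psi_{\text{ce}}$ and $\psi_{\text{mae}}$, and pass to the limit via continuity of $\tilde\Gamma$ at the origin. The only difference is sourcing: the paper obtains both comparison functions explicitly from Theorem~3.1 of \citet{mao2023cross} (with $\Gamma_{\text{ce}}^{-1}(z)=\frac{1+z}{2}\log(1+z)+\frac{1-z}{2}\log(1-z)$ and $\Gamma_{\text{mae}}^{-1}(z)=\frac{z}{L+1}$), rather than from \citet{ghosh2017robust}, so the MAE case you flag as the main obstacle is in fact handled by the same citation as cross entropy.
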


\begin{proof}

We know that
$\calR_{\text{multi }0-1}(f) - \calR^{*}_{\text{multi }0-1} \leq \Gamma_{\text{ce}}\left(\calR_{\text{multi }\psi_{\text{ce}}}(f) - \calR^{ *}_{\text{multi }\psi_{\text{ce}}}\right)$ from Theorem 3.1 of \cite{mao2023cross} where $\Gamma_{\text{ce}}^{-1}(z) = \frac{1 + z}{2}\log(1 + z) + \frac{1 - z}{2}\log(1 - z)$. Similarly, $\calR_{\text{multi }0-1}(f) - \calR^{*}_{\text{multi }0-1} \leq \Gamma_{\text{mae}}\left(\calR_{\text{multi }\psi_{\text{mae}}}(f) - \calR^{ *}_{\text{multi }\psi_{\text{mae}}}\right)$ where $\Gamma_{\text{mae}}^{-1}(z) = \frac{z}{L + 1}$. Both $\Gamma_{\text{ce}}^{-1}$ and $\Gamma_{\text{mae}}^{-1}$ are non-decreasing convex functions making $\Gamma_{\text{ce}}$ and $\Gamma_{\text{mae}}$ non-decreasing \textit{concave} functions on $\R^{+}$ domain. 

Let $\Gamma \equiv \Gamma_{\text{ce}}$ when $\psi \equiv \psi_{\text{ce}}$ and $\Gamma \equiv \Gamma_{\text{mae}}$ when $\psi \equiv \psi_{\text{mae}}$. By \Cref{th:onetimeconsistencybound}, $\calR_{\text{OneTime}}\left(r\right) - \calR_{\text{OneTime}}^* \leq \Tilde{\Gamma}\left(\calR_{\psi}\left(r\right) - \calR^*_{\psi}\right)$ where $\Tilde{\Gamma}(z) = (|\calJ| - 1)\left(\bar{C} + \bar{l}\right)\Gamma\left(\frac{z}{\Delta}\right)$. If $\lim_{n \to \infty} \calR_{\psi}\left(r_n\right) - \calR^*_{\psi} = 0$, then
\begin{align*}
    \lim_{n \to \infty} \calR_{\text{Point}}\left(r_n\right) - \calR_{\text{Point}}^* &\leq \lim_{n \to \infty} \Tilde{\Gamma}\left(\calR_{\psi}\left(r_n\right) - \calR^*_{\psi}\right) \\
    & \overset{(a)}{=}  (|\calJ| - 1)\left(\bar{C} + \bar{l}\right)\Gamma\left(\lim_{n \to \infty} \frac{\calR_{\phi}\left(r_n\right) - \calR^*_{\phi}}{\Delta}\right) \\
    &= (|\calJ| - 1)\left(\bar{C} + \bar{l}\right)\Gamma(0) \\
    & \overset{(b)}{=} 0
\end{align*}
(a) by continuity of $\Gamma$ at $0$ and (b) since $\Gamma_{\text{ce}}(0) = 0$ and $\Gamma_{\text{mae}}(0) = 0$

\end{proof}

\subsection{Proof of \Cref{th:onetimegenbounds}} \label{sec:proofonetimegenbounds}

\begin{theorem} 
    Suppose $r_{n} \in \calF = \{x \to \argmin_{j \in \calJ} g_j(x): g_j \in \calG_j\}$ where $\calG_j$ consists of functions having a range of $(-M, M)$ with $M \geq 0$, $\Tilde{c}_j(x, \wh y_{<j}, y) \leq \bar{C}$, and $l(y, \wh y^h_j) \leq \bar{l} < \infty$. If $\psi$ is $\rho$-lipschitz continuous with respect to $\g(x)$ and there exists $u_{\psi}(M)$ such that $\psi\left(\g(x), j\right) \leq u_{\psi}(M) < \infty$ for all $j \in \calJ$ and $r \in \calF$, then with probability $1 - \delta$ for a fixed $h$, the following upper bound holds for the empirical risk minimizer $\wh r_{n}$ with respect to $\calL^{\psi}$:
    \begin{equation*}
        \calR_{\psi}\left(\wh r_{n}, h\right) - \calR^*_{\psi}\left(\calF, h\right) \leq 2\sqrt{2}q(L)\rho\sum_{j \in \calJ}\mathfrak{R}_{n}\left(\calG_j\right) + u_{\psi}(M) q(L)\sqrt{\frac{2\log\left(\frac{2}{\delta}\right)}{n}}
    \end{equation*}
    where $\calR^*_{\psi}\left(\calF, h\right) = \inf_{r \in \calF} \calR_{\psi}\left(r, h\right)$, $q(L) =2(|\calJ| - 1)(\bar{C} + \bar{l})$, and $\mathfrak{R}_{n}\left(\calG_j\right)$ is the Rademacher Complexity of $\calG_j$.
\end{theorem}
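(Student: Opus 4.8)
The plan is to reuse the symmetrization-plus-concentration template of \Cref{th:genbounds}, adapting it to the multiclass structure of $\calL^{\psi}$. Set the induced loss class $\calB=\{(x,y)\mapsto\calL^{\psi}(h,r,x,y):r\in\calF\}$ and the one-sided suprema $G_n=\sup_{r\in\calF}[\calR_{\psi}(r,h)-\wh\calR_{\psi}(r,h)]$ and $G_n'=\sup_{r\in\calF}[\wh\calR_{\psi}(r,h)-\calR_{\psi}(r,h)]$; the excess risk of $\wh r_n$ is controlled by $G_n+G_n'$, so it suffices to bound each with probability $1-\delta/2$.

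First I would fix the scale of $\calL^{\psi}$. Writing the per-position weights as $w_j=c_{\text{max}}-l(y,\wh y_{\le j})-\Tilde{c}_j(x,\wh y_{<j},y)\ge 0$, the definition $c_{\text{max}}=\max_{j\in\calJ}\{l+\Tilde{c}_j\}$ forces $\sum_{j\in\calJ}w_j\le(|\calJ|-1)c_{\text{max}}\le(|\calJ|-1)(\bar C+\bar l)=\tfrac12 q(L)$, because at least one summand realizes the maximum and every term is nonnegative. Since $\psi(\g(x),j)\le u_{\psi}(M)$, this gives $0\le\calL^{\psi}\le\tfrac12 q(L)\,u_{\psi}(M)$. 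Swapping a single training point perturbs $\wh\calR_{\psi}$ by at most this range over $n$, so a two-sided McDiarmid argument together with $\E[G_n]\le 2\mathfrak{R}_n(\calB)$ yields, with probability $1-\delta$,
\[
\calR_{\psi}(\wh r_n,h)-\calR^*_{\psi}(\calF,h)\;\le\;4\,\mathfrak{R}_n(\calB)+u_{\psi}(M)\,q(L)\sqrt{\tfrac{2\log(2/\delta)}{n}},
\]
where the stated concentration constant comfortably absorbs the range bound $\tfrac12 q(L)u_{\psi}(M)$, exactly paralleling \Cref{th:genbounds}.

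The genuinely new step is bounding $\mathfrak{R}_n(\calB)$ by the coordinate classes $\calG_j$. In contrast to the token-level loss, $\calL^{\psi}$ depends on the \emph{whole} score vector $\g(x)=(g_j(x))_{j\in\calJ}$ at once---both $\psi_{\text{ce}}$ and $\psi_{\text{mae}}$ couple the $g_j$ through a shared softmax---so the scalar Talagrand contraction of \Cref{th:genbounds} does not apply. Instead I would appeal to the vector-valued contraction lemma \cite{maurer2016vector}. Writing $\calL^{\psi}=\Phi_{x,y}(\g(x))$ with $\Phi_{x,y}(\v)=\sum_{j}w_j\psi(\v,j)$, the triangle inequality and the $\rho$-Lipschitzness of $\psi(\cdot,j)$ show $\Phi_{x,y}$ is Lipschitz in $\g$ with constant $\rho\sum_j w_j\le\tfrac12\rho\,q(L)$. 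Maurer's lemma then gives $\mathfrak{R}_n(\calB)\le\sqrt2\cdot\tfrac12\rho\,q(L)\sum_{j\in\calJ}\mathfrak{R}_n(\calG_j)$, and substituting into the display produces the claimed leading term $2\sqrt2\,q(L)\rho\sum_{j\in\calJ}\mathfrak{R}_n(\calG_j)$.

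I expect the crux to be this contraction step. One must first verify that the concrete surrogates are admissible: that $\psi_{\text{ce}}$ and $\psi_{\text{mae}}$ are $\rho$-Lipschitz in $\g$ and bounded by a finite $u_{\psi}(M)$ once the scores are clipped to $(-M,M)$---this is precisely where the ReLU/softmax-clipping assumption makes $u_{\psi}(M)$ finite, with the explicit values $u_{\psi_{\text{ce}}}(M)$ and $u_{\psi_{\text{mae}}}(M)$ quoted after the theorem. One must also handle that the Lipschitz constant $\rho\sum_j w_j$ carries the data-dependent weights $w_j$, which I would dominate uniformly by their worst case $\tfrac12 q(L)$ so that Maurer's lemma applies with a single constant. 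The remaining pieces---symmetrization and McDiarmid---are routine and mirror \Cref{th:genbounds}.
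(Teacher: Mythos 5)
Your proposal is correct and follows essentially the same route as the paper: symmetrization plus McDiarmid on the two one-sided suprema, followed by Maurer's vector-valued contraction lemma applied to $\calL^{\psi}$ viewed as a Lipschitz function of the whole score vector $\g(x)$, with the Lipschitz constant controlled by $\sum_j w_j \le (|\calJ|-1)(\bar C+\bar l)$ because the weight at the maximizing position vanishes. The constants you obtain ($\sqrt2\cdot\tfrac12\rho\,q(L)$ for the contraction and a bounded-difference range absorbed by $u_{\psi}(M)q(L)/n$) reproduce the stated bound exactly as in the paper's argument.
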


\begin{proof}

Let $G_{n} = \sup_{r \in \calF} \left[\calR_{\psi}\left(r\right) - \wh \calR_{\psi}(r)\right]$, $G_{n'} = \sup_{r \in \calF} \left[\wh \calR_{\psi}(r) -\calR_{\psi}\left(r\right) \right]$, and\\
$\calB = \{(x, y) \mapsto \calL^{\psi}\left(h, r, x, y\right) \ : r \in \calF\}$. 

If $z^i = (x^i, y^i)$, $G_{n} = g(z^1, \cdots, z^n)$. 
\begin{align*}
    \left|g(z^1, \cdots, z^i, \cdots, z^{n}) - g(z^1, \cdots, z^{i'}, \cdots, z^{n})\right| &\leq \frac{1}{n_c}\left| \sup_{r \in \calF} \calL^{\psi}\left(h, r, x^{i}, y^{i}\right) - \calL^{\psi}\left(h, r, x^{i'}, y^{i'}\right)\right|\\
    & \leq  \frac{1}{n} \sup_{r \in \calF} \sum_{j \in \calJ}\left|\left(c_{\text{max}}^i - l(y^i, \widehat{y}_{\leq j}^i) - \Tilde{c}_j(x^i, \widehat{y}_{<j}^i, y^i)\right)\psi(\g(x^i), y^i) \right| \\
    &\quad + \left|\left(c_{\text{max}}^{i'} - l(y^{i'}, \widehat{y}_{\leq j}^{i'}) - \Tilde{c}_j(x^{i'}, \widehat{y}_{<j}^{i'}, y^{i'})\right)\psi(\g(x^{i'}), y^{i'}) \right|\\
    &\leq \frac{1}{n} u_{\psi}(M)2(|\calJ| - 1)(\bar{C} + \bar{l}) \\
    &= \frac{1}{n} u_{\psi}(M) q(L)
\end{align*}
where $q(L) = 2(|\calJ| - 1)(\bar{C} + \bar{l})$.  Using this boundedness property, we can apply McDiarmid's inequality

\begin{align*}
    P\left(G_{n} > \frac{\epsilon}{2}\right) &\leq \exp\left\{\frac{-2\left(\frac{\epsilon}{2} - \mathop{\mathbb{E}}\left[G_n\right]\right)^2}{\sum_{i = 1}^{n_c} \frac{u_{\psi}(M)^2 q(L)^2}{n^2}}\right\} \\
    &\leq \exp\left\{\frac{-2n\left(\frac{\epsilon}{2} - 2\mathfrak{R}_{n}\left(\calB\right)\right)^2}{u_{\psi}(M)^2 q(L)^2}\right\} \\
    &\overset{\text{def}}{=} \frac{\delta}{2}
\end{align*}

Similarly, 
\begin{align*}
     P\left(G_{n}' > \frac{\epsilon}{2}\right) &\leq \exp\left\{\frac{-2n\left(\frac{\epsilon}{2} - 2\mathfrak{R}_{n}\left(\calB\right)\right)^2}{u_{\psi}(M)^2 q(L)^2}\right\} = \frac{\delta}{2}
\end{align*}

Then $\epsilon = 4\mathfrak{R}_{n}\left(\calB\right) + u_{\psi}(M) q(L)\sqrt{\frac{2\log\left(\frac{2}{\delta}\right)}{n}}$

\begin{align*}
    P\left(\calR_{\psi}\left(\wh r_{n}\right) - \calR^*_{\psi}\left(\calF\right) > \epsilon\right) &\leq P\left(\sup_{r \in \calF} \left|\calR_{\psi}\left(r\right) - \wh \calR_{\psi}(r)\right| > \frac{\epsilon}{2}\right) \\
    &\leq P\left(G_{n} > \frac{\epsilon}{2}\right) + P\left(G_{n}' > \frac{\epsilon}{2}\right) \\
    &\leq \delta
\end{align*}

With at least probability $1 - \delta$, $\calR_{\psi}\left(\wh r_{n}, h\right) - \calR^*_{\psi}\left(\calF, h\right) \leq 4\mathfrak{R}_{n}\left(\calB\right) +  u_{\psi}(M) q(L)\sqrt{\frac{2\log\left(\frac{2}{\delta}\right)}{n}}$.

 We can further bound $\mathfrak{R}_{n}\left(\calB\right)$ by $\mathfrak{R}_n\left(\calF_j\right)$ if we show that $\calL^{\psi}$ is Lipschitz continuous with respect to $\g$

\begin{align*}
    &\left\|\calL^{\psi}(h, r^1, x, y) - \calL^{\psi}(h, r^2, x, y)\right\|_2 \\ 
    &=\left\| \sum_{j \in \calJ} \left(c_{\text{max}} - l(y, \widehat{y}_{\leq j}) - \Tilde{c}_j(x, \widehat{y}_{<j}, y)\right)\psi(\g^1(x), j) -  \sum_{j \in \calJ} \left(c_{\text{max}} - l(y, \widehat{y}_{\leq j}) - \Tilde{c}_j(x, \widehat{y}_{<j}, y)\right)\psi(\g^2(x), j)\right\|_2\\
    &\leq \sum_{j \in \calJ}\left(c_{\text{max}} - l(y, \widehat{y}_{\leq j}) - \Tilde{c}_j(x, \widehat{y}_{<j}, y)\right) \left\|\psi(\g^1(x), j) - \psi(\g^2(x), j) \right\|_2 \\
    &\overset{(a)}{=} \sum_{j \in \calJ, j \neq j_{\text{max}}}\left(c_{\text{max}} - l(y, \widehat{y}_{\leq j}) - \Tilde{c}_j(x, \widehat{y}_{<j}, y)\right) \left\|\psi(\g^1(x), j) - \psi(\g^2(x), j) \right\|_2\\ 
    &\overset{(b)}{\leq} \sum_{j \in \calJ, j \neq j_{\text{max}}}\left(\bar{C} + \bar{l}\right) \rho \left\|\g^1(x) - \g^2(x) \right\|_2\\
    &= (\calJ - 1)\left(\bar{C} + \bar{l}\right)\rho \left\|\g^1(x) - \g^2(x) \right\|_2
\end{align*}
(a) where $j_{\text{max}} = \argmax_{j \in \calJ} l(y, \widehat{y}_{\leq j}) + \Tilde{c}_j(x, \widehat{y}_{<j}, y)$, (b) by lipschitz continuity of $\psi$. 

By the extension of the Talagrand's Contraction Lemma \cite{maurer2016vector},
\begin{align*}
    \mathfrak{R}_{n}\left(\calB\right) &=\mathop{\mathbb{E}}_{\sigma_1, ..., \sigma_n, \{z^i\}_{i = 1}^{n} \sim \calP_{\calX \times \calY}}\left[\sup_{b \in \calB} \frac{1}{n}\sum_{i = 1}^{n} \sigma_ib(z^i)\right]\\
    &\leq (|\calJ| - 1)\left(\bar{C} + \bar{l}\right)\rho\sqrt{2}\mathop{\mathbb{E}}_{\sigma, \{z^i\}_{i = 1}^{n} \sim \calP_{\calX \times \calY}}\left[\sup_{\g \in \calG_1 \times ...\times \calG_{L+ 1}} \frac{1}{n}\sum_{i = 1}^{n} \sum_{j \in \calJ} \sigma_{i, j}g_j(z^i)\right] \\
    &\leq \frac{q(L)\rho}{\sqrt{2}} \sum_{j \in \calJ} \mathop{\mathbb{E}}_{\sigma, \{z^i\}_{i = 1}^{n} \sim \calP_{\calX \times \calY}}\left[\sup_{g_j \in \calG_j} \frac{1}{n}\sum_{i = 1}^{n}  \sigma_{i, j}g_j(z^i)\right] \\
    &\leq \frac{q(L)\rho}{\sqrt{2}}\sum_{j \in \calJ}\mathfrak{R}_{n}\left(\calG_j\right)
\end{align*}

\end{proof}

\subsection{Proof of \Cref{th:onetimegenboundreal}} \label{sec:proofonetimegenboundreal}

\begin{corollary} \label{th:onetimegenboundreal}
     Suppose $r_{n} \in \calF = \{x \to \argmin_{j \in \calJ} g_j(x): g_j \in \calG_j\}$ where $\calG_j$ consists of functions having a range of $(-M, M)$ with $M \geq 0$, $\Tilde{c}_j(x, \wh y_{<j}, y) \leq \bar{C}$, and $l(y, \wh y^h_j) \leq \bar{l} < \infty$. If $\psi$ is $\psi_{\text{ce}}$ or $\psi_{\text{mae}}$, then with probability $1 - \delta$, for a fixed $h$, the excess risk of with respect to $\calL^{\text{OneTime}}$ holds for the empirical risk minimizer $\wh r_{n}$ with respect to $\calL^{\psi}$
    \begin{equation*}
        \calR_{\text{OneTime}}\left(\wh r_{n}, h\right) - \calR^*_{\text{OneTime}}(h) \leq \Tilde{\Gamma}\left(2q(L)\rho\sqrt{2}\sum_{j \in \calJ}\mathfrak{R}_{n}\left(\calG_j\right) + u_{\psi}(M) q(L)\sqrt{\frac{2\log\left(\frac{2}{\delta}\right)}{n}} + \calA_{\psi}\left(\calF, h\right)\right)
    \end{equation*}
    where $\Tilde{\Gamma}: \R^+ \to \R^+$, $q(L) = 2(|\calJ| - 1)(\bar{C} + \bar{l})$, $\rho$ is the lipschitz norm of $\psi$, and $\calA_{\psi}\left(\calF, h\right) = \calR^*_{\psi}\left(\calF, h\right) -  \calR^*_{\psi}(h)$ is the approximation error.
\end{corollary}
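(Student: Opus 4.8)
The plan is to mirror the proof of \Cref{th:genboundreal} exactly, substituting the one-time analogues at each step. First I would decompose the excess surrogate risk of the empirical minimizer into an estimation term and an approximation term:
\begin{align*}
\calR_{\psi}(\wh r_{n}, h) - \calR^*_{\psi}(h) = \underbrace{\calR_{\psi}(\wh r_{n}, h) - \calR^*_{\psi}(\calF, h)}_{\text{estimation error}} + \underbrace{\calR^*_{\psi}(\calF, h) - \calR^*_{\psi}(h)}_{\calA_{\psi}(\calF, h)}.
\end{align*}
The estimation term is precisely the quantity bounded by \Cref{th:onetimegenbounds}, which holds with probability $1-\delta$ under the stated hypotheses (range-$(-M,M)$ scoring functions, bounded costs and losses, $\rho$-Lipschitz $\psi$ with $\psi \leq u_{\psi}(M)$). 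Substituting that bound yields, with probability $1-\delta$,
\begin{align*}
\calR_{\psi}(\wh r_{n}, h) - \calR^*_{\psi}(h) \leq 2\sqrt{2}\,q(L)\rho \sum_{j \in \calJ} \mathfrak{R}_{n}(\calG_j) + u_{\psi}(M)\,q(L)\sqrt{\tfrac{2\log(2/\delta)}{n}} + \calA_{\psi}(\calF, h),
\end{align*}
with $q(L) = 2(|\calJ|-1)(\bar{C}+\bar{l})$; this is exactly the argument appearing inside $\Tilde{\Gamma}$ in the claim.

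Next I would transfer this surrogate bound to the target $\calL^{\text{OneTime}}$ risk using \Cref{th:onetimeconsistencybound}, which gives $\calR_{\text{OneTime}}(r, h) - \calR^*_{\text{OneTime}}(h) \leq \Tilde{\Gamma}\!\left(\calR_{\psi}(r, h) - \calR^*_{\psi}(h)\right)$ with $\Tilde{\Gamma}(z) = (|\calJ|-1)(\bar{C}+\bar{l})\,\Gamma(z/\Delta)$. Since the assumption restricts $\psi$ to $\psi_{\text{ce}}$ or $\psi_{\text{mae}}$, the required multiclass $H$-consistency bound holds with the explicit concave, non-decreasing $\Gamma_{\text{ce}}$ or $\Gamma_{\text{mae}}$ from Theorem 3.1 of \citet{mao2023cross} (as already used in the proof of \Cref{th:onetimeconsistency}), so $\Tilde{\Gamma}$ is well-defined and non-decreasing on $\R^+$. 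Applying $\Tilde{\Gamma}$ to the surrogate excess risk of $\wh r_n$ and then using monotonicity of $\Tilde{\Gamma}$ together with the high-probability upper bound from the first step delivers the stated inequality.

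The main obstacle is not a hard calculation but a matter of verifying that all hypotheses of \Cref{th:onetimeconsistencybound} are genuinely available here: specifically, the cost-separation condition (the existence of a pair $i,j$ with $|l(y,\wh y^h_i)+\Tilde{c}_i - l(y,\wh y^h_j)-\Tilde{c}_j| > \Delta > 0$) that guarantees $\sum_{j\in\calJ} A_j \geq \Delta$ and hence keeps $\Tilde{\Gamma}$ finite, and the calibration of $\psi$ supplying the concave $\Gamma$. These are inherited from the consistency analysis, so the only delicate point is confirming that the non-decreasing property of $\Tilde{\Gamma}$ lets us replace the surrogate excess risk by its high-probability upper bound without reversing the inequality; once that monotonicity step is justified, the proof closes immediately.
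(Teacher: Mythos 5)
Your proposal is correct and follows essentially the same route as the paper's proof: the same estimation/approximation decomposition, the same invocation of \Cref{th:onetimegenbounds} for the estimation term, and the same transfer to $\calL^{\text{OneTime}}$ via \Cref{th:onetimeconsistencybound} using the $\Gamma_{\text{ce}}/\Gamma_{\text{mae}}$ bounds from \citet{mao2023cross}. Your observation that the cost-separation condition of \Cref{th:onetimeconsistencybound} is needed but not listed among the corollary's hypotheses is a fair point — the paper's own proof glosses over this as well — but it does not change the argument.
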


\begin{proof}

Excess risk can be decomposed into estimation error and approximation error
\begin{align*}
    \calR_{\psi}\left(r, h\right) - \calR^*_{\psi} = \underbrace{\calR_{\psi}\left(r, h\right) - \calR^*_{\psi}(h)\left(\calF, h\right)}_{\text{Estimation Error}} + \underbrace{\calR^*_{\psi}\left(\calF, h\right) -  \calR^*_{\psi}(h)}_{\text{Approximation Error}}
\end{align*}
We know that $\psi_{\text{ce}}, \psi_{\text{mae}}$ are upper bounded by $u_{\psi_{\text{ce}}}(M), u_{\psi_{\text{mae}}}(M)$ repsectively. We also know that the lipschitz norm for $\psi_{\text{ce}}$ $\rho_{\text{ce}}$ is $\sqrt{2}$ and the lipschitz norm for $\psi_{\text{mae}}$ $\rho_{\text{mae}}$ is $2$

From \Cref{th:onetimegenbounds}, we have an upper bound on the estimation error and so using that, we have the following inequality
\begin{align*}
    \calR_{\psi}\left(r, h\right) - \calR^*_{\psi}(h) \leq 2q(L)\rho\sqrt{2}\sum_{j \in \calJ}\mathfrak{R}_{n}\left(\calG_j\right) + u_{\psi}(M) q(L)\sqrt{\frac{2\log\left(\frac{2}{\delta}\right)}{n}} + \calA_{\psi}\left(\calF, h\right)
\end{align*}
$\calR_{\text{multi }0-1}(f) - \calR^{*}_{\text{multi }0-1} \leq \Gamma_{\text{ce}}\left(\calR_{\text{multi }\psi_{\text{ce}}}(f) - \calR^{ *}_{\text{multi }\psi_{\text{ce}}}\right)$ from Theorem 3.1 of \cite{mao2023cross} where $\Gamma_{\text{ce}}^{-1}(z) = \frac{1 + z}{2}\log(1 + z) + \frac{1 - z}{2}\log(1 - z)$. Similarly, $\calR_{\text{multi }0-1}(f) - \calR^{*}_{\text{multi }0-1} \leq \Gamma_{\text{mae}}\left(\calR_{\text{multi }\psi_{\text{mae}}}(f) - \calR^{ *}_{\text{multi }\psi_{\text{mae}}}\right)$ where $\Gamma_{\text{mae}}^{-1}(z) = \frac{z}{L + 1}$. Both $\Gamma_{\text{ce}}^{-1}$ and $\Gamma_{\text{mae}}^{-1}$ are non-decreasing convex functions making $\Gamma_{\text{ce}}$ and $\Gamma_{\text{mae}}$ non-decreasing \textit{concave} functions on $\R^{+}$ domain.

The proof is complete by using \Cref{th:onetimeconsistencybound}

\end{proof}

\section{Baseline Details} \label{sec:baselinedetails}

For XSUM and TSP, let $p_h(\wh y_j \mid \wh y_{<j}, x)$ be the predictive distribution of the $j^{\text{th}}$ token conditioned on the leftward context $\wh y_{<j}$ and input $x$. The confidence measure would then be $-\log p_h(\wh y_j \mid \wh y_{<j}, x)$ and the predictive entropy would be $H(\wh Y_j \mid \wh y_{<j}, x) = -\sum_{v \in \calV} p_h(v \mid \wh y_{<j}, x) \log p_h(v \mid \wh y_{<j}, x)$. 
For MWP,  if we assume that there are dropout layers in the predictor network, we can use the Monte Carlo Variance \cite{gal2016dropout} or $\mathrm{Var}_h(v \mid y_{<j}, x)$ across $T$ inference passes as a confidence measure. 

For a given threshold $\tau$, the whole deferral decision rule is determined by evaluating $r(x) > \tau$. The baselines offer various ways of modeling $r$.

\begin{itemize}
    \item \texttt{ChowSum} - For TSP and XSUM, $r_{\text{ChowSum}}(x) = -\sum_{j = 1}^L \log p_h(\wh y_j \mid \wh y_{<j}, x)$ and for MWP, $r_{\text{ChowSum}}(x) = \sum_{j = 1}^L \mathrm{Var}_h(v \mid y_{<j}, x)$
    \item \texttt{ChowMean} - For TSP and XSUM, $r_{\text{ChowMean}}(x) = -\frac{1}{L}\sum_{j = 1}^L \log p_h(\wh y_j \mid \wh y_{<j}, x)$ and for MWP, $r_{\text{ChowSum}}(x) = \frac{1}{L}\sum_{j = 1}^L \mathrm{Var}_h(v \mid y_{<j}, x)$
    \item \texttt{ChowQuantile} at the level $\alpha$ - For TSP and XSUM, $r_{\text{ChowQuantile}}^{\alpha}(x) = \text{Quantile}_{\alpha}\left(\left\{-\log p_h(\wh y_j \mid \wh y_{<j}, x)\right\}_{j = 1}^L\right)$ and for MWP, $r_{\text{ChowSum}}(x) = \text{Quantile}_{\alpha}\left(\left\{ \mathrm{Var}_h(v \mid y_{<j}, x)\right\}_{j = 1}^L\right)$
    \item \texttt{WholeModelEmbed} - $r(x)$ is trained just like  \texttt{WholeModelScore} but the inputs for the rejector are the inputs to the predictor along with the confidence scores. 
\end{itemize}

Similarly, for token-wise deferral, if $r_j(x, \wh y_{<j})$ exceeds threshold $\tau$, the expert is called to predict the next token. $r_j$ can be modeling in the following ways:

\begin{itemize}
    \item \texttt{TokenwiseScore}: For TSP and XSUM, $r_j^{\text{score}}(x, \wh y_{<j}) = -\log p_h(\wh y_j \mid \wh y_{<j}, x)$ and for MSP, $r_j^{\text{softmax}}(x, \wh y_{<j}) = \mathrm{Var}_h(v \mid y_{<j}, x)$
    \item \texttt{TokenwiseEntropy}: $r_j^{\text{entropy}}(x, \wh y_{<j}) = H(\wh Y_j \mid \wh y_{<j}, x)$. This only applies to XSUM and TSP.
\end{itemize}

Since $r(x) = \argmax_{j \in [1, L + 1]} g_j(x)$ for one-time deferral, we consider the following the scoring functions $g_j$:

\begin{itemize}
    \item \texttt{OneTimeScore}: For TSP and XSUM, $g_j^{\text{score}}(x) = -\log p_h(\wh y_j \mid \wh y_{<j}, x)$ and for MSP, $g_j^{\text{score}}(x) = \mathrm{Var}_h(v \mid y_{<j}, x)$
    \item \texttt{OneTimeEntropy}: $g_j^{\text{entropy}}(x, \wh y_{<j}) = H(\wh Y_j \mid \wh y_{<j}, x)$. This only applies to XSUM and TSP.
\end{itemize}

\section{Training Details and Hyperparameters} \label{sec:trainingdetails}

All the models were implemented using PyTorch and all the models were trained on an Nvidia A40 GPU. They were trained using Early stopping and 
\Cref{tab:hpsettings} contains all hyperparameter details.

\begin{table}[h]
    \centering
    \resizebox{\textwidth}{!}{%
    \begin{tabular}{|c|c|c|c|c|c|c|}
        \hline
        Task/Hyperparameters & TSP-OneTime & XSUM-Tokenlevel & XSUM-OneTime & MWP-Tokenlevel & MWP-OneTime\\
        \hline \hline
        Learning rate & 5e-4 & 1e-3 & 5e-4 & 5e-4 & 5e-4 \\
        Early Stopping (patience, delta) & 20, 1e-4  & 7, 1e-4 & 20, 1e-4 & 7, 1e-4 & 20, 1e-4 \\ 
        Architecture & MLP & LSTM & MLP & LSTM & MLP  \\
        Hidden layers & 1 & 3 & 1  & 2 & 1 \\
        Hidden units & 8 & 128 & 8 & 64 & 8 \\
        Dropout & 0.2 & 0.4 & 0.2  &0.4 & 0.2 \\
        Epochs & 200 & 100 & 200  &100 & 200\\
        Training Samples & 2400 & 800 & 2400 & 70080 &  70080\\
        Gradient Clipping Norm & 1.0 & 1.0 & 1.0  & 1.0 & 1.0\\
        Weight Decay & 0.005 & 0.001 & 0.005 & 0.001 & 0.005 \\
        \hline
    \end{tabular}
    }
    \caption{Hyperparameter settings}
    \label{tab:hpsettings}
\end{table}

\textbf{TSP}: Using the RL4CO library \cite{berto2023rl4co}, TSP graphs with $50$ nodes were generated by sampling the node coordinates from a standard normal distribution. These samples were fed into a PointerNet (predictor) trained with $10000$ samples. The node embeddings, graph context, and log scores from the PointerNet served as inputs for the One-Time rejector. This rejector only took a maximum of $10$ minutes to train.

\textbf{XSUM}: During training and inference, Tokenlevel rejector was only exposed to samples with summaries of less than $15$ words. The model was trained without teacher forcing. It took about $5$ hours to train. The token level rejector $r$ recieves the encoder hidden states from the predictor as hidden states and the decoder hidden states as input. This continues in an autoregressive fashion. We defer the reader to \Cref{sec:tokenwiseablation} for ablation studies on various training strategies for the tokenlevel rejector.

The OneTime rejector was trained on a dataset with summaries containing upto $20$ words and it took a couple of minutes to train after a dataset of all the varying mixtures of predictions was created (this took a couple of hours to make). This rejector takes the encoder hidden states, decoder hidden states, and log scores from the predictor as input. 

\textbf{MWP}: The token-level rejector took about $5$ hour to train. e token level rejector $r$ recieves the encoder hidden states from the predictor as hidden states and the decoder hidden states as input. This continues in an autoregressive fashion. The OneTime rejector was trained on a dataset containing all the varying mixtures of predictions  which took a couple of minutes to generate. This rejector takes the encoder hidden states, decoder hidden states, and monte carlo variance estimates from the predictor as input.

\section{More Cost-Loss Plots} \label{sec:morecostlossplot}

See \Cref{fig:TSPandMWP}

\begin{figure}
    \centering
    \includegraphics[width=\linewidth]{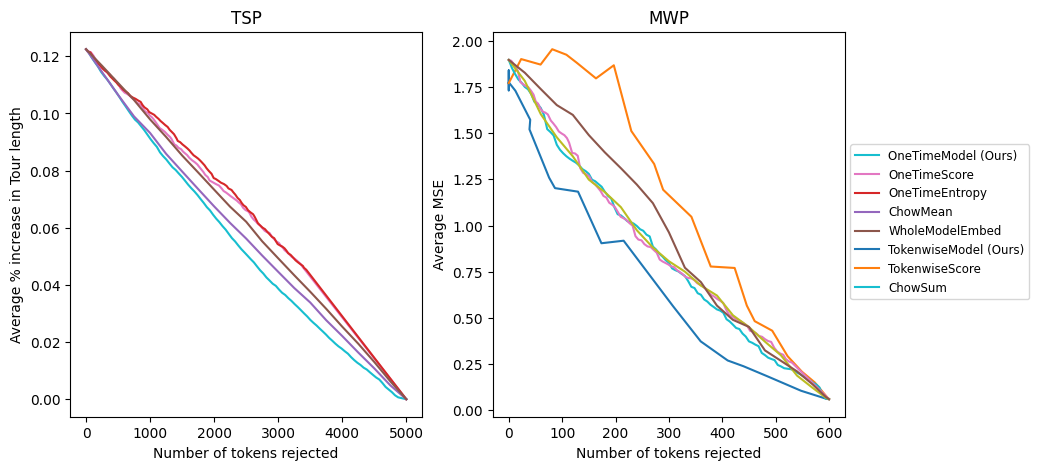}
    \caption{Cost-Loss plots for TSP (on the left) and MWP (on the right).}
    \label{fig:TSPandMWP}
\end{figure}

\section{Effects of $\alpha_j$} \label{sec:alphaj}

In experiments, we determine the cost of deferring a complete prediction and call it $\alpha_1$. 
Then, we set $\alpha_j = \frac{L - j + 1}{L}\alpha_1$, making the cost of deferring a token $\frac{\alpha_1}{L}$ or $\alpha_{L}$. We vary $\alpha_1$ and \Cref{fig:varyingcosts} shows how the costs affect cost-accuracy tradeoffs. For both experiments, extreme values for cost often result in poor cost-accuracy tradeoff. For TSP, the results are best when $\alpha_L = 0.01$ and for XSUM, it is either $\alpha_L = 0.01$ or $\alpha_L = 0.015$. 

As a general rule of thumb, we propose that $\alpha_1$ should be determined by the median difference between the expert accuracy and predictor accuracy on the train set.

\begin{figure}
    \centering
    \begin{subfigure}
        \centering
        \includegraphics[width=0.8\linewidth]{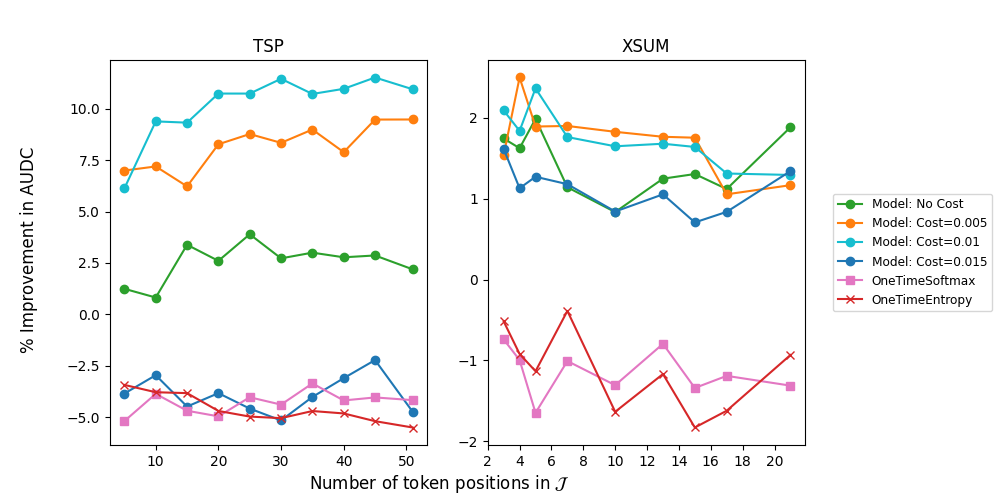}
    \end{subfigure}
    \begin{subfigure}
        \centering
        \includegraphics[width=0.8\linewidth]{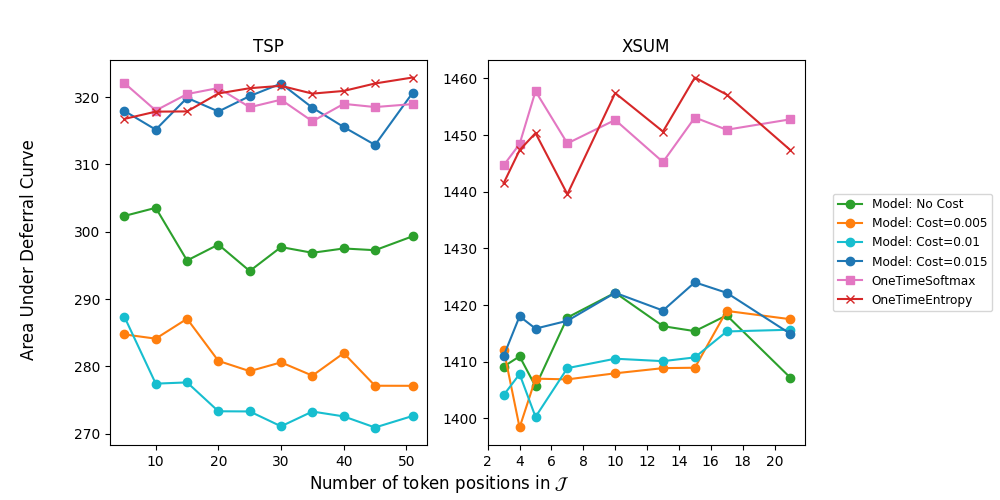}
    \end{subfigure}
    \caption{Plots of Size of $\calJ$ against AUDC and percentage improvement in AUDC relative to the random baseline for TSP (on the left) and XSUM (on the right) for models trained with different $\alpha_1$ value. The costs displayed in the legend refer to the deferral cost per token or $\alpha_L$}
    \label{fig:varyingcosts}
\end{figure}

\section{AUDC and Percent Improvement for Size of $\calJ$ experiment} \label{sec:numexperts}

In support of \Cref{fig:aucvsnumclass}, \Cref{tab:aucscoresXSUMNumexperts} and \Cref{tab:aucscoresTSPNumexperts} report the actual values of AUDC and percent improvement in AUDC over the random baseline for each size of $\calJ$ considered for both TSP and XSUM.
\begin{table}
	\begin{center}
    		\begin{tabular}{|c|c|c|c|c|c|c|}
			\toprule
			Size of $\calJ$ & \multicolumn{2}{|c|}{OneTimeModel (Ours)} & \multicolumn{2}{|c|}{OneTimeSoftmax} & \multicolumn{2}{|c|}{OneTimeEntropy} \\
			\midrule
            & AUDC & \% Improvement  & AUDC & \% Improvement & AUDC & \% Improvement\\
            \midrule
            5 & 287.37(7.67) & 6.13(1.68) & 322.08(18.24) & -5.19(4.99) & 316.76(19.90) & -3.43(5.30) \\
			10 & 277.44(10.59) & 9.38(2.54) & 318.00(12.62) & -3.86(2.95) & 317.84(22.77) & -3.78(6.42) \\
			15 & 277.62(9.15) & 9.32(1.85) & 320.45(6.89) & -4.68(1.32) & 317.87(12.57) & -3.83(3.22) \\
			20 & 273.34(11.13) & 10.74(2.34) & 321.35(19.06) & -4.96(5.43) & 320.54(22.24) & -4.70(6.68) \\
			25 & 273.31(12.20) & 10.73(3.11) & 318.49(12.65) & -4.03(3.18) & 321.32(13.95) & -4.97(4.10) \\
			30 & 271.10(9.77) & 11.45(2.16) & 319.61(10.82) & -4.39(1.99) & 321.67(14.16) & -5.05(3.11) \\
			35 & 273.30(6.76) & 10.72(1.69) & 316.40(15.66) & -3.36(4.55) & 320.51(20.33) & -4.70(6.20) \\
			40 & 272.57(10.08) & 10.97(2.53) & 318.99(14.07) & -4.18(3.49) & 320.91(15.41) & -4.82(4.16) \\
			45 & 270.92(10.67) & 11.52(2.41) & 318.52(14.10) & -4.04(3.71) & 322.02(14.99) & -5.19(4.31) \\
			51 & 272.64(9.11) & 10.95(2.06) & 318.96(14.78) & -4.17(3.66) & 322.91(14.05) & -5.50(4.39) \\
			\bottomrule
		\end{tabular}
	\end{center}
	\caption{AUDC scores and Percent improvement over the random baseline in TSP of various sizes of $\calJ$ with standard deviations in brackets}
	\label{tab:aucscoresTSPNumexperts}
\end{table}
\begin{table}
	\begin{center}
		\begin{tabular}{|c|c|c|c|c|c|c|}
			\toprule
			Size of $\calJ$ & \multicolumn{2}{|c|}{OneTimeModel (Ours)} & \multicolumn{2}{|c|}{OneTimeSoftmax} & \multicolumn{2}{|c|}{OneTimeEntropy} \\
			\midrule
            & AUDC & \% Improvement  & AUDC & \% Improvement & AUDC & \% Improvement\\
            \midrule
            3 & 1412.16(26.70) & 1.54(0.88) & 1444.68(16.16) & -0.74(0.77) & 1441.52(18.29) & -0.52(0.61) \\
			4 & 1398.39(25.51) & 2.50(0.80) & 1448.43(23.88) & -1.00(0.91) & 1447.35(24.59) & -0.92(1.02) \\
			5 & 1407.01(19.33) & 1.89(0.43) & 1457.76(23.95) & -1.65(1.11) & 1450.36(22.32) & -1.13(1.04) \\
			7 & 1406.88(19.64) & 1.90(0.79) & 1448.54(20.05) & -1.01(0.97) & 1439.58(16.68) & -0.39(1.20) \\
			10 & 1407.96(19.47) & 1.83(0.34) & 1452.68(15.51) & -1.31(1.44) & 1457.43(15.44) & -1.63(1.29) \\
			13 & 1408.87(23.47) & 1.76(0.71) & 1445.25(12.84) & -0.80(1.78) & 1450.64(14.74) & -1.17(1.72) \\
			15 & 1408.94(16.75) & 1.75(0.68) & 1453.11(6.66) & -1.34(1.46) & 1460.17(14.20) & -1.83(1.43) \\
			17 & 1418.97(16.00) & 1.05(0.59) & 1450.95(11.65) & -1.19(1.63) & 1457.15(10.29) & -1.62(1.52) \\
			21 & 1417.50(26.37) & 1.17(0.68) & 1452.79(15.86) & -1.31(1.30) & 1447.35(12.72) & -0.93(1.20) \\
			\bottomrule
		\end{tabular}
	\end{center}
	\caption{AUDC scores and Percent improvement over the random baseline in XSUM of various sizes of $\calJ$ with standard deviations in brackets}
	\label{tab:aucscoresXSUMNumexperts}
\end{table}

\section{Token-level Rejector Ablation} \label{sec:tokenwiseablation}

$\calL^{\text{Token}}$ bears resemblance to a weighted multi-label classification loss function. Specifically, the token-level rejector follows a classifier chain approach \cite{read2021classifier}. Classifier chains is a multi-label classification (MLC) approach that involves training $L$ binary classifiers in a pre-specified order with the $j^{\text{th}}$ binary classifier taking the feature vector and the previous $j - 1$ labels as input, forming a chain of predictors. At inference time, the test features are fed into the chain such that the previous label predictions are appended to the features for the next label classifier. Our token-level rejector $r_j$ implicitly makes these chained predictions by taking in $\wh y_{<j}$ which is informed by the previous $(j - 1)$ rejectors' rejection decision.

With this comparison established, model architectures and training techniques from classifier chain literature can be utilized. Recurrent classifier chains \cite{nam2017maximizing} have emerged as a popular choice for MLC as they efficiently maximize accuracy without requiring the large number of trainable parameters typically needed for $K$ separate binary classifiers. 

Since the inputs of recurrent classifier chains at training time are typically ground truth labels, training token-level rejectors would similarly require using the correct rejection decision, $\mathbbm{1}_{l\left(y, \wh y_{j}^h \right) > c_j(x, \wh y_{<j}, y)}$, to determine the $j^{\text{th}}$ token prediction for the next label rejector. Following this strategy, also called ``teacher forcing'' \cite{williams1989learning}, can often cause a distribution shift, as label predictions will be used at test time instead of the correct label, resulting in errors propagating through the chain. Scheduled sampling \cite{bengio2015scheduled} offers a middle ground by labeling the training data with the correct rejection decision with probability $p$, and otherwise using the predicted one. This probability $p$ often decays at a linear or exponential rate over epochs, gradually reducing the model's dependence on ground-truth rejection rules.

We use both Multi Layer Perceptrons (MLP) and Long Short-term memory (LSTM) models to perform token-level deferrals on XSUM data. All models are trained with teacher forcing, without teacher forcing, or with scheduled sampling.  The Scheduled sampler decayed the teacher forcing probability exponentially by $0.95$ with every epoch and the decay process stopped when the teacher forcing probability was $0.5$. The sampler has $5$ warmup epochs where the teacher forcing probability was $1.0$ before the decay started. The coin flip of whether or not to teacher force occurred on a token-level, as opposed to a batch-level. 
\Cref{fig:architectablation} shows the superiority of recurrent models like LSTMs over MLPs with teacher forcing adversely affecting their performance. The selected model is an LSTM trained without teacher forcing. We present the AUDC scores for each of the models in \Cref{tab:aucscoresarch}.

\begin{figure}[ht]
\vskip 0.2in
\begin{center}
\centerline{\includegraphics[width=0.8\columnwidth]{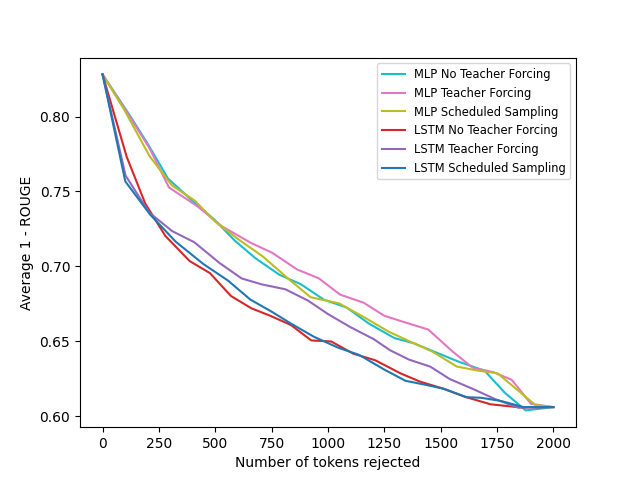}}
\caption{Cost-Loss plots for XSUM for various model architectures and training strategies.}
\label{fig:architectablation}
\end{center}
\vskip -0.2in
\end{figure}

\begin{table}
	\begin{center}
\begin{tabular}{|c|c|c|}
			\toprule
			Method & AUC Score & \% Improvement \\
			\midrule
			MLP No Teacher Forcing & 1375.48 (34.57) & 4.07 (1.01)\\
			MLP Teacher Forcing & 1387.49 (25.96) & 3.22 (1.25) \\
			MLP Scheduled Sampling & 1377.01 (31.92) & 3.96 (1.20) \\
			LSTM No Teacher Forcing & 1323.92 (25.34) & 7.66 (0.96) \\
			LSTM Teacher Forcing & 1340.43 (20.24) & 6.49 (1.24)\\
            LSTM Scheduled Sampling & 1324.83 (27.04) & 7.60 (0.98)\\
            TokenwiseSoftmax & 1354.78 (31.87) & 5.51 (0.94)\\
			TokenwiseEntropy & 1360.41 (31.50) & 5.12 (0.62)\\
            \bottomrule
		\end{tabular}
	\end{center}
	\caption{AUC Scores for various Token-level Rejector architecturs and training}
	\label{tab:aucscoresarch}
\end{table}

\end{document}